\numberwithin{equation}{section}
\renewcommand\subsection{\@startsection{subsection}{2}{\z@}%
    {10pt \@plus 3\p@ \@minus 2\p@}
    {8pt \@plus 3\p@ \@minus 2\p@}  
    {
    \normalfont\bfseries }}
\def\ps@pprintTitle{%
 \let\@oddhead\@empty
 \let\@evenhead\@empty
 \def\@oddfoot{}%
 \let\@evenfoot\@oddfoot}
\newtheorem{theorem}{Theorem}[section]
\newtheorem{lemma}[theorem]{Lemma}
\newtheorem{remark}[theorem]{Remark}
\newtheorem{corollary}[theorem]{Corollary}
\newproof{proof}{Proof}
\newtheorem{definition}[theorem]{Definition}
\newcommand{\play}{P}
\newcommand{\CB}{\mathcal{GCB}_n^{X_A, X_B}}
\newcommand{\CCB}{\textit{Const-}\mathcal{CB}_n^{X_A, X_B}}
\newcommand{\FA}{ F_{A^*_i}}
\newcommand{\FB}{ F_{B^*_i}}
\newcommand{\FAn}{ F_{A^n_i}}
\newcommand{\FBn}{ F_{B^n_i}}
\newcommand{\AW}{ A^W_{\gam ,i}}
\newcommand{\BW}{ B^W_{\gam ,i}}
\newcommand{\AS}{ A^S_{\gam ,i}}
\newcommand{\BS}{ B^S_{\gam ,i}}
\newcommand{\LdaA}{\lambda^*_A}
\newcommand{\LdaB}{\lambda^*_B}
\newcommand{\barep}{\bar{\varepsilon}}
\newcommand{\prob}{\mathbb{P}}
\newcommand{\IU}{{\rm IU}^{\gam }}
\newcommand{\wmin}{\underaccent{\bar}{w}}
\newcommand{\wmax}{\bar{w}}
\newcommand{\Gmin}{\underaccent{\bar}{\gamma}}
\newcommand{\Gmax}{\bar{\gamma}}
\newcommand{\Lmin}{\underaccent{\bar}{\lambda}}
\newcommand{\Lmax}{\bar{\lambda}}
\newcommand{\LB}{\mathcal{GLB}_n^{X_A, X_B}}
\newcommand{\CLB}{\textit{Const-}\mathcal{LB}_n^{X_A, X_B}}
\newcommand{\Ex}{\mathbb{E}}
\newcommand{\Sn}{\mathcal{S}_n^{\eqref{eq:Equagamma}}}
\newcommand{\Ostar}{\Omega_A(\gam )}
\newcommand{\Y}{\mathcal{Y}^{\zeta}}
\newcommand{\X}{\mathcal{X}^{\zeta}}
\newcommand{\Ymu}{\mathcal{Y}^{\mu^R}}
\newcommand{\Xmu}{\mathcal{X}^{\mu^R}}
\newcommand{\Ynu}{\mathcal{Y}^{\nu^R}}
\newcommand{\Xnu}{\mathcal{X}^{\nu^R}}
\newcommand{\gam}{\gamma^*}
\newcommand{\Del}{\Delta_{\gam }(\zeta, \varepsilon)}
\newcommand{\Delmu}{\Delta_{\gam }(\mu^R, \varepsilon)}
\newcommand{\Delnu}{\Delta_{\gam }(\nu^R, \varepsilon)}
\newcommand{\de}{{\rm d}}
\newcommand{\mO}{\mathcal{O}}
\newcommand{\e}{e}
\newcommand{\hC}{\hat{C}}
\newcommand{\hep}{\hat{\varepsilon}}
\newcommand{\tep}{\tilde{\varepsilon}}
\newcommand{\logep}{\max \left\{1, \ln(\varepsilon^{-1}) \right\}}
\newcommand{\logone}{\max \left\{1, \ln(\varepsilon_1^{-1}) \right\}}
\newcommand{\logtwo}{\max \left\{1, \ln(\varepsilon_2^{-1}) \right\}}
\newcommand{\logbar}{\max \left\{1, \ln(\barep^{-1}) \right\}}
\newcommand{\Lip}{\mathcal{L}_{\zeta}}
\newlength\tmp@\newlength\t@mp
\newcommand{\comp}[3]
  {\mathop{ \settowidth\tmp@{$\displaystyle\mathop{#1}^{#3}_{#2}$}
  \hbox to \tmp@{\hss \settowidth\t@mp{$\displaystyle #1$}\setlength\t@mp{.45\t@mp}
  $\displaystyle\mathop{#1}^{\hspace\t@mp #3}_{\hspace{-\t@mp}#2}$
  \hss} }}
\begin{document}

\begin{frontmatter}
\title{Approximate Equilibria in Generalized Colonel Blotto and Generalized Lottery Blotto Games}
\author[uga]{Dong Quan Vu\corref{cor1}}
\ead{dong-quan.vu@inria.fr}

\author[uga,mpi]{Patrick Loiseau}
\ead{patrick.loiseau@inria.fr}

\author[saf]{Alonso Silva}
\ead{alonso.silva-allende@safrangroup.com}


\address[uga]{Université Grenoble Alpes, Inria, CNRS, Grenoble INP, LIG, 38000 Grenoble, France}
\address[mpi]{Max Planck Institute for Software Systems (MPI-SWS), Campus E1 5, D-66123, Saarbr\"ucken, Germany}
\address[saf]{Safran Tech,  Signal and Information Technologies, 78117 Châteaufort, France}
%
%

\cortext[cor1]{Corresponding author}

\begin{abstract} 
	In the Colonel Blotto game, two players with a fixed budget simultaneously allocate their resources across $n$ battlefields to maximize the aggregate value gained from the battlefields where they have the higher allocation. Despite its long-standing history and important applications, the Colonel Blotto game still lacks a complete Nash equilibrium characterization in its most general form where players are asymmetric and battlefields’ values are heterogeneous across battlefields and different between the two players---this is called the \emph{Generalized Colonel Blotto game}. In this work, we propose a simply-constructed class of strategies---the independently uniform strategies---and we prove that they are approximate equilibria of the Generalized Colonel Blotto game; moreover, we characterize the approximation error according to the game's parameters. We also consider an extension called the \emph{Generalized Lottery Blotto game}, with stochastic winner-determination rules allowing more flexibility in modeling practical contests. We prove that the proposed strategies are also approximate equilibria of the Generalized Lottery Blotto~game.
\end{abstract}

\begin{keyword}
resource allocation games  \sep epsilon-equilibrium \sep Colonel Blotto game \sep Lottery Blotto game \sep  contest success function
\end{keyword}

\end{frontmatter}


\section{Introduction}
\label{sec:Intro}

The \emph{Colonel Blotto game} is one of the most well-known resource allocation games. Its description is very simple: two players, each having a fixed amount of resources (called budget), compete over a finite number $n$ of battlefields. Each battlefield is evaluated by the players with a certain value. Players simultaneously allocate their resources toward the battlefields and each player's payoff is her aggregate gains from all the battlefields. In each battlefield, the winner, who is simply the one that has the higher allocation, gains the corresponding value and the loser gains zero---this is called the winner-takes-all rule---; in battlefields with tie allocations, the value is shared between the players with a predetermined tie-breaking rule (e.g., sharing equally between them). 

Throughout its long-standing history since its first introduction by \citet{borel1921}, the Colonel Blotto game has attracted interest from different research communities for its potential to elegantly model a large range of practical situations. As a canonical example, consider the advertising competition between two firms, in a duopoly setting, that need to allocate their advertising budgets to several common markets (corresponding to battlefields)---(see e.g., \citet{fu2019multimarket} for several real-world examples). As a first approximation we can assume that, in each market, the firm having a higher advertising expenditure eventually becomes the dominant firm in that market (i.e., it holds the total share of the market). Both firms' objective is to make allocation decisions to maximize their aggregate shares from all markets. The Colonel Blotto game can also be used to model problems in military logistics, see e.g., \citet{gross1950,grosswagner}; in politics (where political parties distribute their budgets to compete over voters), see e.g., \citet{kovenock2012,laslier2002distributive,myerson1993incentives,roberson2006}; in cybersecurity (where attack/defense resources are distributed over sensitive targets), see e.g., \citet{chia2012,schwartz2014}; in online advertising (where marketing campaigns allocate ads broadcasting time over web users), see e.g., \citet{masucci2014,masucci2015}; in telecommunication (where network service providers distribute and lease their spectrum to users), see e.g.,~\citet{hajimirsaadeghi2017dynamic}. In many of these applications, it is often the case that the number of battlefields under consideration is very large.\footnote{For example, in advertising competitions between pharmacy companies, they deploy their medical representatives' effort to present/advertise new products to persuade (a large number of) doctors to prescribe their drugs (see \citet{fu2019multimarket} for more details). There, the doctors correspond to the markets over which the firms compete.}

The main focus of the literature on the Colonel Blotto game is its Nash equilibria (henceforth, simply referred to as equilibria). Completely characterizing and computing an equilibrium of the Colonel Blotto game, however, is a notoriously difficult problem. A standard approach used in the literature is to first find candidate equilibrium marginal distributions corresponding to players' allocations toward battlefields---they are called the optimal univariate distributions of the game---; and then construct an $n$-variate joint distribution (of these univariate distributions) whose realizations satisfy the budget constraints. Constructing this $n$-variate distribution, with this particular coupling constraint, is the main challenge in studying equilibria of the Colonel Blotto game. So far, there are results available only under restricting assumptions on the game's parameters such as assuming players symmetry or battlefields homogeneity, and assuming that the battlefields evaluations are identical for both players---see our detailed discussion on the related work below. In several applications of the Colonel Blotto game (e.g., in the example of advertising competitions), such assumptions are not satisfied in practice; studying more general variants of the Colonel Blotto game is therefore of prominent importance.

In this paper, we consider the most general version of the Colonel Blotto game; where the evaluations of the battlefields' values can be heterogeneous across battlefields and different between the two players, and the players' budgets can be asymmetric---we refer to it as the \emph{Generalized Colonel Blotto game} (hereinafter, GCB game). The GCB game was first formulated and studied by \citet{kovenock2020generalizations}.\footnote{Note that we consider a tie-breaking rule that is more general than that of \citet{kovenock2020generalizations} (see Section~\ref{sec:BlottoFormulation}).} Yet, \emph{the characterization of equilibria in the \emph{GCB} game remains an open question}---even the existence of an equilibrium has not been proved or disproved in general cases; and the key question remains open: how to play strategically in the GCB game to obtain good guarantees on payoffs? In this work, we also study the Nash equilibrium but we take a different angle: instead of looking for an exact equilibrium, we focus on approximate equilibria, i.e., strategies such that the extra payoff a player can gain by unilaterally deviating from them is bounded by a small term (relative to the scale of the players' total payoffs). \emph{Our first contribution is to identify a class of approximate equilibria of the \emph{GCB} game}, called the \emph{independently uniform strategies}, henceforth denoted as IU strategies.\footnote{We explain the name IU in Section~\ref{sec:IU_Strategy} and give the formal definition of approximate equilibria in Section~\ref{sec:ApproximateBlotto}.} These strategies ensure that the budget constraints are satisfied. Moreover, we construct the IU strategies such that, although their corresponding marginals do not form a set of optimal univariate distributions of the game, they approximate well-known ones. Based on this, we characterize the approximation error of this solution according to the games' parameters and show that it is negligible when the number of battlefields is sufficiently large (it quickly decreases as the number of battlefields increases). The IU strategies are also simple and efficiently computable even in large-scale~problems; which provides desired scalability in~practice.

Going beyond the Colonel Blotto game, we then consider the \emph{Lottery Blotto game}, which relaxes the winner-takes-all assumption. In the Lottery Blotto game, each player only gains a fraction of their value in each battlefield; alternatively, we can interpret it as each player winning a battlefields' value with a certain probability depending on the players' allocations on that battlefield (which can be non-zero even for the player with smaller allocation). For instance, in some advertising competitions, the sales in each market may be shared between the two firms based on their advertising expenditures, but with no firm having total dominance; these situations are discussed in \citet{duffy2015stochastic,friedman1958,kovenock2019full} as motivating examples for different variants of the Lottery Blotto game. The Lottery Blotto game model may also prove useful in other areas such as political contests for voters' attention, research and development activities, or radio-wave transmission with noises. 

In this paper, we specifically consider the Lottery Blotto game that results from replacing the winner-determination-rule in the GCB game by a (generic) contest success function---we refer to it as the \emph{Generalized Lottery Blotto game} (henceforth, GLB game). Contest success functions, studied profoundly in the rent-seeking literature (see e.g., \citet{corchon2007theory,skaperdas1996}), are functions that take the players' allocations as inputs and output the probability of winning a battlefield. The definition of a contest success functions that we adopt (see Section~\ref{sec:LotteryFormulation}) includes the winner-takes-all rule as a special case, so that the GCB game is a particular case of the GLB game. Similar to the GCB game, the equilibrium characterization is an open question for the GLB game in its general setting, with exceptions limited to a few special cases---see our detailed discussion of the related work below. \emph{Our second main contribution is to prove that the {\rm IU} strategy is also an approximate equilibrium of the {\rm GLB} game} with an approximation error that decreases quickly as the number of battlefields increases and the corresponding contest success functions converge pointwise to that of the GCB game. As an illustration, we consider the family of GLB games with ratio-form contest success functions---an important class that is often studied in the literature. We analyze the IU strategies with two of the most well-known cases of ratio-form contest success functions: the power form and the logit form; in these cases, we obtain more precise results on the convergence of the approximation~error.

\subsection{Related Work}

As stated above, the GCB game was introduced by \citet{kovenock2020generalizations}, who provide a set of distributions that are optimal univariate distributions of the GCB game. They then indicate a sufficient condition\footnote{The set of battlefields are partitioned such that two battlefields are in the same partition if they have the same (normalized) values; the sufficient condition on the attainability of equilibria requires a sufficient number of battlefields in each partition.} for these optimal univariate distributions to constitute an equilibrium that only covers a restricted range of games; and they also show a necessary condition without which there is no equilibrium satisfying such a set of~univariate distributions. On the other hand, most works in the literature focus on the more restricted \emph{constant-sum Colonel Blotto game}, where both players assign the same value to each battlefield.\footnote{It is trivial to see that in this case, the summation of players' payoffs always equals the summation of battlefields' values. We sometimes refer to the GCB game (without further assumptions) as the non-constant-sum Colonel Blotto game to distinguish it from the constant-sum version.} Even in this simpler version, complete equilibrium characterization results are available only under restrictive assumptions. When players have symmetric budgets, equilibria are constructed by \citet{borel1938} in the constant-sum Colonel Blotto game involving three battlefields and by \citet{gross1950,grosswagner} in the constant-sum Colonel Blotto game containing any number of battlefields (see also \citet{laslier2002,laslier2002distributive,thomas2017} for a modern presentation of this solution). For the constant-sum Colonel Blotto game with asymmetric budgets, equilibria characterization remains an open question in general; the exceptions are the following restricted cases: the games with only two battlefields (\citet{macdonell2015}), the games with any number of battlefields but homogeneous values (\citet{roberson2006}), and the games where there exists a sufficient number of battlefields of each possible value (\citet{schwartz2014}). Note that our results for the GCB game can be trivially adapted to the constant-sum version, i.e., the IU strategy is also an approximate equilibrium in this version. Moreover, for the constant-sum Colonel Blotto game, we show the additional result that the IU strategy is also an approximate max-min strategy (with the same approximation error).

To the best of our knowledge, only very few works in the Colonel Blotto game literature mention approximate results, and in very different settings compared to the GCB game we consider. \citet{weinstein2005} briefly discusses approximate equilibria (with a fixed approximation error) of the game with only 3 battlefields in a variant of the constant-sum Colonel Blotto game with the majority objective.\footnote{In the game with majority objective, a player wins the whole game and gains a positive payoff only if the aggregate values (or the number) of battlefields won by her exceed a given threshold, e.g. 50\% (see also \citet{kvasov2007contests,kovenock2012,laslier2005}).} Also for the majority objective, \citet{Behnezhad19a} propose a polynomial time approximation scheme for $(u,p)$-maximin strategies.\footnote{Strategies guaranteeing that a player gains a payoff at least $u$ with a probability at least $p$.} A strategy construction similar to the IU strategies can be found in \citet{vu18a} for the (constant-sum) discrete Colonel Blotto game (i.e., where the budgets and every allocation are required to be integers) with asymmetric budgets and heterogeneous battlefields. Due to the discrete condition, their analysis has essential differences to our work.\footnote{Particularly, their asymptotic results involve a double limits of the number of battlefields and the ratio of players' budgets; moreover, the convergence of the players' payoffs in their work does not have the difficulties of continuous allocation encountered in our work.}

A few works have considered other extensions and related versions of the Colonel Blotto game, though with a significantly different flavor than ours. In particular, \citet{boix-adsera2020} study a Colonel Blotto game between more than two players with symmetric budgets; \citet{kovenock2020generalizations,myerson1993incentives} completely find equilibria of the General Lotto game---a relaxed version of the Colonel Blotto game where budget constraints are only required to hold in expectation; \citet{Ahmadinejad16a,behnezhad2018battlefields,Behnezhad19a,Behnezhad17a,hart2008,hortala2012,vu18a} focus on the discrete Colonel Blotto game; \citet{powell2009,rinott2012} consider sequential Colonel Blotto games and \citet{adamo2009blotto,kovenock2011blotto,paarporn2019characterizing} study Colonel Blotto games with incomplete information.  Note that the GCB game can also be considered as an instance of larger classes such as multiple-battlefields conflicts (see e.g. \citet{kovenock2012conflicts}) and multi-item contests (see e.g., \citet{fu2019contests,kvasov2007contests,robson2005}) where battlefields' outcomes and players' payoffs are defined by more general~functions.

The idea of a game formulation similar to the GLB game (with generally defined contest success functions) appeared previously in \citet{kovenock2012conflicts}, where it was not, however, explicitly defined. More importantly, \citet{kovenock2012conflicts} only study with details a particular case where players' gains in each battlefield follow the Tullock contest success function (termed after \citet{tullock1980} and also called the lottery contest success function).\footnote{That is the GLB game where two players, called A and B, commonly evaluate each battlefield $i$ with a value $w_i$; if players allocates $x^A_i, x^B_i$ to battlefield $i$ then Player A gains $x^A_i w_i/ (x^A_i + x^B_i)$ and Player B gains $x^B_i w_i/ (x^A_i + x^B_i)$ from this battlefield.} This Lottery Blotto game with the Tullock function is also the instance that attracted the most attention in the literature: \citet{friedman1958} investigates the pure equilibrium of its constant-sum version, \citet{robson2005} generalizes these results in the case where the Tullock function is modified with a multiplicative constant representing advantages of a certain player, \citet{kovenock2019full} characterize a best-response mechanism of the non-constant-sum version, \citet{duffy2015,kim2019existence} prove the existence and partially characterize the equilibria in the majority version and \citet{kim2018lottery} provide similar results in a version with an infinite number of players. Note that the terms ``lottery Blotto game" or ``lottery Colonel Blotto game" are sometimes used in the literature (e.g., by \citet{kovenock2019full,kovenock2012conflicts}) to refer to this particular game variant that is essentially simpler than the GLB that we study. Coincidentally, the term ``Generalized Lottery Blotto" is also used by \citet{osorio2013} to refer to a game version with a slight generalization of the Tullock function (also studied by \citet{shubik1981systems}); however, only numerically computed approximate-results of the equilibrium are proposed and no tractable close-form solution is provided in general cases where battlefields' values are asymmetric across players. Note that Lottery Blotto games with Tullock function and its generalizations are included in the class of Lottery Blotto games with ratio-form contest success functions that we study in this work. Finally, a sequential Blotto-type game with generic contest success functions also appears in \citet{klumpp2019dynamics}, but it concerns the majority rule; moreover, results are only obtained under a sufficiently-concave assumption on the contest success functions.

\subsection{Roadmap and Notation}

The remainder of this paper is organized as follows. Section~\ref{sec:GamesFormulation} gives the formal definitions of the GCB and GLB game. Although the GLB game model is essentially more general, we first focus on the GCB game as it is a more classical game. Section~\ref{sec:preliminaries} provides preliminary results needed in the rest of the analysis. In Section~\ref{sec:ApproximateBlotto}, we define the IU strategy and show that any IU strategy is an approximate equilibrium of the GCB game. In Section~\ref{sec:LotteryApproximation}, we study how the IU strategy is also an approximate equilibrium of the GLB game. We  conclude in Section~\ref{Conclu}. Finally, detailed proofs of all lemmas and theorems are given in Appendix. 

Throughout the paper, we use bold symbols (e.g., $\boldsymbol{x}$) to denote vectors and subscript indices to denote its elements (e.g.,
$\boldsymbol{x} = (x_1, x_2, \ldots , x_n)$). The notation $[n]$ denotes the set $\{1,2, \ldots , n\}$, for any $n \in \mathbb{N} \backslash \{0\}$. We often use the letter $\play$ to denote a player and use $-\play$ to indicate her opponent in the games. $R^n_{\ge 0}$ denotes the set of all $n$-tuples whose elements are non-negative ($R_{\ge 0}:=R^1_{\ge 0}$). We denote the Euler's number by~$\e$. For any random variable $X$, we use $F_{X}$ and $\Ex X$ to denote its corresponding cumulative density function (abbreviated by CDF) and its expectation respectively; for an event $E$, we denote the probability that it happens by $\prob(E)$. We use the asymptotic notation $\mO$ with its standard definition and $\tilde{\mO}$ as a variant of $\mO$ where logarithmic terms are ignored (see \ref{sec:appen_preliminary} for formal definitions). We sum up the notations used in this work in Table~\ref{table:notation} (\ref{sec:appen_preliminary}).

\section{Games Formulation}
\label{sec:GamesFormulation}
In this section, we define the two games that are our main focus: in Section \ref{sec:BlottoFormulation}, we introduce the Generalized Colonel Blotto game (henceforth, GCB game) and its restricted version---the constant-sum Colonel Blotto game; in Section  \ref{sec:LotteryFormulation}, we present the Generalized Lottery Blotto game (henceforth, GLB game), as an extension of the GCB~game.

\subsection{The Generalized Colonel Blotto Game}
\label{sec:BlottoFormulation}
We consider the following one-shot, complete information game between two players A and B.  Each player has a fixed amount of resources (called the \emph{budgets}), denoted $X_A$ and $X_B$, respectively. Without loss of generality, we assume that \mbox{$0< X_A \le X_B$}. Players simultaneously allocate their resources across $n$ \emph{battlefields} \mbox{($n \ge 3$)}. Each battlefield $i \in [n]$ is embedded with two parameters $w^A_i, w^B_i > 0$, corresponding to the \emph{values} at which Player A and Player B respectively assess this battlefield. 
A \emph{pure strategy} of player ${\play} \in \left\{ A, B \right\}$ is a vector $\boldsymbol{x}^{\play} = {\left( {x_i^\play} \right)_{i \in [n]}} \in \mathbb{R}_ {\ge 0} ^n$ that satisfies the budget constraint $\sum\nolimits_{i = 1}^n {x_i^{\play} \le {X_{\play}}}$. In each battlefield $i$, when player ${\play}$ allocates strictly more than her opponent, she gains completely her embedded values $w^{\play}_i$ while the opponent gains $0$. In case of a tie, i.e., if~$x^A_i = x^B_i$, then Player A receives $\alpha w^A_{i}$ and Player B receives $(1-\alpha) w^B_{i}$, where \mbox{$\alpha \in \left[ 0, 1 \right]$} is a fixed parameter. Each player's payoff is the summation of values she gains from all battlefields; formally, for any pure strategy profile $(\boldsymbol{x}^A, \boldsymbol{x}^B)$, the payoffs of players A and B are \mbox{$\Pi_A(\boldsymbol{x}^A, \boldsymbol{x}^B) = \sum \nolimits _{i=1}^n {w^A_i\cdot \beta_A \left( {x_i^A,x_i^B} \right)}$} and \mbox{$\Pi_B(\boldsymbol{x}^A, \boldsymbol{x}^B) = \sum \nolimits _{i=1}^n {w^B_i \cdot \beta_B \left( {x_i^A,x_i^B} \right)}$} respectively; here, $\beta_A$ and $\beta_B$ (henceforth, we called them the Blotto functions) are functions defined as follows:
\begin{equation}
    \beta_A\left( {x,y} \right) = \left\{ \begin{array}{l}
    1 \text{ , if } x >y\\
    \alpha \text{ , if } x =y \\
    0 \text{ , if } x < y
    \end{array} \right. \quad
    \textrm{ and } \quad
    \beta_B\left( {x,y} \right) = \left\{ \begin{array}{l}
    1 \text{ , if } y >  x\\
    1-\alpha \text{ , if } y = x \\
    0 \text{ , if } y < x
    \end{array} \right., \textrm{ for all } x, y \in \mathbb{R}_ {\ge 0}. \label{eq:betafunction}
\end{equation}
\begin{definition}
\label{def:BlottoGame}
    \textbf{A Generalized Colonel Blotto game} with $n$ battlefields and budgets $X_A, X_B$, denoted by ${\CB}$, is the game defined above; in particular, the strategy set of player $\play \in \{A,B\}$ is \mbox{$\{\boldsymbol{x}^\play \in \mathbb{R}^n_{\ge 0}: \sum \nolimits_{i=1}^n{x^{\play}_i \le X_{\play} } \}$} and her payoff is $\Pi_{\play} (\boldsymbol{x}^A, \boldsymbol{x}^B)$ when players A and B play the pure strategies $\boldsymbol{x}^A$ and $\boldsymbol{x}^B$ respectively. 
\end{definition}
To lighten the notation, we only include the subscript $n$ (the number of battlefields) and $X_A$, $X_B$ (the values of players' budgets) in the notation ${\CB}$ and omit the other parameters; in particular, the values $\alpha$ and $w^A_i, w^B_i$ for $i \in [n]$. In this game, a \emph{mixed strategy} is a joint distribution on the allocations of all battlefields, such that any drawn pure strategy of a player is an $n$-tuple that satisfies her budget constraint. We reuse the notations $\Pi_A\left(s^A, s^B \right)$ and $\Pi_B\left(s^A, s^B \right)$ to denote the payoffs of players A and B when they play the mixed strategies $s^A$ and~$s^B$, respectively.
Note that the definition of $\CB$ above allows asymmetry in players' budgets and heterogeneity in battlefields values; moreover, it allows battlefield values to differ between the two players. Furthermore, the defined payoff functions can be understood as if we randomly break the tie (if it happens) such that Player A wins battlefield $i$ with probability $\alpha$ while Player B wins it with probability $(1-\alpha)$. This includes all the classical tie-breaking rules considered in the literature; for instance, the rule of giving the whole value to Player B used by \citet{roberson2006, schwartz2014} corresponds to $\alpha=0$; the 50-50 rule used by \citet{Ahmadinejad16a,Behnezhad17a,kovenock2020generalizations} corresponds to~\mbox{$\alpha=1/2$}.

In this paper, we also often work with the \emph{normalized values} of the battlefields defined as \mbox{$v^A_i:= {w^A_i}/{W_A}$} and \mbox{$v^B_i:= {w^B_i}/{W_B}$}, where \mbox{$W_A:=\sum \nolimits_{j=1}^n w^A_j$} and \mbox{$W_B:=\sum \nolimits_{j=1}^n w^B_j$} for $i \in [n]$. We trivially observe that $v^{\play}_i \in \left[0, 1\right]$ for all $i$ and that $\sum \nolimits_{j=1}^n {v^{\play}_j} = 1$. Most of our analysis relies on an additional assumption that the battlefields' values are bounded away from zero and infinity; particularly, we work with the following condition---namely, Assumption $(A0)$---: 
\begin{equation*}
     \exists \wmin, \wmax >0:  \wmin \leq w^\play_i  \leq \wmax, \forall i \in [n], \forall \play \in \{A, B\}. \tag{Assumption $(A0)$} \label{eq:A0}
\end{equation*}
\ref{eq:A0} is a fairly mild assumption that is satisfied in most of (if not all) practical applications. As a direct consequence, the normalized values satisfy
\begin{equation}
    \frac{\wmin}{n \wmax} \le v^{\play}_i \le \frac{\wmax}{n \wmin}, \quad \forall i \in [n], \forall \play \in \{A, B\}. \label{eq:bound_v^p_i}
\end{equation}

Finally, we note that most works in the literature (the only exception, to our knowledge, being the work of \citet{kovenock2020generalizations}) focus only on the \emph{constant-sum} Colonel Blotto game where players have the same evaluations on battlefields' values. The game $\CB$ given in Definition~\ref{def:BlottoGame} is more general; hence all our results for $\CB$ can be straightforwardly applied to this constant-sum version as well. However, for the purpose of comparing with the literature and because we can show stronger results in this special case, it is useful to also formally define the constant-sum game variant as follows.
\begin{definition}
\label{def:constantsumGame}
   A \textbf{constant-sum Colonel Blotto game}, denoted by $\CCB$, is a game that has the same formulation as the game $\CB$ but with the additional condition that \mbox{$w^A_i= w^B_i, \forall i \in [n]$}.
\end{definition}
As a trivial corollary of this additional condition, in $\CCB$, players also have common normalized valuation on battlefields, i.e., \mbox{$v^A_i \!= \! v^B_i$} for all $i \in [n]$ and the players' maximum payoffs are equal, i.e., \mbox{$W_A=W_B$}.

\subsection{Contest Success Functions and the Generalized Lottery Blotto Game}
\label{sec:LotteryFormulation}

In this section, we present the Generalized Lottery Blotto game (GLB game) that extends the model of the GCB game. This new game is based on the notion of contest success functions (henceforth, CSFs), that we introduce below before defining the game model.

Contest success functions are functions that quantify the winning probability in \emph{contests} (also called \emph{rent-seeking} competitions) where several players compete for a single prize by exerting resources/efforts. CSFs can be defined for any number of players (see e.g., a general definition by~\citet{skaperdas1996}), but in this work, we focus only on the case of two players.
\begin{definition}
    \label{def:CSF_general}
    \mbox{$\mathcal{\zeta}_A: \mathbb{R}^2_{\ge 0} \to \mathbb{R}$} and \mbox{$\mathcal{\zeta}_B: \mathbb{R}^2_{\ge 0} \to \mathbb{R}$} is a pair of contest success functions (\textbf{CSF}s) if and only if the following two conditions are satisfied:
    \begin{itemize}
        \item[$(C1)$] $\zeta_A(x,y),\zeta_B(x,y) \ge 0$ and $\zeta_A(x,y) + \zeta_B(x,y)= 1$, $\forall x,y\ge 0$. 
        \item[$(C2)$] $\zeta_{A}(x, y)$ (resp. $\zeta_{B}(x, y)$) is non-decreasing in $x$ (resp. in $y$) and non-increasing in $y$ (resp. in $x$).
    \end{itemize}
\end{definition}

Intuitively, the function $\zeta_A$ (resp. $\zeta_B$) maps any pair of players' invested resources to the probability that Player A (resp. Player B) wins the prize. Condition~$(C1)$ indicates that the outputs of any pair of the CSFs always satisfy the condition of a probability distribution. On the other hand, Condition~$(C2)$ states that a player's winning probability increases (or at least stays the same) when she increases her effort and decreases (or at least stays the same) when her opponent increases her effort. We note that Definition~\ref{def:CSF_general} allows a more general definitions of the CSFs (in two-player cases) compared to the definition given by \citet{clark1998contest,hirshleifer1989conflict,skaperdas1996} that contains other assumptions.\footnote{For example, \citet{skaperdas1996} defines $\zeta_A$, $\zeta_B$ with an axiom of anonymity; they also require that any player who puts a strictly positive amount of resources has a strictly positive probability of winning the prize; \citet{clark1998contest} considers the CSFs additionally satisfying the Choice Axiom. These are technical conditions needed for proving their results and we omit them here lest they unnecessarily limit our scope of study.} While many of the CSFs considered in the literature are continuous functions, we do not include continuity in Definition~\ref{def:CSF_general} to keep the generality. Importantly, the Blotto functions $\beta_A, \beta_B$ of the game $\CB$ (i.e., the winner-takes-all rule) satisfy Conditions $(C1)$ and $(C2)$, hence $\beta_A, \beta_B$ are CSFs. Besides these functions, some examples of other CSFs considered in the literature~are: 
\begin{enumerate}[(a)]
    \item $\zeta_A(x,y) = x/(x+y)$ and $\zeta_B(x,y) = y/(x+y)$, proposed by \citet{tullock1980};
    \item $\zeta_A(x,y) = \max\left\{ \min \left\{\frac{1}{2} \! +\! C(x\!-\!y),1 \right\},0 \right\} $ and $\zeta_B(x,y) = 1- \zeta_A(x,y)$, proposed by~\citet{che2000difference}, where $C>0$ is a fixed parameter;
    \item $\zeta_A(x,y) = \frac{1}{2} - \frac{y-x}{2y}$ if $x \le y$ and $\zeta_A(x,y) = \frac{1}{2} + \frac{x-y}{2x}$ if $x \ge y$; and $\zeta_B(x,y)=1-\zeta_A(x,y)$, proposed by \citet{alcalde2007tullock}.
\end{enumerate}
%



Building on the notion of CSFs and the GCB game, we now define a new game model based on the following~idea: in a game $\CB$, we view each battlefield as a contest between players where the prize is the battlefield's value and players' effort correspond to their allocations; by doing this, each pair of CSFs defines an instance of a game where the probability of winning a battlefield follows them accordingly. 
\begin{definition}
\label{def:LotteryGame}
    Let $\zeta = (\zeta_A,\zeta_B)$ be a pair of CSFs, a \textbf{Generalized Lottery Blotto game} with $n$ battlefields and budgets $X_A, X_B$, denoted $\LB(\zeta)$, is the game with the same players A and~B and the same strategy sets as in $\CB$; but where payoffs are given, for any pure strategy profile $(\boldsymbol{x}^A,\boldsymbol{x}^B)$,~by
    \begin{equation*}
        \Pi_A^{\zeta}(\boldsymbol{x}^A, \boldsymbol{x}^B) = \sum \nolimits _{i=1}^n {w^A_i\cdot \zeta_A \left( {x_i^A,x_i^B} \right)} \qquad \textrm{and } \qquad \Pi_B^{\zeta}(\boldsymbol{x}^A, \boldsymbol{x}^B) = \sum \nolimits _{i=1}^n {w^B_i \cdot \zeta_B \left( {x_i^A,x_i^B} \right)}. 
    \end{equation*}
\end{definition}

The GLB game model is more flexible than that of the GCB game, as it allows choosing the CSFs that define the players' payoffs for each specific practical situation. Throughout the paper, to refer to a GCB game $\CB$ that has the same parameters $n,X_A,X_B, w^A_i$, $w^B_i, \forall i \in [n]$ as a GLB game $\LB$, we call $\CB$ the \emph{corresponding game} of $\LB$ and vice versa. Intuitively, the players' payoffs in the $\LB$ game can be seen as the expected payoffs in the corresponding GCB game with respect to the following random process determining the winner in any battlefield $i$: Player A wins with probability $\zeta_A(x^A_i,x^B_i)$ and Player B wins with probability $\zeta_B(x^A_i,x^B_i)$ if they allocate $x^A_i$ and $x^B_i$ respectively. Similar to the game $\CB$, players' payoffs in the $\LB$ game are also monotonic with respect to the allocations in a~battlefield (due to Condition~$(C2)$). Note that, to derive our results for $\LB$, we will also use \ref{eq:A0} introduced above.

Besides the GLB game with generally defined CSFs, we additionally consider the games with CSFs that belong to a special class called the \emph{ratio-form} CSFs. These are the CSFs that are studied the most profoundly in the literature. We will use the games with these ratio-form CSFs to illustrate the results obtained in the GLB game. 
\begin{definition}
    \label{def:CSF_ratioform}
    Functions \mbox{$\mathcal{\zeta}_A, \mathcal{\zeta}_B: \mathbb{R}^2_{\ge 0} \to \mathbb{R}_{\ge 0}$} are called \textbf{ratio-form CSFs} if they have the~form:
    \[\zeta_A(x,y) = \frac{\eta(x)}{\eta(x)+ \kappa(y)} \quad \textrm{and} \quad \zeta_B(x,y) = \frac{\kappa(y)}{\eta(x)+ \kappa(y)}, \]
    where $\eta, \kappa: \mathbb{R}_{\ge 0} \to \mathbb{R}$ are non-negative functions such that $\zeta_A$ and $\zeta_B$ satisfy Conditions $(C1)$ and~$(C2)$ of Definition~\ref{def:CSF_general}.
\end{definition}
Two classical ratio-form CSFs in the literature (see e.g., \citet{corchon2010foundations,hillman1989}) are the power form where \mbox{$\eta(z)=\kappa(z)=z^R, \forall z \ge 0$} and the logit form where \mbox{$\eta(z)= \kappa(z) = e^{Rz},\forall z \ge 0$}, where $R>0$ is a parameter chosen a priori. These functions yield the sharing 50-50 tie-breaking rule, i.e., $\zeta_A(x,y) = \zeta_B(x,y) =1/2$ if $x=y$. We define in Table~\ref{table:CSF} the generalized versions of these ratio-form CSFs---namely, $\mu^R$ and $\nu^R$---using the parameter $\alpha\in (0,1)$ that leads to the general tie-breaking rule as in the GCB game $\CB$.\footnote{When $\alpha=1/2$, the CSFs $\mu^R$ and $\nu^R$ match the classical power form and logit form CSFs. Note that we exclude the cases where $\alpha =0$ or $\alpha =1$ since these are the trivial cases: in the corresponding GLB game, a player, say $\play \in \{A,B\}$, always has the payoff $W_\play$ while player $-\play$'s payoff is always zero regardless how they allocate their resources.} It is trivial to verify that both pairs $(\mu^R_A, \mu^R_B)$ and $(\nu^R_A, \nu^R_B)$ satisfy the Conditions $(C1)$ and $(C2)$. Henceforth, we use the terms power and logit form to indicate the CSFs $\mu^R$ and $\nu^R$ respectively and use the term \emph{Lottery Blotto game with ratio-form CSF} to commonly address the games $\LB(\mu^R)$ and $\LB(\nu^R)$ (i.e., the games formulated by replacing $\zeta$ in Definition~\ref{def:LotteryGame} by either $\mu^R$ or $\nu^R$). An important remark is that both the power and logit form CSFs converge pointwise toward the Blotto functions $\beta_A, \beta_B$ as $R$ tends to infinity (see~Section~\ref{sec:Approx_Lottery_ratio} for more details). This convergence can be observed in Figure \ref{fig1} that illustrates several instances of the ratio-form CSFs in comparison with the Blotto~functions. 
 
\begin{table}[ht]
	\centering
	\caption{Power and logit form CSFs with generalized tie-breaking rule ($\alpha \in (0,1)$).}
    \label{table:CSF}
    \begin{tabular}{|c|c|c|c|}
\hline
        \centering
        {Name} & Notation
        & {If $x^2+ y^2 >0$} & {If $x=y=0$} \\
\hline
        Power form & $\mu^R:=\! (\mu^R_A,\mu^R_B)$ & \makecell{\hspace{-0.30cm}$\mu^R_A(x,y)= \frac{\alpha  {x} ^R}{\alpha  {x} ^R \! +\! (1 \!-\! \alpha) {y} ^R}$ ; \quad \hspace{0.35cm}$\mu^R_B(x,y)= \frac{(1\!-\!\alpha){y} ^R}{\alpha {x}^R \! +\! (1 \! -\! \alpha) {y} ^R}$} & \makecell{\hspace*{-0.60cm}$\mu^R_A(x,y) = \alpha$ \\ $\mu^R_B(x,y)=1-\alpha$} \\
\hline
        Logit form  & $\nu^R := \!(\nu^R_A,\nu^R_B)$ & \makecell{$\nu^R_A(x,y)= \frac{\alpha e^{xR}}{\alpha e^{xR} + (1-\alpha)e^{yR}}$; \quad $\nu^R_B(x,y)= \frac{(1-\alpha)e^{yR}}{\alpha e^{xR} + (1-\alpha)e^{yR}}$} & \makecell{\hspace*{-0.55cm}$\nu^R_A(x,y) = \alpha$ \\ $\nu^R_B(x,y)=1-\alpha$} \\
\hline 
    \end{tabular}
\end{table}

\begin{figure*}[htb]
\centering
    \begin{turn}{90} { \hspace{1.5cm} $\mu^R_A(x, 4)$}  \end{turn}
    \subfloat[Power form CSF, $X_A = 10, \alpha = 0.6$.]{{\includegraphics[height=0.26\textwidth]{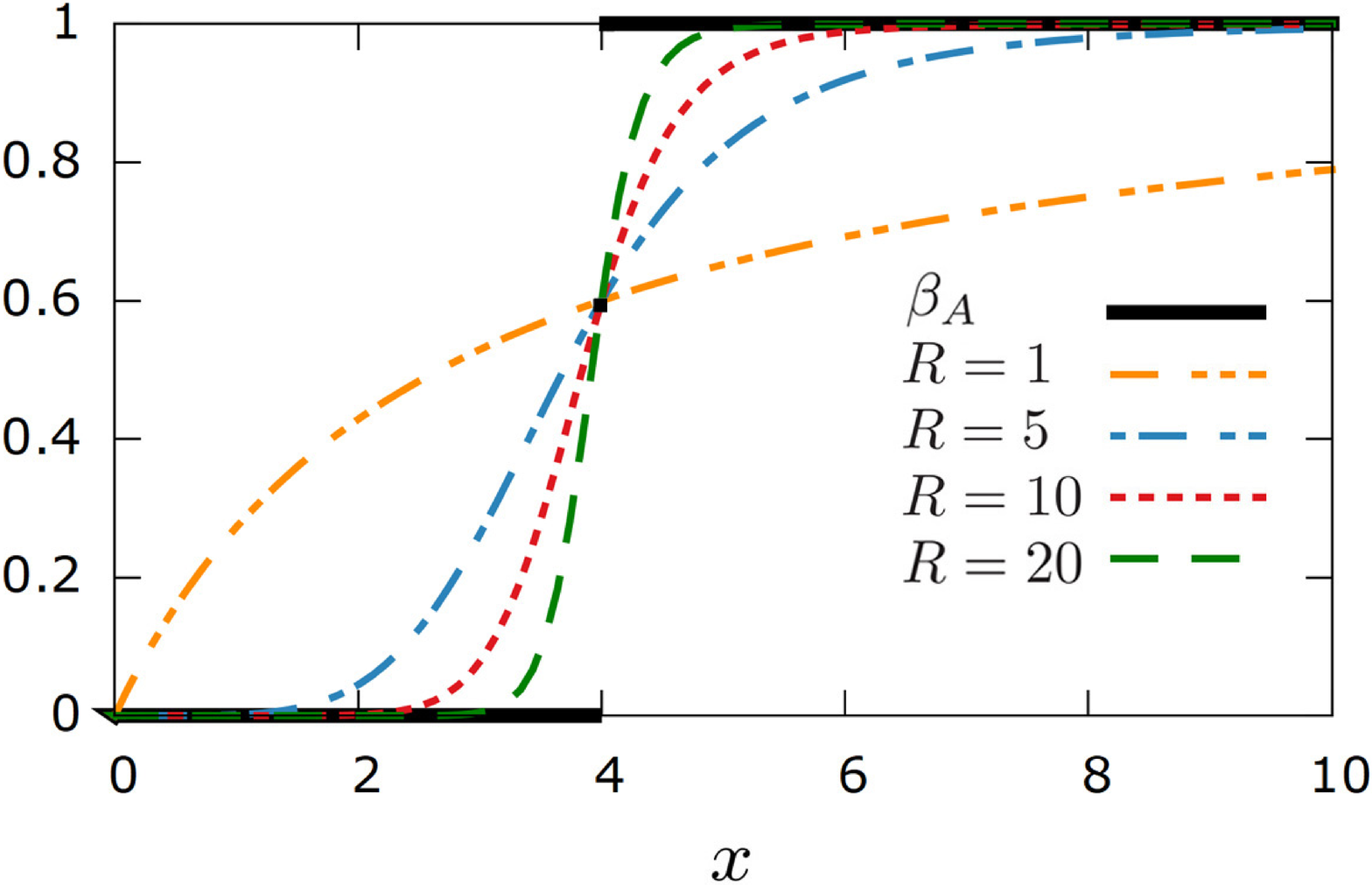} }} 
     \qquad
    \begin{turn}{90} { \hspace{1.5cm} $\nu^R_A(x,4)$} \end{turn}
    \subfloat[Logit CSF, $X_A = 10, \alpha = 0.6$.]{{\includegraphics[height=0.26\textwidth]{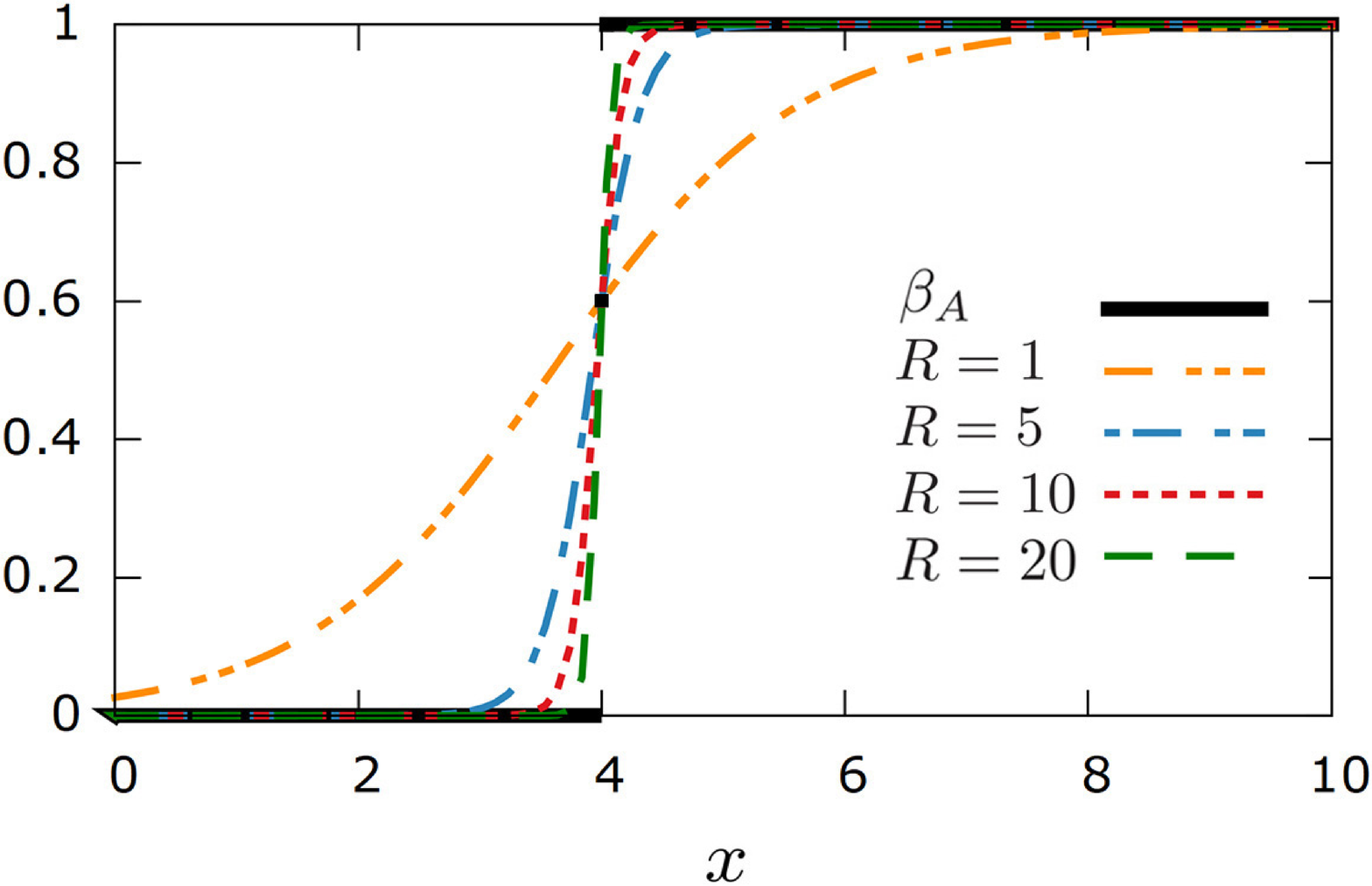} }}%
    \caption{Examples of power form and logit form CSFs in comparison with the Blotto functions.}
    \label{fig1}
\end{figure*}

%
%
\section{Preliminaries}
\label{sec:preliminaries}

In this section, we briefly review some results from the literature that are useful for our analyses of the GCB game and the GLB game; and we show new bounds on the involved parameters, based on \ref{eq:A0}, that are essential for the asymptotic analysis in the next sections. 

The (Nash) equilibrium characterization of the GCB game still remains an open question. However, under certain assumptions, a set of \emph{optimal univariate distributions} (henceforth, OUDs)\footnote{$\left\{F^A_i, F^B_i: i \in [n] \right\}$ is a set of {OUDs} of the game $\CB$ if  they hold the following conditions: $(i)$ $F^A_i, F^B_i$ have supports that are subset of $\mathbb{R}_{\ge 0}$; $(ii)$~\mbox{$\sum_{i \in [n]} {\Ex_{x_i \sim F^\play_i }[ x_i ] } \le X_{\play}$}; $(iii)$ if player $\play$ draws her allocation to battlefield $i$ from $F^\play_i$, $\forall i \in [n]$, Player $-\play$ has no pure strategy inducing a better payoff than when she draws her allocation to battlefield $i$ from $F^{-\play}_i$.} of players is well-known (see \citet{kovenock2020generalizations}). Particularly, observe that we can break down the problem of finding the best-response of a player against a fixed strategy of her opponent into solving $n$ all-pay auctions involving the Lagrange multipliers corresponding to the budget constraints (see e.g., \citet{kovenock2020generalizations,roberson2006,schwartz2014}). The equilibrium of two-player all-pay auctions is well-known and can be expressed as uniform-type distributions (see e.g., \citet{baye1996all,hillman1989}).
These uniform-type distributions derive directly a set of OUDs of the GCB game. Importantly, we have an equilibrium if we can construct a joint distribution with these OUDs such that its realizations always satisfy the budget constraints. However, as mentioned in Section~\ref{sec:Intro}, the existence of such a construction is known only for some special cases and remains unknown in the general setting of $\CB$. Note that if we consider a relaxation of the game that only requires the budget constraints to be hold in expectation (this relaxation is called the General Lotto game by \citet{kovenock2020generalizations}), an equilibrium is to independently draw allocations from the uniform-type~distributions.  

Although in this work we do not attempt to solve the open question of the equilibria characterization of the GCB game, we still use several preliminary results from this approach to construct an approximate equilibrium of the games. We present these results below, using a notation similar to that of \citet{kovenock2020generalizations}.

For each instance of the game $\CB$ (and of the game $\LB$), for any $\gamma \in (0,\infty)$, we~define
\begin{equation*}
    \Omega_A(\gamma):= \left\{ i \in [n] : {v^A_i}/{v^B_i} > \gamma \right\},
\end{equation*}
and consider the following equation with the variable $\gamma$ :
\begin{equation}
    \frac{X_B \gamma}{X_A} = \frac{\gamma^2 \sum\nolimits_{i \in \Omega_{A}(\gamma)}{\frac{(v^B_i)^2}{v^A_i}} + \sum\nolimits_{i \notin \Omega_{A}(\gamma)}{v^A_i}} {\sum\nolimits_{i \in \Omega_{A}(\gamma)}{v^B_i} + \frac{1}{\gamma^2} \sum\nolimits_{i \notin \Omega_{A}(\gamma)}{\frac{(v^A_i)^2}{v^B_i}} }.
\label{eq:Equagamma}
\end{equation}

Let us denote by $\Sn$ \emph{the set containing all positive solutions} of Equation~\eqref{eq:Equagamma} corresponding to the game $\CB$ (or $\LB$).\footnote{Note that \eqref{eq:Equagamma} and $\Sn$ also depend on other parameters of the game $\CB$ but we use the notation with only the subscript $n$ and omit other parameters to lighten the notation.} Based on the intermediate value theorem, the following lemma is proved by \citet{kovenock2020generalizations}.
\begin{lemma}
\label{lem:positivegamma}
    For any game $\CB$ (or $\LB$), Equation \eqref{eq:Equagamma} has at least one positive solution; i.e., $\Sn \neq \emptyset$.
\end{lemma}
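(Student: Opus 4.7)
The plan is to apply the intermediate value theorem to the function
\[
    f(\gamma) := \frac{X_B \gamma}{X_A} - \frac{\gamma^2 \sum_{i \in \Omega_A(\gamma)} \frac{(v^B_i)^2}{v^A_i} + \sum_{i \notin \Omega_A(\gamma)} v^A_i}{\sum_{i \in \Omega_A(\gamma)} v^B_i + \frac{1}{\gamma^2} \sum_{i \notin \Omega_A(\gamma)} \frac{(v^A_i)^2}{v^B_i}},
\]
viewed as a function of $\gamma \in (0,\infty)$. Solutions of \eqref{eq:Equagamma} are exactly the zeros of $f$. Note first that for any $\gamma > 0$ the denominator of the fraction is strictly positive: at least one of $\Omega_A(\gamma)$ or its complement is nonempty, and each of $v^A_i, v^B_i$ is strictly positive by \ref{eq:A0}, so $f$ is well-defined on $(0,\infty)$.

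The first step is to verify that $f$ is continuous on $(0,\infty)$. The only candidate discontinuities occur at the finitely many thresholds $\gamma_j := v^A_j/v^B_j$ for $j \in [n]$, where the set $\Omega_A(\gamma)$ changes (index $j$ leaves $\Omega_A$ as $\gamma$ crosses $\gamma_j$ upwards, since the defining inequality is strict). The key observation is that, evaluated at $\gamma = \gamma_j$, the contribution of index $j$ is the same in both representations: in the numerator, $\gamma_j^2 (v^B_j)^2/v^A_j = v^A_j$, and in the denominator, $\gamma_j^{-2} (v^A_j)^2/v^B_j = v^B_j$. Thus the one-sided limits of each sum agree at $\gamma_j$, so both the numerator and denominator (and hence $f$) are continuous at every $\gamma_j$, and $f$ is continuous on all of $(0,\infty)$.

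The second step is to examine the two extreme regimes. For $\gamma$ smaller than $\min_{i \in [n]} v^A_i/v^B_i$, we have $\Omega_A(\gamma) = [n]$, so
\[
    f(\gamma) = \frac{X_B \gamma}{X_A} - \gamma^2 \sum_{i=1}^{n} \frac{(v^B_i)^2}{v^A_i},
\]
since $\sum_i v^B_i = 1$; this is strictly positive for all sufficiently small $\gamma > 0$. Symmetrically, for $\gamma > \max_{i\in[n]} v^A_i/v^B_i$, we have $\Omega_A(\gamma) = \emptyset$, giving
\[
    f(\gamma) = \frac{X_B \gamma}{X_A} - \frac{\gamma^2}{\sum_{i=1}^{n} (v^A_i)^2/v^B_i},
\]
which tends to $-\infty$ as $\gamma \to \infty$ and is in particular strictly negative for $\gamma$ large enough. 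By the intermediate value theorem applied to $f$ on an interval containing both signs, there exists $\gamma^* > 0$ with $f(\gamma^*) = 0$, proving $\Sn \neq \emptyset$.

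The only nontrivial step is the continuity check at the thresholds $\gamma_j$; the limit computations are purely algebraic comparisons of linear versus quadratic growth in $\gamma$. \ref{eq:A0} is used only implicitly, to guarantee that all the quantities $v^A_i, v^B_i$ appearing in denominators are bounded away from zero so that the piecewise-defined expressions make sense and remain bounded.
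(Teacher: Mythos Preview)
Your proof is correct and follows exactly the intermediate-value-theorem approach that the paper attributes to \citet{kovenock2020generalizations}; the paper itself does not give a detailed argument but simply cites that reference. One minor remark: Assumption~(A0) is not actually needed here, since the game definition already stipulates $w^A_i, w^B_i > 0$, which is all that is required to keep the denominators positive.
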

%
%
Equation~\eqref{eq:Equagamma} may have more than one solution and solving it (i.e., finding algebraic expressions of its solutions) can be done in $\mO(n \ln(n))$ time.\footnote{To solve this equation, we first sort out all ratios ${v^A_i}/{v^B_i}$ in a non-decreasing order (which can be done in $\mO(n \ln(n))$), then there are three possible cases: $\gam  < \min\{v^A_i/ v^B_i, i \in [n]\}$ or $\gam  \ge \max\{v^A_i/ v^B_i, i \in [n]\}$ or $\exists j: \gam  \in \left[ {v^A_{j}}/{v^B_j},{v^A_{j+1}}/{v^B_{j+1}} \right)$. In all of these cases, Equation \eqref{eq:Equagamma} becomes a cubic equation; therefore, it can be solved algebraically.} Now, corresponding to each positive solution $\gam  \in \Sn$, we define two constants,\footnote{These constants are the Lagrange multipliers corresponding to the budget constraints in finding players' best-response; see \citet{kovenock2020generalizations} for more details.} namely $\LdaA$ and $\LdaB$ as follows:
\begin{align}
    & {\lambda _A^*}: = \frac{({\gam })^2}{2{X_B}}\sum\limits_{i \in {\Omega _A}({\gam })} {\frac{{{{\left( {v_i^B} \right)}^2}}}{{v_i^A}}}  + \frac{1}{2{X_B}}\sum\limits_{i \notin {\Omega _A}({\gam })} {v_i^A} , \label{eq:lambdaA} \\
    & {\lambda _B^*}: = \frac{1}{2{X_A}}\sum\limits_{i \in {\Omega _A}({\gam })} {v_i^B}  + \frac{1}{2(\gam ) ^2 {X_A}}\sum\limits_{i \notin {\Omega _A}({\gam })} {\frac{{{{\left( {v_i^A} \right)}^2}}}{{v_i^B}}}.\label{eq:lambdaB}
\end{align}
Note importantly that we have $\gam  = \LdaA / \LdaB$ (see Lemma~\ref{lem:Preliminary} in~\ref{sec:appen_preliminary} for a proof). We now use these constants $\LdaA$ and $\LdaB$ to define several important distributions.
\begin{definition}
\label{def:UnifromDistributions}
Given a game $\CB$ (or $\LB$), for any $\gam  \in \Sn$ and the corresponding constants $\LdaA, \LdaB$, we define the following random variables and distributions,\footnote{Here, the superscripts $S$ and $W$, standing for strong and weak, are used to emphasize the intuition on players' incentive to play according to these distributions in the Colonel Blotto games: if \mbox{$i \in \Ostar:= \left\{i: v^A_i/\LdaA > v^B_i/\LdaB \right\}$},  Player A has a ``stronger" incentive to win battlefield $i$ and Player B has a ``weaker" incentive; if \mbox{$i \notin \Ostar$}, the roles of players are exchanged.} for each $i \in [n]$:
\begin{enumerate}[(a)]
    \item If $i \in \Ostar$ (i.e., $\frac{v^A_i}{\LdaA} \!  > \! \frac{v^B_i}{ \LdaB}$), we define $\AS$ and $\BW$ as the random variables whose distributions~are
        \begin{align}
        & F_{\AS}\left( x \right) := \frac{x \LdaB}{v^B_i}, \forall x \in \left[0, \frac{v_i^B}{\LdaB}\right], \label{As}
        \\
        & {F_{\BW}}\left( x \right) := \frac{\frac{v^A_i}{\LdaA} - \frac{v^B_i}{\LdaB}}{\frac{v^A_i}{\LdaA}} + \frac{x \LdaA}{v^A_i}, \forall x \in \left[0, \frac{v_i^B}{\LdaB}\right]. \label{Bw}
        \end{align}
    \item If $i \notin \Ostar$ (i.e., $\frac{v^A_i}{\LdaA} \! \le \! \frac{v^B_i}{\LdaB}$), we define $\AW$ and $\BS$ as the random variables whose distributions~are
        \begin{align}
        & {F_{\AW}}\left( x \right) := \frac{\frac{v^B_i}{\LdaB} - \frac{v^A_i}{\LdaA}}{\frac{v^B_i}{\LdaB}} + \frac{x \LdaB}{v^B_i}, \forall x \in \left[0, \frac{v_i^A}{\LdaA}\right],\label{Aw} 
        \\
        & F_{{\BS}}\left( x \right) := \frac{x \LdaA}{v^A_i}, \forall x \in \left[0, \frac{v_i^A}{\LdaA}\right]. \label{Bs}
        \end{align}
\end{enumerate}
To lighten the notation, hereinafter, we often commonly denote these random variables as follows (the corresponding distributions  are denoted by $\FA$ and $\FB$):
\begin{align}
    & A^*_i := \left\{
    	\begin{array}{ll}
    		\AS  & \mbox{if } i \in \Ostar  \\
    		\AW & \mbox{if } i \notin \Ostar
    	\end{array}
    \right.
\textrm{ and } 
    \hspace{-2.5cm}& B^*_i := \left\{
	\begin{array}{ll}
		\BS  & \mbox{if } i \notin \Ostar  \\
		\BW & \mbox{if } i \in \Ostar
	\end{array}
    \right.. \label{A*B*}
\end{align}
\end{definition}
%

We term these distributions the \emph{uniform-type distributions}: $F_{\AS}\left( x \right)$ is the continuous uniform distribution on $\left[ 0, {v^B_i}/{\LdaB} \right]$ and $F_{\BW}\left( x \right)$ is the distribution placing a positive mass $\left( \frac{v^A_i}{\LdaA}\!-\! \frac{v^B_i}{\LdaB}\right)\! \Big/ \! \frac{v^A_i}{\LdaA}$ at $0$ and uniformly distributing the remaining mass on  $\left( 0, {v^B_i}/{\LdaB} \right]$; similarly, $F_{\BS}$ is the uniform distribution on $[0, v^A_i / \LdaA]$ and $F_{\BW}$ is uniform on $(0,v^A_i/\LdaA]$ with a positive mass at $0$.

If Player A can construct and plays a mixed strategy such that her sampled allocation to any battlefield $i \in [n]$ follows the distribution $\FA$, it is optimal for Player B to play such that her allocation to $i$ follows $\FB$ (if it is possible) and vice versa. We will revisit this result (with more details) in Section~\ref{sec:ApproximateBlotto} and in Lemma~\ref{lem:best_response} in~\ref{sec:Appen_Proof_TheoBlotto}. Importantly, under the condition that Player A and Player B can respectively construct joint distributions of \mbox{$\FA, \forall i \in [n]$} and \mbox{$\FB, \forall i \in [n]$} such that their sampled allocations satisfy the budget constraint, these mixed strategies yield an equilibrium of the game $\CB$. However, in general, that condition does not always hold. For instance, although $A^*_i$ and $B^*_i$ have finite upper-bounds,\footnote{Trivially from Proposition~\ref{Prop:BoundLambda}, the random variables $A^*_i, B^*_i, \forall n , \forall i \in [n]$ are upper-bounded by $\wmax / (\wmin n \Lmin)$. In the remainders of the paper, we sometimes need an upper-bound of these random variables that does not depend on $n$: we can prove that they are bounded by $2 X_B$ (see Lemma~\ref{lem:Preliminary} in~\ref{sec:appen_preliminary}).} we note that among these random variables, some may (with strictly positive probability) exceed the budgets $X_A, X_B$ for certain parameters' configuration of the game; therefore, allocating according to $\FA, \FB$ may violate the budget constraints and it is then trivial that there exists no equilibrium yielding $\FA, \FB, \forall i \in [n]$ as marginals. On the other hand, given fixed $X_A, X_B$, if $n$ is large enough, we can guarantee that $A^*_i, B^*_i$ do not exceed the budgets for each $i$; however, even in this case, we still do not have guarantees on the summation of allocations sampled from all $A^*_i, B^*_i, i \in [n]$, i.e., it is still unknown if there exists an equilibrium yielding $\FA,\FB, i \in [n]$  as marginals. Note importantly that the budget-constraints violation of $A^*_i, B^*_i$ does not affect our work and our results hold for any parameters' configuration of the games.

%
%
%

%
Finally, under \ref{eq:A0}, we obtain a novel result, presented below as Proposition~\ref{Prop:BoundLambda}, stating that the parameters $\gam , \LdaA$ and $\LdaB$ are all bounded. The proof of this proposition is given in \ref{sec:appen_preliminary}. From this proof, we observe that the bounds of $\gam , \LdaA$ and $\LdaB$ depend only on the ratios $\wmax/\wmin$ and $X_B/X_A$; when these ratios increase, the ranges in which $\gam$ and $\LdaA,\LdaB$ belong to also become larger (i.e., the ratios $\Gmax/\Gmin $ and $\Lmax/\Lmin$ also~increase). The main results of this work are based on asymptotic analyses in terms of the number of battlefields of the games; therefore, it is noteworthy that the bounds of these parameters \emph{do not} depend on $n$.
\begin{restatable}{proposition}{boundpropo}
    \label{Prop:BoundLambda}
    Given $\wmin, \wmax, X_A, X_B >0$ ($\wmin \le \wmax$, $X_A \le X_B$), there exist constants $\Gmin, \Gmax, \Lmin, \Lmax\! >\!0$ such that for any game $\CB$ (or $\LB$) satisfying \ref{eq:A0}, any $\gam \! \in\! \Sn$ and its corresponding $\LdaA, \LdaB$, we have \mbox{$\Gmin\! \le\! \gam \!  \le\! \Gmax$} and \mbox{$\Lmin\! \le\! \LdaA, \LdaB\! \le\! \Lmax$}.
\end{restatable}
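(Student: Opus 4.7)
The plan is to first bound $\gam$ uniformly in $n$ through an exhaustive case analysis of Equation~\eqref{eq:Equagamma}, and then to use the resulting bound together with the explicit expressions \eqref{eq:lambdaA}--\eqref{eq:lambdaB} to bound $\LdaA$ and $\LdaB$. The key observation is that under \ref{eq:A0}, the pointwise ratio $r_i := v^A_i / v^B_i = (w^A_i / w^B_i)(W_B / W_A)$ is uniformly bounded in $n$: since $w^\play_i \in [\wmin, \wmax]$ and $W_\play \in [n\wmin, n\wmax]$, every $r_i$ lies in the interval $[r_{\min}, r_{\max}] := [(\wmin/\wmax)^2,\, (\wmax/\wmin)^2]$, which depends only on the ratio $\wmax/\wmin$.

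Substituting $v^B_i = v^A_i / r_i$, the right-hand side of \eqref{eq:Equagamma} reads $\bigl(\gamma^2 \sum_{r_i > \gamma} v^A_i/r_i^2 + \sum_{r_i \le \gamma} v^A_i\bigr) \big/ \bigl(\sum_{r_i > \gamma} v^A_i/r_i + \gamma^{-2} \sum_{r_i \le \gamma} v^A_i r_i\bigr)$. I will then distinguish three exhaustive cases for a solution $\gam \in \Sn$: (i) $\Ostar = \emptyset$, in which case \eqref{eq:Equagamma} collapses to $\gam = (X_B/X_A) \sum_i v^A_i r_i$, so $(X_B/X_A) r_{\min} \le \gam \le (X_B/X_A) r_{\max}$; (ii) $\Ostar = [n]$, in which case \eqref{eq:Equagamma} collapses to $\gam = (X_B/X_A) (\sum_i v^A_i/r_i)\big/(\sum_i v^A_i/r_i^2)$, so $(X_B/X_A) r_{\min}^2/r_{\max} \le \gam \le (X_B/X_A) r_{\max}^2/r_{\min}$; (iii) $\Ostar$ is a proper non-empty subset of $[n]$, so there exist $i,j$ with $r_i > \gam \ge r_j$, giving immediately $r_{\min} \le \gam \le r_{\max}$. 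Taking the minimum of the three lower bounds and the maximum of the three upper bounds yields positive constants $\Gmin,\Gmax$, depending only on $\wmin, \wmax, X_A, X_B$, with $\Gmin \le \gam \le \Gmax$ for every $\gam \in \Sn$.

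The upper bound on $\LdaA$ then follows by plugging $\gam \le \Gmax$ into \eqref{eq:lambdaA} and using $\sum_{i \in \Ostar}(v^B_i)^2/v^A_i = \sum_{i \in \Ostar} v^B_i/r_i \le 1/r_{\min}$ together with $\sum_{i \notin \Ostar} v^A_i \le 1$. The positive lower bound is the delicate step: a term-by-term estimate of \eqref{eq:lambdaA} could vanish as $n \to \infty$ because the partition $\Ostar \sqcup ([n]\setminus\Ostar)$ might be arbitrarily lopsided. I resolve this by splitting on which of the two sets carries at least half of the total $v^B$-mass, using $\sum_{i \in \Ostar} v^B_i + \sum_{i \notin \Ostar} v^B_i = 1$. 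In the first subcase, the first term of \eqref{eq:lambdaA} is at least $\Gmin^2/(4 X_B r_{\max})$; in the second subcase, the bound $v^A_i \ge r_{\min} v^B_i$ gives $\sum_{i \notin \Ostar} v^A_i \ge r_{\min}/2$, so the second term of \eqref{eq:lambdaA} is at least $r_{\min}/(4 X_B)$. Hence $\LdaA$ admits a positive lower bound independent of $n$.

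Finally, since $\gam = \LdaA/\LdaB$ (cf.\ Lemma~\ref{lem:Preliminary} in~\ref{sec:appen_preliminary}), the identity $\LdaB = \LdaA/\gam$ together with the bounds already obtained on $\gam$ and $\LdaA$ translates directly into two-sided bounds on $\LdaB$; taking $\Lmin, \Lmax$ to jointly cover $\LdaA$ and $\LdaB$ completes the proof. The main obstacle is exactly the lower bound on $\LdaA$: without the case split exploiting the global identity $\sum_i v^B_i = 1$, each individual term in \eqref{eq:lambdaA} can shrink with $n$, and it is only through this total-mass argument that the global constraint is converted into a uniform positive lower bound.
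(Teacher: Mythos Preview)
Your proof is correct and follows essentially the same three-case analysis for $\gam$ as the paper. The only notable difference is your lower bound on $\LdaA$. You call this ``the delicate step'' and introduce a case split on which of $\Ostar$ or $[n]\setminus\Ostar$ carries at least half of the $v^B$-mass. This works, but it is more elaborate than needed: the paper simply uses the pointwise bounds \eqref{eq:bound_v^p_i}, observing that $(v^B_i)^2/v^A_i \ge (\wmin/\wmax)^3/n$ and $v^A_i \ge (\wmin/\wmax)/n \ge (\wmin/\wmax)^3/n$, so each of the $n$ summands in \eqref{eq:lambdaA} is at least $\min\{(\gam)^2,1\}\cdot(\wmin/\wmax)^3/(2nX_B)$; summing over all $n$ indices cancels the $1/n$. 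In other words, a term-by-term estimate does \emph{not} vanish as $n\to\infty$, precisely because the number of terms grows with $n$. Your mass-splitting argument is a valid alternative, but the concern that motivated it was unwarranted. For $\LdaB$ you use $\LdaB=\LdaA/\gam$, which is a clean shortcut; the paper instead repeats the direct term-by-term argument on \eqref{eq:lambdaB}.
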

%

%
%
%
%
%
\section{Approximate Equilibria of the Generalized Colonel Blotto Game}
\label{sec:ApproximateBlotto}
In this section, we propose a class of strategies in the GCB game $\CB$, called the independently uniform strategies, and we show that it is an approximate equilibrium (and an approximate min-max strategy in the constant-sum case). Note that the independently uniform strategies are also approximate equilibria of the GLB game $\LB$, we analyze that in~Section~\ref{sec:LotteryApproximation}.

We begin by recalling the concept of approximate equilibria (see e.g., \citet{myerson1991game,Nisan07}) in the context of our games: \emph{for any $\varepsilon \ge 0$, an \emph{\mbox{$\varepsilon$-equilibrium}} of the game $\CB$ is any strategy profile $\left(s^{*},t^{*} \right)$ such that \mbox{$\Pi_A(s,t^*) \!\le\! \Pi_A(s^*,t^*)\!+\!\varepsilon$} and \mbox{$\Pi_B(s^*,t)\! \le\! \Pi_B(s^*,t^*)\! +\! \varepsilon$} for any strategy $s$ and $t$ of players A and B. } Replacing $\Pi_A$ and $\Pi_B$ by $\Pi_A^{\zeta}$ and $\Pi_B^{\zeta}$, we have the definition of \mbox{$\varepsilon$-equilibria} of the game $\LB(\zeta)$. Hereinafter, we use the generic term \emph{approximate~equilibrium} whenever the approximation error $\varepsilon$ need not be emphasized.

%
\subsection{The Independently Uniform Strategies}
\label{sec:IU_Strategy}

Given a game $\CB$ (or $\LB$), let us consider the corresponding Equation \eqref{eq:Equagamma} and set $\Sn$. For any $\gam  \in \Sn$, we define in Definition~\ref{def:IU_strategy} a mixed strategy via an algorithm, called Algorithm~\ref{alg:IU_strategy}. We term this strategy as the \emph{independently uniform} strategy (or $\IU$ strategy), parameterized by $\gam $. Intuitively, this strategy is constructed by a simple procedure: players start by \emph{independently} drawing $n$ numbers from the \emph{uniform-type} distributions defined in Definition~\ref{def:UnifromDistributions}, then they re-scale these numbers to guarantee the budget constraints.

\begin{definition}[The independently uniform strategy]
\label{def:IU_strategy}
In any game $\CB$ (or $\LB$), for any $\gam  \in \Sn$ and any player~$\play \in \{ A, B\}$, $\IU_\play$  is the \textbf{mixed} strategy of player~$\play$ where her allocation $\boldsymbol{x}^\play$ is randomly generated from Algorithm~\ref{alg:IU_strategy}.
\begin{figure}[ht]
    \centering
    \begin{minipage}{0.8\textwidth}
        \begin{algorithm}[H]
            \DontPrintSemicolon
        \SetAlgoVlined
        \SetInd{6pt}{6pt}
        \SetNlSkip{6pt}
            \SetCustomAlgoRuledWidth{0.5\textwidth}
            \KwIn{$n\in\mathbb{N}$, $w^A_i,w^B_i \in [{\wmin},{\wmax}], \forall i \in [n]$, budgets $X_A, X_B$, $\gam  \in \Sn$} 
             \KwOut{$\boldsymbol{x}^A, \boldsymbol{x}^B \in\mathbb{R}^n_{\ge 0}$}
             
            Draw $a_i \sim \FA, b_i \sim \FB, \forall i \in [n]$ independently \;
            \eIf{${\sum\nolimits_{j \in [n]} {{a_j}} = 0}$}{
            $x^A_i:=0, \forall i \in [n]$
            }{${x}^A_i:= \frac{a_i}{\sum\nolimits_{j \in [n]}{a_j} } X_A, \forall i \in [n] $} 
            \eIf{${\sum\nolimits_{j \in [n]} {{b_j}} = 0}$}{
            $x^B_i:=0, \forall i \in [n]$
            }{${x}^B_i:= \frac{b_i}{\sum\nolimits_{j \in [n]}{b_j} } X_B, \forall i \in [n] $} 
            \caption{$\IU$ strategy-generation algorithm.}\label{alg:IU_strategy}
        \end{algorithm}    
    \end{minipage}
\end{figure}
\end{definition}

Henceforth, we use the term $\IU$ strategy to denote the strategy profile $( {\IU_A}, {\IU_B} )$. We also simply use the notation $\IU$ in some places to commonly address either $\IU_A$ or $\IU_B$ strategy in case the name of the player need not be specified. Observe that for any player~$\play \in \{A,B\}$, any output $\boldsymbol{x}^\play$ from Algorithm~\ref{alg:IU_strategy} is an $n$-tuple that satisfies her budget constraint. In other words, $\IU_\play$ is a mixed strategy that is implicitly defined by Algorithm~\ref{alg:IU_strategy} and each run of this algorithm outputs a feasible pure strategy sampled from~$\IU_\play$. Note that the marginals of the $\IU$ strategy are \emph{not} the uniform-type distributions $\FA, \FB, i \in [n]$ defined in~Section~\ref{sec:preliminaries}. In terms of computational complexity, Algorithm~\ref{alg:IU_strategy} terminates in $\mO(n)$ time. Below we discuss the specificity of the outputs of Algorithm~\ref{alg:IU_strategy} in the cases where $\sum_{j \in [n]} a_j =0$ or $\sum_{j \in [n]} b_j =0$.

\begin{remark}
    If $\sum_{j \in [n]} a_j =0$ or $\sum_{j \in [n]} b_j =0$, the $\IU_\play$ strategy allocates zero resource to all battlefields for the corresponding player (line~$3$ and line~$7$ of Algorithm~\ref{alg:IU_strategy}). It may seem more natural that, if \mbox{$\sum_{j \in [n]} a_j =0$}, Player A allocates equally on all battlefields, i.e., set \mbox{$x^A_i:= X_A/n, \forall i \in [n]$} in line $3$ of Algorithm~\ref{alg:IU_strategy} (and similarly for Player B). In reality though, these assignments can be chosen to be any arbitrary $n$-tuple $\boldsymbol{x}^\play$ in $\mathbb{R}^n_{\ge 0}$ as long as \mbox{$\sum \nolimits_{i\in[n]}{x^\play_i} \le X_\play$} without affecting the results in our work. This comes from the fact that in most cases, the conditions in line~$2$ and~$6$ hold with probability zero. They can happen with a positive probability only when one player is the ``weak player" and the other is the ``strong player" on all of the battlefields (i.e., either \mbox{$\Ostar = \emptyset$ or $\Ostar = [n]$)}, e.g., in a constant-sum game $\CCB$. Yet, even in this case, this probability decreases exponentially as the number of battlefields increases (see~\eqref{eq:Prop_all_zero} in~\ref{sec:Appen_Proof_TheoBlotto}). The asymptotic order of the approximation error in all of our results is larger than this probability; therefore, it does not matter which assignment we choose in lines~$3$ and $7$ of Algorithm~\ref{alg:IU_strategy}. Here, we choose to assign $x^A_i = 0, \forall i$ and $x^B_i = 0, \forall i$ to ease the notation in the proofs of the results in the following sections; in particular, it avoids creating a discontinuity outside~$0$ in the CDF of the effective allocation in each battlefield (see also Lemma~\ref{lem:continuity_Ani_and_Bni} in \ref{sec:Appen_Proof_TheoBlotto}). 
    
%
\end{remark}
%
%
%
%
\subsection{Approximate Equilibria of the Generalized Colonel Blotto Game $\CB$}
\label{sec:Approx_NonConstantSum}

We now present our main result stating that the $\IU$~strategy is an approximate equilibrium with an error that is only a negligible fraction of the maximum payoffs that the players can achieve, quickly decreasing as $n$ increases. In the following results, note that since we focus on the setting of games with a large number of battlefields, we now focus on characterizing the approximation error according to $n$ and treat other parameters of the $\CB$ games, including $X_A, X_B, \wmin, \wmax$ and $\alpha$, as constants (but not the values $w^\play_i, v^\play_i, \forall i \in [n], \play \in \{A,B\}$). Note also that we often use the notation $\tilde{\mO}$ that is a variant of the big-$\mO$ notation where logarithmic factors are ignored.\footnote{See~\ref{sec:appen_preliminary} for formal definitions of these notations.}
\begin{restatable}{theorem}{TheoMainBlotto}
\label{TheoMainBlotto}
\leavevmode
Consider GCB games satisfying \ref{eq:A0}, we have the following~results:
\begin{itemize}
    \item [(i)] In any game $\CB$, there exists a positive number \mbox{$\varepsilon = \tilde{\mO} (n^{-1/2})$} such that for any \mbox{$\gam  \in \Sn$}, the following inequalities~hold for any pure strategy $\boldsymbol{x}^A$ and $\boldsymbol{x}^B$ of players A and~B:
        \begin{align}
        & \Pi_A(\boldsymbol{x}^A,{\IU_B}) \le \Pi_A({\IU_A},{\IU_B}) + \varepsilon W_A,\label{eq:MainTheo_A}\\
        & \Pi_B({\IU_A},\boldsymbol{x}^B) \le \Pi_B({\IU_A},{\IU_B}) + \varepsilon W_B. \label{eq:MainTheo_B}
        \end{align}
    \item [(ii)] Given $X_A, X_B>0$ ($X_A \le X_B$), there exists $C^*>0$ such that for any $\varepsilon\! \in (0, 1]$, in any game $\CB$ with $n\!\ge C^* \varepsilon^{-2} \logep$, \eqref{eq:MainTheo_A}~and \eqref{eq:MainTheo_B} hold for any \mbox{$\gam  \in \Sn$}, any pure strategy~\mbox{$\boldsymbol{x}^A$}, $\boldsymbol{x}^B$ of players A~and~B respectively.
\end{itemize}
\end{restatable}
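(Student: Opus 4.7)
The plan is to introduce an auxiliary, pre-normalized variant of the strategy in which each player independently samples $a_i\sim\FA$ (resp.\ $b_i\sim\FB$) for every $i\in[n]$ and simply plays the tuple $(a_i)_{i\in[n]}$ (resp.\ $(b_i)_{i\in[n]}$) without the rescaling step of Algorithm~\ref{alg:IU_strategy}; denote these auxiliary strategies by $\widehat{s}^A$ and $\widehat{s}^B$ respectively. Although $\widehat{s}^A,\widehat{s}^B$ may violate the hard budget constraint, their marginals are exactly the OUDs $\FA,\FB$, and a standard all-pay-auction / Lagrangian argument (following the construction of the OUDs by \citet{kovenock2020generalizations,roberson2006}) shows that $(\widehat{s}^A,\widehat{s}^B)$ forms an equilibrium of the relaxed General Lotto game. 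The proof will then proceed in three stages: (a)~bound the payoff gap between $\IU$ and $\widehat{s}$; (b)~invoke the equilibrium property of $\widehat{s}$ against any budget-feasible pure strategy; (c)~combine (a) and (b) to obtain the approximate-equilibrium bound.

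For stage~(a), I would couple $\IU_B$ with $\widehat{s}^B$ via the common draws $(b_j)_j$, so that $x^B_i=b_i X_B/S_B$ with $S_B=\sum_j b_j$. Using Proposition~\ref{Prop:BoundLambda} together with the preliminary upper bound $b_j\le 2X_B$, the sum $S_B$ is a sum of independent bounded variables with $\Ex S_B=X_B$ (this identity follows directly from the definitions~\eqref{eq:lambdaA}--\eqref{eq:lambdaB} combined with~\eqref{eq:Equagamma}). Hoeffding's inequality applied to $S_B$, and to each leave-one-out sum $S_B^{-i}:=S_B-b_i$, yields a deviation bound of the form $\prob(|S_B^{-i}-X_B|>t)\le 2\e^{-cnt^2}$ for some $c>0$ depending only on $\wmin,\wmax,X_A,X_B$; choosing $t$ of order $\sqrt{\log(n)/n}$ produces a high-probability bound at scale $\tilde{\mO}(n^{-1/2})$.

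Next, fix any pure strategy $\boldsymbol{x}^A$ of Player~A. For each battlefield~$i$, I would condition on $S_B^{-i}$ (which is independent of $b_i$) and rewrite the event $\{x^B_i<x^A_i\}$ as $\{b_i<c_i\}$ where $c_i=x^A_i S_B^{-i}/(X_B-x^A_i)$. Outside the trivial regime where $x^A_i$ exceeds the support of $\FB$, one has $x^A_i=\mO(1/n)$, so on the above high-probability event $|c_i-x^A_i|=\tilde{\mO}(n^{-3/2})$. The CDF $\FB$, being a uniform-type distribution on a support of length $\Theta(1/n)$, has density $\Theta(n)$; hence $|\FB(c_i)-\FB(x^A_i)|=\tilde{\mO}(n^{-1/2})$. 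Multiplying by $w^A_i=\Theta(1)$, summing over $i\in[n]$ and absorbing the negligible contribution from the low-probability tail gives $|\Pi_A(\boldsymbol{x}^A,\IU_B)-\Pi_A(\boldsymbol{x}^A,\widehat{s}^B)|=\tilde{\mO}(n^{-1/2})\,W_A$, uniformly in $\boldsymbol{x}^A$; the symmetric bound for Player~B follows identically.

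Stage~(b) is the standard Lagrangian argument for all-pay auctions: because $\FB$ is uniform-type and $\lambda^*_A$ is the Lagrange multiplier associated with Player~A's budget, for each $i$ the map $t\mapsto w^A_i\,\Ex_{b_i\sim\FB}[\beta_A(t,b_i)]-\lambda^*_A t$ is maximized in expectation by $t\sim\FA$. Summing over $i$ and using $\sum_i x^A_i\le X_A=\sum_i\Ex[a_i]$ gives $\Pi_A(\boldsymbol{x}^A,\widehat{s}^B)\le\Pi_A(\widehat{s}^A,\widehat{s}^B)$ for every budget-feasible $\boldsymbol{x}^A$. Chaining this inequality with two applications of stage~(a) (once to replace $\IU_B$ by $\widehat{s}^B$ in $\Pi_A(\boldsymbol{x}^A,\IU_B)$, and once to replace $\widehat{s}^A$ by $\IU_A$ in $\Pi_A(\widehat{s}^A,\widehat{s}^B)$) yields~\eqref{eq:MainTheo_A}; \eqref{eq:MainTheo_B} is symmetric, and part~(ii) follows by inverting $\varepsilon=\tilde{\mO}(n^{-1/2})$. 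The principal obstacle lies in stage~(a): the Blotto payoff is discontinuous, so the gap between $\IU$ and $\widehat{s}$ cannot be controlled pointwise; the balance between the $\Theta(n)$ density of $\FB$ and the $\tilde{\mO}(n^{-3/2})$ perturbation of its argument is tight, and conditioning on $S_B^{-i}$ rather than $S_B$ is crucial so as to keep $b_i\sim\FB$ inside the conditional probability.
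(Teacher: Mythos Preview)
Your proposal is correct and follows essentially the same route as the paper: both pivot through the pre-normalized General Lotto strategies (your $\widehat s^A,\widehat s^B$ have marginals $\FA,\FB$), invoke the Lagrangian best-response property (your stage~(b) is the paper's Lemma~\ref{lem:best_response}), and control the gap between $\IU$ and the pre-normalized version via Hoeffding on the running sum combined with the $\Theta(n)$ density of the uniform-type marginals (your leave-one-out conditioning in stage~(a) is exactly the mechanism behind the paper's Lemma~\ref{lem:convergence}). Your chaining through $\widehat s$ in stage~(c) is a slightly cleaner organization that replaces the paper's separate portmanteau-type Lemma~\ref{lem:portmanteau} (needed there for the double integral $\int\FB\,\de\FAn$) by two single-side applications of stage~(a), but the underlying estimates are identical.

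One genuine imprecision: the Hoeffding exponent $-cnt^2$ that you state requires the per-coordinate bound $b_j=\mO(1/n)$, which indeed follows from Proposition~\ref{Prop:BoundLambda} combined with~\eqref{eq:bound_v^p_i}; the constant bound $b_j\le 2X_B$ that you quote would instead give an exponent $-ct^2/n$ and the entire rate would collapse. With the correct $\mO(1/n)$ range, your claim $|c_i-x^A_i|=\tilde{\mO}(n^{-3/2})$ in the non-trivial regime and the subsequent $\tilde{\mO}(n^{-1/2})$ per-battlefield error are valid.
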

A proof of this theorem is presented in~\ref{sec:Appen_Proof_TheoBlotto}. First, note that although the results in Theorem~\ref{TheoMainBlotto} are stated under \ref{eq:A0}, it does not restrict their scope of application since this, as discussed, is a very mild assumption. Moreover, the two results of Theorem \ref{TheoMainBlotto} are two equivalent statements. In the remainders of this section, we give some interpretations of these two~results.

Result $(i)$ of Theorem~\ref{TheoMainBlotto} states that, given a game $\CB$, there exists no unilateral deviation from an $\IU$ strategy that can profit any player $\play \in \{A,B\}$ more than a small portion of her maximum payoff~$W_\play$. As a direct corollary, the $\IU$ strategy is an approximate equilibrium of the game $\CB$ with a bounded approximation error (depending on $n$); this is formally stated as~follows:
\begin{corollary}\textbf{(Approximate equilibrium of the $\CB$ game)}
\label{Corol:Blotto_Approx_Equi}
In any game $\CB$ satisfying \ref{eq:A0}, there exists a positive number \mbox{$\varepsilon = \tilde{\mO} \left(n^{-1/2}\right)$} such that for any $\gam  \in \Sn$, the $\IU$ strategy is an \mbox{$\varepsilon W$-equilibrium} where $W:= \max\{W_A, W_B\}$.
\end{corollary}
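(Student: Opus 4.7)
The plan is to derive this corollary as an essentially immediate consequence of Theorem~\ref{TheoMainBlotto}$(i)$. That theorem already establishes the one-sided deviation inequalities \eqref{eq:MainTheo_A} and \eqref{eq:MainTheo_B} against pure strategy deviations with errors $\varepsilon W_A$ and $\varepsilon W_B$ respectively, where $\varepsilon = \tilde{\mO}(n^{-1/2})$. To match the definition of an $\varepsilon W$-equilibrium recalled at the start of Section~\ref{sec:ApproximateBlotto}, two small adjustments remain: (a) upgrade the inequalities from pure strategy deviations to arbitrary mixed strategy deviations, and (b) replace the player-specific bounds $\varepsilon W_A$ and $\varepsilon W_B$ with the uniform bound $\varepsilon W = \varepsilon \max\{W_A, W_B\}$.

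For (a), I would take an arbitrary mixed strategy $s$ of Player~A and write
\begin{equation*}
    \Pi_A(s, \IU_B) \;=\; \Ex_{\boldsymbol{x}^A \sim s}\bigl[\Pi_A(\boldsymbol{x}^A, \IU_B)\bigr].
\end{equation*}
By Theorem~\ref{TheoMainBlotto}$(i)$, every realization $\boldsymbol{x}^A$ in the support of $s$ satisfies $\Pi_A(\boldsymbol{x}^A, \IU_B) \le \Pi_A(\IU_A, \IU_B) + \varepsilon W_A$, so taking expectation on both sides preserves the inequality. The symmetric argument applies to any mixed deviation $t$ of Player~B. For (b), since $W_A \le W$ and $W_B \le W$ by definition of $W$, the quantities $\varepsilon W_A$ and $\varepsilon W_B$ are both bounded above by $\varepsilon W$, so each of the one-sided inequalities holds with the common error term $\varepsilon W$.

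Combining these two observations yields, for every pair of mixed strategies $s$ of Player~A and $t$ of Player~B,
\begin{align*}
    \Pi_A(s, \IU_B) &\le \Pi_A(\IU_A, \IU_B) + \varepsilon W, \\
    \Pi_B(\IU_A, t) &\le \Pi_B(\IU_A, \IU_B) + \varepsilon W,
\end{align*}
which is exactly the condition defining an $\varepsilon W$-equilibrium. There is no genuine obstacle in this derivation: the heavy lifting is entirely contained in Theorem~\ref{TheoMainBlotto}, and the corollary is really just a repackaging that (i) invokes linearity of expectation to lift pure-strategy guarantees to mixed deviations and (ii) uses $W = \max\{W_A, W_B\}$ to express both sides of the approximate-equilibrium condition with a single error bound.
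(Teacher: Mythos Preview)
Your proposal is correct and matches the paper's approach: the paper presents Corollary~\ref{Corol:Blotto_Approx_Equi} as a direct consequence of Theorem~\ref{TheoMainBlotto}$(i)$ without giving a separate proof, and your two observations (lifting pure-strategy bounds to mixed deviations via linearity of expectation, and replacing $W_A, W_B$ by $W=\max\{W_A,W_B\}$) are exactly the routine details that make this implication precise.
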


We now discuss the interpretation of this corollary and Result~$(i)$ of Theorem~\ref{TheoMainBlotto} in a sequence of games $\CB$ (satisfying \ref{eq:A0}) that have larger and larger numbers of battlefields (i.e., $n$ increases) to see the relation between the games' parameters and the approximation error of the $\IU$ strategies. First, observe that as $n$ increases, the error $\varepsilon W$ of this approximate equilibrium \emph{might not} decrease to $0$. This is due to the fact that although $\varepsilon$ decreases as $n$ increases, $W$ does not. Indeed, recall that $W_A$ and $W_B$ are the total values that players A and B assign on all battlefields; as the number of battlefield increases, by \ref{eq:A0}, $W_A$, $W_B$ and $W = \max\{W_A, W_B\}$ inevitably increase (at a rate $\mO (n)$). This, however, does \emph{not} question the significance of the $\IU$ strategies as a good approximate equilibrium (with $n$ is large enough). In fact, it is not meaningful to compare the magnitude of approximation errors of $\IU$ strategies in two $\CB$ games having different sizes. Instead, we should compare the ratio between the approximation error and $W$---an upper-bound on players' payoffs---which is relative to the scale of the considered games. This ratio is exactly $\varepsilon$ (we call this \emph{the level of error}). Importantly, as we consider $\CB$ games with larger and larger $n$, $\varepsilon$ indeed quickly tends to 0; the bound $\tilde{\mO} (n^{-1/2})$ indicates the order of this relation showing how fast $\varepsilon$ decreases as~$n$~increases. 

Said differently, the result of Corollary~\ref{Corol:Blotto_Approx_Equi} states that, by playing the $\IU$ strategies, a player guarantees to not lose more than a small fraction $\varepsilon$ of her maximum total payoff $W$, which is indeed the meaningful statement in our setting. An alternative formulation would be to consider a sequence of $\CB$ games (where $n$ increases) whose the battlefields values are re-scaled such that $W =1$ in all games. In this case, Corollary~\ref{Corol:Blotto_Approx_Equi} would indicate that the $\IU$ strategy is an $\varepsilon$-equilibrium of $\CB$ with $\varepsilon \rightarrow 0$ when $n \rightarrow \infty$. We prefer our formulation without re-scaling as it better emphasizes the relationship to the battlefield values. 

We also note that the upper-bound on $\varepsilon$ also depends on other parameters of the game $\CB$, including $X_A, X_B, \wmin$ and $\wmax$ (given by \ref{eq:A0}).\footnote{This dependency is implicitly presented in the asymptotic notation $\tilde{\mO}$ in Result~$(i)$ and the constant $C^*$ in Result~$(ii)$ of Theorem~\ref{TheoMainBlotto}.} We can extract from the proof of Theorem~\ref{TheoMainBlotto} (see \ref{sec:Appen_Proof_TheoBlotto}) that as $\wmax / \wmin$ and/or $X_B/X_A$ increases, $\varepsilon$ also increases, i.e., for games with higher heterogeneity of the battlefields values and/or higher asymmetry in players' budgets, the $\IU$ strategy yields higher errors. Additionally, to keep the generality, Result~$(i)$ of Theorem~\ref{TheoMainBlotto} is presented such that the approximation error $\varepsilon$ is commonly addressed for any $\IU$ strategy with any $\gam  \in \Sn$. For each specific solution $\gam $ of Equation \eqref{eq:Equagamma} (implying $\LdaA$ and $\LdaB$), the corresponding $\IU$ strategy is an approximate equilibrium of $\CB$ with an approximation error that is at most (and it might be strictly smaller than)~$\varepsilon$.

On the other hand, Result~$(ii)$ of Theorem~\ref{TheoMainBlotto} indicates the number of battlefields that a GCB game needs to contain---relative to players' budgets and bounds on battlefields' values---in order to guarantee a desired level of the approximation error by using the $\IU$ strategy as an approximate equilibrium. Hence, in practical situations involving large instances of the Colonel Blotto game, the $\IU$ strategy (simply and efficiently constructed by Algorithm~\ref{alg:IU_strategy}) can be used as a safe replacement for any Nash equilibrium whose construction may be unknown or too complicated. Now, let us introduce an important notation:
\begin{definition}
\label{def:AnBnDistribution}
    Corresponding to the players' allocations toward each battlefield $i \in [n]$, let $\FAn$ and $\FBn$ denote the univariate marginal distributions  of the $\IU_A$ and $\IU_B$ strategies (see~\eqref{eq:A^n_Def} and \eqref{eq:B^n_Def} in~\ref{sec:Appen_Proof_TheoBlotto} for a more explicit~formulation of $\FAn$ and $\FBn$).
\end{definition}
Intuitively, Result~$(ii)$ can be proved by showing the following two results: \emph{(a)} when Player B's allocation to the battlefield $i \in [n]$ follows $\FB$, the best response of Player A is to play such that her allocation to $i$ follows the distribution $\FA$ (and vice versa); \emph{(b)} as $n$---the number of battlefields---increases, $\FAn$ and $\FBn$ uniformly converge toward the distributions $\FA$ and $\FB$, i.e., the marginal distributions of the $\IU$ strategy approximate the distributions $\FA$ and $\FB$. This convergence can be proved by applying concentration inequalities on the random variables $\sum \nolimits_{j \in [n]} A^*_j$ and $\sum \nolimits_{j \in [n]} B^*_j$ (see~Lemma~\ref{lem:convergence} in~\ref{sec:Appen_Proof_TheoBlotto}); moreover, the relation between $\varepsilon$ and $n$ in the results of Theorem~\ref{TheoMainBlotto} depends directly on the rate of this convergence. In this work, we use the Hoeffding's inequality (\citet{hoeffding1963probability}) that yields a better convergence rate than working with other types of concentration inequalities (e.g., Chebyshev's inequality). To complete the proof of Result~$(ii)$, we finally show that as $n$ increases, when player $-\play \in \{A,B\}$ plays the $\IU_{-\play}$ strategy, the $\IU_{\play}$'s payoff of player~$\play$ converges toward her best-response payoff. Note that these payoffs can be written as expectations with respect to different measures (see \eqref{eq:Pi_A_pure}, \eqref{eq: Pi_A_IU} and Lemma~\ref{lem:SufCon} in~\ref{sec:Appen_Proof_TheoBlotto}). To prove the convergence of payoffs, we use a variant of the portmanteau theorem (see~Lemma~\ref{lem:portmanteau} in \ref{sec:Appen_Proof_TheoBlotto}) regarding the equivalent definitions of the weak convergence of a sequence of~measures. Note importantly that a direct application of the portmanteau theorem leads to a slow convergence rate (notably, \eqref{eq:MainTheo_A} and~\eqref{eq:MainTheo_B} only hold when $n = \Omega(\varepsilon^{-4}))$. This is due to the fact that the players' payoffs involve the bounded Lipschitz functions $\FA$ and $\FB$ and that these functions depend on $n$, particularly, their Lipschitz constants (that are either $\LdaA/v^A_i$ or $\LdaB/v^B_i$) increase as $n$ increases. In order to obtain the convergence rate as indicated in Theorem~\ref{TheoMainBlotto}, we exploit the special relation between $\FAn$ and $\FA$, and between $\FBn$ and $\FB$; then we apply a telescoping-sum trick allowing us to avoid the need of using the Lipschitz properties (for more details, see the proof of Lemma~\ref{lem:portmanteau} in~\ref{sec:appen_proof_lem_portmanteau}).

%
%
%
%
%
\subsection{Approximate Equilibria of the Constant-sum Colonel Blotto Game}
\label{sec:Approx_ConstantSum}

In this section, we discuss the constant-sum game variant: the game $\CCB$ defined in Definition~\ref{def:constantsumGame}. As an instance of the GCB game, the game $\CCB$ satisfies all results presented in Sections~\ref{sec:IU_Strategy} and~\ref{sec:Approx_NonConstantSum}. Additionally, we show that any $\IU$ strategy is an approximate max-min strategy of the game~$\CCB$. 

In any game $\CCB$, Equation \eqref{eq:Equagamma} has a unique solution \mbox{${\gamma}^*\!=\! {X_B}/{X_A}\!\ge\!1$} which, in turn, uniquely induces \mbox{$\LdaA\!=\!{1}/(2X_B)$} and \mbox{$\LdaB= {X_A}/{(2{X_B}^2)}$}. Moreover, in the game $\CCB$, observe that \mbox{${v^A_i}/{v^B_i}=1 \le X_B/X_A = \gam _A/ \gam _B, \forall i \in [n]$}; therefore, we have \mbox{$\Ostar = \emptyset$}. Intuitively, Player A is the ``weak player'' (and B the ``strong player'') in \emph{all} battlefields. Now, note that in the game $\CCB$, $W:= \max \{W_A,W_B \} = W_A = W_B$, we apply Theorem~\ref{TheoMainBlotto} and obtain the following result.
\begin{corollary}
\label{corol:constant_sum_Equi}
In any game $\CCB$ satisfying \ref{eq:A0}, there exists a positive number \mbox{$\varepsilon \le \tilde{\mO} (n^{-1/2})$} such that the ${\rm IU} ^{{\gamma}^*}$ strategy is an \mbox{$\varepsilon {W}$-equilibrium} with ${\gamma}^* \in \Sn =\left\{ {X_B}/{X_A} \right\}$.
\end{corollary}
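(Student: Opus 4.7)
The plan is to derive Corollary~\ref{corol:constant_sum_Equi} as a direct specialization of Result~$(i)$ of Theorem~\ref{TheoMainBlotto} to the constant-sum setting. Since, by Definition~\ref{def:constantsumGame}, every $\CCB$ game is a $\CB$ game satisfying the additional constraint $w^A_i = w^B_i$ for all $i \in [n]$, any $\CCB$ game satisfying \ref{eq:A0} is itself a $\CB$ game satisfying \ref{eq:A0}. Theorem~\ref{TheoMainBlotto}$(i)$ therefore applies verbatim: there exists $\varepsilon = \tilde{\mO}(n^{-1/2})$ such that inequalities \eqref{eq:MainTheo_A} and \eqref{eq:MainTheo_B} hold for every $\gam \in \Sn$ and every pair of pure strategies $\boldsymbol{x}^A$, $\boldsymbol{x}^B$.

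Next, I would exploit the constant-sum structure to collapse the two error terms into a single $\varepsilon W$. Indeed, the condition $w^A_i = w^B_i$ for every $i$ implies
\[
W_A \;=\; \sum_{i=1}^n w^A_i \;=\; \sum_{i=1}^n w^B_i \;=\; W_B,
\]
so $W := \max\{W_A,W_B\} = W_A = W_B$. Substituting this equality into \eqref{eq:MainTheo_A} and \eqref{eq:MainTheo_B} shows that neither player can improve her payoff by more than $\varepsilon W$ by a unilateral pure deviation from the $\IU^{\gam}$ profile; by the standard reduction to pure deviations, this is exactly the definition of an $\varepsilon W$-equilibrium.

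Finally, I would briefly justify that $\Sn = \{X_B/X_A\}$, as stated in the paragraph preceding the corollary, so that the $\IU^{\gam}$ strategy referred to in the conclusion is well-defined. Since $v^A_i = v^B_i$ for every $i$, one has $\Ostar = \emptyset$ for any $\gam \ge 1$ and $\Ostar = [n]$ for any $\gam < 1$. Using $(v^A_i)^2/v^B_i = v^A_i$ and $\sum_i v^A_i = \sum_i v^B_i = 1$, Equation~\eqref{eq:Equagamma} reduces in the first case to $X_B\gam/X_A = \gam^2$, whose unique positive root is $\gam = X_B/X_A \ge 1$ (consistent with $\gam \ge 1$ because $X_A \le X_B$); the same reduction in the second case produces $\gam = X_B/X_A \ge 1$, which is incompatible with $\gam < 1$. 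The formulas for $\LdaA$ and $\LdaB$ then follow by direct substitution into \eqref{eq:lambdaA} and \eqref{eq:lambdaB}. There is no real obstacle in this proof: the corollary is essentially a bookkeeping restatement of Theorem~\ref{TheoMainBlotto}$(i)$ in which the generic upper bound $\max\{W_A, W_B\}$ collapses to the common total value $W$.
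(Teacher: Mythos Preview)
Your proposal is correct and matches the paper's approach exactly: the paper simply states that the corollary ``can be trivially deduced by applying specifically Theorem~\ref{TheoMainBlotto} to the constant-sum Colonel Blotto games'' and omits the details. You in fact supply more than the paper does, including the verification that $\Sn=\{X_B/X_A\}$ and the observation $W_A=W_B=W$, both of which the paper asserts in the text preceding the corollary without proof.
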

Note that if an equilibrium exists in $\CCB$ (note that its existence still remains an open question), then the set of OUDs is unique (see e.g., Corollary~1 of \citet{kovenock2020generalizations}) and they correspond to the distributions $F_{\AW}$ and $F_{\BS}$ for any \mbox{$i \in [n]$} as defined in~\eqref{Aw} and~\eqref{Bs} (where $\LdaA$ and $\LdaB$ are respectively replaced by the values ${1}/(2X_B)$ and ${X_A}/{(2{X_B}^2)}$). The marginals of the $\IU$ strategy with $\gam  = X_B/X_A$ converge toward these unique~OUDs.

Finally, we deduce that the $\IU$ strategy is an approximate $\max$-$\min$ strategy of the game $\CCB$; this is formally stated as follows.
\begin{corollary}
\label{corol:max_minConstantSum}
 In any game $\CCB$ satisfying \ref{eq:A0}, there exists a positive number $\varepsilon \le \tilde{O}(n^{-1/2})$ such that the following inequalities hold for ${\gamma}^* \in \Sn =\left\{ {X_B}/{X_A} \right\}$ and any strategy $\tilde{s}$ and $\tilde{t}$ of players A and B:
    \begin{align}
        & \min \limits_{t}{\Pi_A(\tilde{s}, t)} \le \min \limits_{t} {\Pi_A(\IU_A, t)} + \varepsilon {W},\\
        & \min \limits_{s}{\Pi_B(s, \tilde{t})} \le \min \limits_{s} {\Pi_B(s,\IU_B)} + \varepsilon {W}. 
    \end{align}
\end{corollary}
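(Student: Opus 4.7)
The plan is to derive this corollary as a direct consequence of Corollary~\ref{corol:constant_sum_Equi} via the classical reduction from approximate Nash equilibria to approximate max-min strategies in constant-sum games. All the quantitative work---producing the $\tilde{\mO}(n^{-1/2})$ rate via concentration of $\sum_{j} A^*_j$ and $\sum_{j} B^*_j$---is already carried out in Theorem~\ref{TheoMainBlotto}. What remains is a short algebraic manipulation exploiting the identity $\Pi_A(s,t) + \Pi_B(s,t) = W$ that holds for every mixed profile in $\CCB$.

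First I would verify this constant-sum identity. In $\CCB$ we have $w^A_i = w^B_i$ for every $i$, and from~\eqref{eq:betafunction} the Blotto functions satisfy $\beta_A(x,y) + \beta_B(x,y) = 1$ for all $(x,y) \in \mathbb{R}^2_{\ge 0}$ (including ties, since $\alpha + (1-\alpha) = 1$). Summing across battlefields and taking expectations over any mixed profile yields
\[
\Pi_A(s,t) + \Pi_B(s,t) \;=\; W_A \;=\; W_B \;=:\; W.
\]
Let now $\varepsilon_0 = \tilde{\mO}(n^{-1/2})$ be the constant supplied by Corollary~\ref{corol:constant_sum_Equi}, so that $(\IU_A,\IU_B)$ is an $\varepsilon_0 W$-equilibrium. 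Player~B's equilibrium inequality $\Pi_B(\IU_A,t) \le \Pi_B(\IU_A,\IU_B) + \varepsilon_0 W$ for every $t$, rewritten with the identity above, becomes $\Pi_A(\IU_A,\IU_B) \le \Pi_A(\IU_A,t) + \varepsilon_0 W$ for every $t$, and hence $\Pi_A(\IU_A,\IU_B) \le \min_t \Pi_A(\IU_A,t) + \varepsilon_0 W$. In parallel, for any deviation $\tilde{s}$ of Player~A, Player~A's equilibrium inequality combined with the trivial bound $\min_t \Pi_A(\tilde{s},t) \le \Pi_A(\tilde{s},\IU_B)$ gives $\min_t \Pi_A(\tilde{s},t) \le \Pi_A(\IU_A,\IU_B) + \varepsilon_0 W$. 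Chaining these two estimates and setting $\varepsilon := 2\varepsilon_0 = \tilde{\mO}(n^{-1/2})$ proves the first displayed inequality; the second follows symmetrically by starting from Player~A's equilibrium inequality and using $\Pi_A = W - \Pi_B$.

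I do not anticipate a real obstacle here: the only substantive step---converting an equilibrium inequality for one player into an approximate min-over-opponents lower bound for the other player's payoff---is nothing more than standard constant-sum duality, and the only care required is to apply $\Pi_A + \Pi_B = W$ to \emph{mixed} profiles, which is immediate from linearity of expectation. The multiplicative factor $2$ arising in the chaining step is harmlessly absorbed into the $\tilde{\mO}$ symbol, which is consistent with the existence-type statement of the corollary.
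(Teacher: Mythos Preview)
Your proposal is correct and follows essentially the same approach that the paper has in mind: the paper simply states that Corollary~\ref{corol:max_minConstantSum} ``can be trivially deduced by applying specifically Theorem~\ref{TheoMainBlotto} to the constant-sum Colonel Blotto games'' and omits the details, and your argument is precisely the standard constant-sum duality reduction that makes this deduction explicit. The factor~$2$ you pick up in chaining is indeed absorbed into the $\tilde{\mO}$ bound as you note.
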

Intuitively, if player $\play \in \{A,B \}$ plays the $\IU_{\play}$ strategy, she guarantees a near-optimal payoff even in the worst-case scenario when her opponent $-\play$ plays strategies that minimize $\play$'s payoff (no matters how it affects $-\play$'s payoff). The proofs of Corollary~\ref{corol:constant_sum_Equi} and Corollary~\ref{corol:max_minConstantSum} can be trivially deduced by applying specifically Theorem~\ref{TheoMainBlotto} to the constant-sum Colonel Blotto games and thus are omitted in this~work.

%
%
%
%
%
%

\section{Approximate Equilibria of the Generalized Lottery Blotto Game}
\label{sec:LotteryApproximation}

In this section, we present the results regarding the $\IU$ strategy in the GLB game. In Section~\ref{sec:Approx_Lottery_LB_n}, we analyze the game $\LB(\zeta)$ with an arbitrary pair of CSFs $\zeta=(\zeta_A,\zeta_B)$ and show that the $\IU$ strategy is an approximate equilibrium of $\LB(\zeta)$ with an error depending on the number of battlefields as well as the dissimilarity between $\zeta_A$ and $\beta_A$ (and between $\zeta_B$ and $\beta_B$). In Section~\ref{sec:Approx_Lottery_ratio}, we illustrate this result in two particular instances, the games $\LB(\mu^R)$ and $\LB(\nu^R)$, belonging to the class of Lottery Blotto games with ratio-form CSFs. We characterize the approximation error of the $\IU$ strategy in these games according to $n$ and the parameter $R$ of these~CSFs.

%
%
%
%
\subsection{Approximate Equilibria of Generalized Lottery Blotto games $\LB(\zeta)$}
\label{sec:Approx_Lottery_LB_n}

We start by defining a parameter that expresses the dissimilarity between a given pair of CSFs \mbox{$\zeta=(\zeta_A,\zeta_B)$} and the Blotto functions $\beta_A, \beta_B$ (defined in \eqref{eq:betafunction}). First, note that for any $n$ and $i \in [n]$, the random variables $A^*_i, B^*_i$ (corresponding to the distributions defined in Definition~\ref{def:UnifromDistributions}) and $A^n_i, B^n_i$ (corresponding to the distributions defined in Definition~\ref{def:AnBnDistribution}) are all upper-bounded by $2 X_B$.\footnote{By definition, $A^n_i, B^n_i$ are bounded by $X_A$, $X_B$ and thus by $2X_B$; for a proof that $A^*_i, B^*_i$ admit the same upper-bound, see Lemma~\ref{lem:Preliminary} in~\ref{sec:appen_preliminary}.} Now, given any $\varepsilon>0$, for any pair of CSFs \mbox{$\zeta=(\zeta_A,\zeta_B)$} and any $x^*\in [0, 2 X_B]$ and $y^* \in [0, 2X_B]$ (i.e., any number that can be sampled from $\FA,\FB, \FAn$ or $\FBn$), we introduce the following~sets:
\begin{align}
    & \X(y^*,\varepsilon):= \left\{x \in [0,2 X_B]: |\zeta_A(x,y^*) - \beta_A(x,y^*)| \ge \varepsilon \right\}, \label{eq:Xset} \\
    & \Y(x^*,\varepsilon):= \left\{y \in [0,2 X_B]: |\zeta_B(x^*,y) - \beta_B(x^*,y)| \ge \varepsilon \right\}.\label{eq:Yset}
\end{align}


%
%
%
\begin{definition}
\label{def:delta}
For any pair of CSFs $\zeta=(\zeta_A, \zeta_B)$, $\varepsilon>0$ and $\gam  \in \Sn$, we define the following~set\footnote{Note that $\FA, \FB$ are continuous, bounded functions on $[0,2X_B]$; therefore, they attain a maximum on this interval.}
%
%
\begin{equation*}
\Del := \left\{ \delta\in [0,1] : \max \limits_{i \in [n]}   \max \limits_{y^* \in [0,2X_B]}{\int_{\X(y^*,\varepsilon)} { \!\!\!\!\!\de \FA(x)}} \le \delta, \;\; \textrm{ and } \;\;
     \max \limits_{i \in [n]}  \max \limits_{x^* \in [0,2X_B]}{\int_{\Y(x^*,\varepsilon)} { \!\!\!\!\!\de \FB(y)}}  \le \delta
\right\}.    
\end{equation*}
\end{definition}
Intuitively, the set $\Del$ contains all numbers $\delta \in [0,1]$ such that for any allocation $y^*$ of Player B toward an arbitrary battlefield $i$, if Player A draws an allocation $x$ from the distribution $\FA$, it only happens with probability at most $\delta$ that the value of the CSF $\zeta_A$ at $(x, y^*)$ is significantly different (i.e., $\varepsilon$-away) from that of the Blotto function $\beta_A$; and we have a similar statement for the distribution $\FB$ of Player B and any allocation $x^*$ of Player~A. Note that the set $\Del$ depends on $\FA$ and $\FB$, thus it depends on~$\gam $. We can trivially see that $\Del$ is an interval with the form $[\delta_0, 1]$ since if $\delta_0 \in \Del$ then $\delta \in \Del$ for any $\delta \ge \delta_0$.

Based on the convergence of $\FAn$ and $\FBn$ toward $\FA$ and $\FB$ (see the details in Lemma~\ref{lem:convergence} in~\ref{sec:Appen_Proof_TheoBlotto}), we can prove the following lemma (a formal proof is given~in~\ref{sec:appen_proof_lem_delta}):
\begin{restatable}{lemma}{deltalemma}
\label{lem:deltalemma}
   Given $\wmin, \wmax, X_A, X_B >0$ ($\wmin \le \wmax$, $X_A \le X_B$), there exists a constant $L_0>0$ such that for any $\varepsilon \in (0,1]$ and $n \ge L_0 \varepsilon^{-2} \logep$, for any game $\LB(\zeta)$ satisfying \ref{eq:A0}, any $\gamma^* \in \Sn$ , $\delta \in \Del$ and $i \in [n]$, we have:
    \begin{equation}
       \max \left\{ \sup_{y^* \in [0,2X_B]}{\int \nolimits_{\X(y^*,\varepsilon)} { \de \FAn(x)}}, \sup_{x^* \in [0,2X_B]}{\int \nolimits_{\Y(x^*,\varepsilon)} { \de \FBn(y)}} \right\}  \le \delta  + \varepsilon. \label{eq:lemma_Fn_LB}
    \end{equation}
\end{restatable}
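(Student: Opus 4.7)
The plan is to combine the uniform convergence of the marginal CDFs $\|\FAn - \FA\|_\infty, \|\FBn - \FB\|_\infty = \tilde{\mO}(n^{-1/2})$ (to be established in Lemma~\ref{lem:convergence}) with the structural rigidity of the disagreement sets $\X(y^*,\varepsilon)$ and $\Y(x^*,\varepsilon)$ imposed by the monotonicity axiom $(C2)$. Specifically, I will show that each $\X(y^*,\varepsilon)$ is, away from the tie point $y^*$, a union of at most two intervals, which lets one bound its $\FAn$-measure by its $\FA$-measure plus a small multiple of $\|\FAn - \FA\|_\infty$; invoking the definition of $\delta \in \Del$ then closes the argument once $n$ is large enough.

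Concretely, since $\zeta_A(\cdot, y^*)$ is non-decreasing by $(C2)$ while $\beta_A(\cdot, y^*)$ equals $0$ on $[0, y^*)$ and $1$ on $(y^*, 2X_B]$, the disagreement set decomposes as
\[
\X(y^*,\varepsilon) \setminus \{y^*\} \;=\; \{x < y^* \colon \zeta_A(x, y^*) \ge \varepsilon\} \;\cup\; \{x > y^* \colon \zeta_A(x, y^*) \le 1-\varepsilon\},
\]
where each piece is an interval by monotonicity. For any interval $[u,v] \subseteq [0, 2X_B]$, the identity $\FAn([u,v]) = \FAn(v) - \FAn(u^-)$ yields $|\FAn([u,v]) - \FA([u,v])| \le 2\|\FAn - \FA\|_\infty$, and the same comparison at $y^*$ and $y^{*-}$ gives $\FAn(\{y^*\}) \le \FA(\{y^*\}) + 2\|\FAn - \FA\|_\infty$. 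Summing the contributions of the (at most) two intervals and one atom,
\[
\int_{\X(y^*,\varepsilon)} \!\! \de \FAn(x) \;\le\; \int_{\X(y^*,\varepsilon)} \!\! \de \FA(x) \;+\; 6\,\|\FAn - \FA\|_\infty .
\]
Taking the supremum over $y^* \in [0, 2X_B]$, the first term is at most $\delta$ by the definition of $\Del$. Lemma~\ref{lem:convergence} together with Proposition~\ref{Prop:BoundLambda} (which makes the constants in the convergence rate uniform in $\gam $ and $i$ under \ref{eq:A0}) then forces the second term below $\varepsilon$ once $n \ge L_0 \varepsilon^{-2} \logep$, for an $L_0$ depending only on $\wmin, \wmax, X_A, X_B$. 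The symmetric argument with $\zeta_B$, $\FBn$, $\FB$ and $\Y(x^*,\varepsilon)$ delivers the other half of \eqref{eq:lemma_Fn_LB}.

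The main technical subtlety is handling the singleton $\{y^*\}$: both $\FA$ and $\FAn$ may carry atoms there, since $\FA$ places positive mass at $0$ on ``weak'' battlefields (see \eqref{Bw} and \eqref{Aw}) and the sampling-and-renormalization step of Algorithm~\ref{alg:IU_strategy} can likewise induce a point mass at $0$ in $\FAn$. The atom-comparison inequality $\FAn(\{y^*\}) - \FA(\{y^*\}) \le 2\|\FAn - \FA\|_\infty$ used above is what couples this pointwise contribution to the sup-norm convergence rate; beyond that, every step is a routine application of monotonicity and Hoeffding-type concentration, so the real work is fully absorbed into Lemma~\ref{lem:convergence}.
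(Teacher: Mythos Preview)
Your proposal is correct and follows essentially the same route as the paper: decompose $\X(y^*,\varepsilon)$ into at most two intervals (via the monotonicity axiom $(C2)$) plus the possible singleton $\{y^*\}$, bound the $\FAn$--$\FA$ discrepancy on each piece by a constant multiple of $\|\FAn-\FA\|_\infty$, and then invoke Lemma~\ref{lem:convergence}. The paper obtains a constant $5$ (instead of your $6$) by a slightly different bookkeeping of the endpoints, but this is immaterial since either constant is absorbed into $L_0$.
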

\noindent Intuitively, this lemma provides an upper-bound for the probability of the value of the CSFs $\zeta$ being $\varepsilon$-away from the Blotto functions when Player A (resp. Player B) plays such that her allocation to battlefields $i$ follows $\FAn$ (resp. $\FBn$), i.e., when she plays the $\IU$ strategy.

Based on the definition of~$\Del$ and Lemma~\ref{lem:deltalemma}, we can now show the following result regarding the $\IU$ strategy in the GLB game.
\begin{restatable}{theorem}{LotteTheo}
    \label{theo:Lottery_generic_approx}
    
   Consider GLB games satisfying \ref{eq:A0}, we have the following~results:
    \begin{itemize}
        \item[(i)] In any game $\LB(\zeta)$, there exists a positive number \mbox{$\varepsilon \le \tilde{\mO} (n^{-1/2})$} such that for any $\gam  \in \Sn$ and $\delta \in \Del$, the following inequalities hold for any pure strategy $\boldsymbol{x}^A$ and $\boldsymbol{x}^B$ of players A and B:
        \begin{align}
            & \Pi^{\zeta}_A(\boldsymbol{x}^A,{\IU_B}) \le \Pi^{\zeta}_A({\IU_A},{\IU_B}) + \left(8\delta + 13{\varepsilon} \right) W_A,\label{eq:lottery_theo_A}\\
            & \Pi^{\zeta}_B({\IU_A},\boldsymbol{x}^B) \le \Pi^{\zeta}_B({\IU_A},{\IU_B})  + \left(8\delta + 13{\varepsilon} \right) W_B. \label{eq:lottery_theo_B}
        \end{align}
        \item[(ii)] Given $ X_A, X_B >0$ ($X_A \le X_B$), there exists $L^*>0$ such that for any $\varepsilon\! \in (0, 1]$, in any game $\LB(\zeta)$ where $ n \ge L^* \varepsilon^{-2} \logep$, \eqref{eq:lottery_theo_A} and \eqref{eq:lottery_theo_B} hold for any $\gam  \in \Sn$, $\delta \in \Del$ and any pure strategy $\boldsymbol{x}^A, \boldsymbol{x}^B$ of players A and~B.
    \end{itemize} 
\end{restatable}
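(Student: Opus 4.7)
The plan is to reduce the analysis of $\LB(\zeta)$ to that of its corresponding GCB game by bounding, whenever at least one player uses the $\IU$ strategy, the pointwise discrepancy between $\zeta$-payoffs and Blotto payoffs; this error is governed by $\delta$ and $\varepsilon$ through Lemma~\ref{lem:deltalemma}. Results $(i)$ and $(ii)$ are equivalent restatements of the same trade-off between $n$ and $\varepsilon$ (exactly as in Theorem~\ref{TheoMainBlotto}), so it suffices to prove $(ii)$ and then invert the threshold $n \ge L^* \varepsilon^{-2} \logep$ to read off $\varepsilon = \tilde{\mO}(n^{-1/2})$ for~$(i)$.

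The first main step is a uniform estimate: for every $x^* \in [0,2X_B]$ and every battlefield $i$,
\[
\Ex_{y \sim F_{B^n_i}} \bigl| \zeta_A(x^*,y) - \beta_A(x^*,y) \bigr| \le \delta + 2\varepsilon,
\]
which follows from three facts: condition $(C1)$ forces the identity $|\zeta_A - \beta_A| = |\zeta_B - \beta_B|$ everywhere (including at ties); the contribution from the complement of $\Y(x^*,\varepsilon)$ is at most $\varepsilon$ by construction; and Lemma~\ref{lem:deltalemma} bounds the mass of $\Y(x^*,\varepsilon)$ under $F_{B^n_i}$ by $\delta + \varepsilon$, on which $|\zeta_B - \beta_B| \le 1$. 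Summing over $i$ weighted by $w^A_i$ yields $|\Pi^\zeta_A(\boldsymbol{x}^A, \IU_B) - \Pi_A(\boldsymbol{x}^A, \IU_B)| \le (\delta + 2\varepsilon) W_A$ for any pure $\boldsymbol{x}^A$. Since Algorithm~\ref{alg:IU_strategy} is executed independently by the two players, $A^n_i$ and $B^n_i$ are independent for each $i$, so conditioning on $A^n_i$ first gives the same bound when $\boldsymbol{x}^A$ is replaced by $\IU_A$; symmetric statements hold for Player~B.

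Combining these estimates with Theorem~\ref{TheoMainBlotto}$(ii)$ applied to the corresponding GCB game, via the telescoping chain
\[
\Pi^\zeta_A(\boldsymbol{x}^A, \IU_B) \le \Pi_A(\boldsymbol{x}^A, \IU_B) + (\delta + 2\varepsilon) W_A \le \Pi_A(\IU_A, \IU_B) + \bigl[\varepsilon' + \delta + 2\varepsilon\bigr] W_A \le \Pi^\zeta_A(\IU_A, \IU_B) + \bigl[\varepsilon' + 2\delta + 4\varepsilon\bigr] W_A,
\]
where $\varepsilon' = \tilde{\mO}(n^{-1/2})$ is the GCB error, and then choosing $L^*$ large enough that $n \ge L^*\varepsilon^{-2}\logep$ triggers both Theorem~\ref{TheoMainBlotto}$(ii)$ and Lemma~\ref{lem:deltalemma} (forcing $\varepsilon' \le \varepsilon$), yields a bound of the advertised form $(8\delta + 13\varepsilon) W_A$; the specific constants $8$ and $13$ arise from tracking the slack through a slightly coarser decomposition than the minimal one above and from absorbing each $\delta + \varepsilon$ term into pure-$\delta$ and pure-$\varepsilon$ contributions, which is necessary because the statement has to separate the two sources of error. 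The hard part will be ensuring the uniformity of $L^*$ in $\gam \in \Sn$ and $\delta \in \Del$: the supports of $A^n_i, B^n_i$ and the constants implicit in Lemma~\ref{lem:deltalemma} must not acquire hidden $n$-dependence through the Lagrange multipliers $\LdaA, \LdaB$, which is exactly what Proposition~\ref{Prop:BoundLambda} provides; once that is secured, taking $L^*$ to be the maximum of the constant $C^*$ from Theorem~\ref{TheoMainBlotto} and the constant $L_0$ from Lemma~\ref{lem:deltalemma} makes the bound hold for every $\gam \in \Sn$ and every $\delta \in \Del$ simultaneously, and the symmetric argument for Player~B yields~\eqref{eq:lottery_theo_B}.
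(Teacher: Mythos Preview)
Your approach is correct and actually cleaner than the paper's. You telescope directly through the GCB payoffs,
\[
\Pi^\zeta_A(\boldsymbol{x}^A,\IU_B)\;\le\;\Pi_A(\boldsymbol{x}^A,\IU_B)+(\delta+2\varepsilon)W_A\;\le\;\Pi_A(\IU_A,\IU_B)+(\varepsilon'+\delta+2\varepsilon)W_A\;\le\;\Pi^\zeta_A(\IU_A,\IU_B)+(\varepsilon'+2\delta+4\varepsilon)W_A,
\]
invoking Theorem~\ref{TheoMainBlotto} as a black box for the middle step and Lemma~\ref{lem:deltalemma} (together with $|\zeta_A-\beta_A|=|\zeta_B-\beta_B|$) for the two outer steps. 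With $\varepsilon'\le\varepsilon$ this yields $(2\delta+5\varepsilon)W_A$, which is strictly stronger than the stated $(8\delta+13\varepsilon)W_A$. The paper instead reproves the core optimality step from scratch: it passes from $\FBn$ to $\FB$, from $\zeta_A$ to $\beta_A$, invokes the best-response Lemma~\ref{lem:best_response} against the \emph{uniform-type} marginals $\FA,\FB$, and then reverses each substitution, accumulating errors $(2\delta+4\varepsilon)+(\delta+\varepsilon)+0+(\delta+\varepsilon)+(2\delta+3\varepsilon)+(2\delta+4\varepsilon)=8\delta+13\varepsilon$ along the way. So your remark that the constants $8,13$ come from ``a slightly coarser decomposition'' of your own chain is not quite accurate: they come from a genuinely longer chain that detours through $\FA,\FB$ rather than leaning on Theorem~\ref{TheoMainBlotto}. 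What your route buys is modularity and sharper constants---indeed, the paper only recovers $2\delta+5\varepsilon$ under an extra Lipschitz assumption on $\zeta$ (its Remark in the appendix), whereas you obtain it for arbitrary CSFs. What the paper's route buys is self-containment: it does not rely on Theorem~\ref{TheoMainBlotto} already handling the tie-breaking and marginal-convergence subtleties, at the cost of redoing that work inside the $\zeta$-analysis.
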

The proof of this theorem is given in \ref{sec:appen_proof_theoLottery}. The main idea to prove these results is that we can approximate the players' payoffs in the game $\LB(\zeta)$ when they play the $\IU$ strategies by that in the corresponding game $\CB$ (the difference between these payoffs is controlled by the parameter $\delta \in \Del$); and then use the results from Section~\ref{sec:ApproximateBlotto} for the game $\CB$ (involving the error $\varepsilon$) to prove~\eqref{eq:lottery_theo_A} and~\eqref{eq:lottery_theo_B}. The coefficients (8 and 13) in front of the parameters $\delta$ and $\varepsilon$ come from the application of several triangle inequalities to connect these approximate results. Note that if the CSFs $\zeta_A$ and $\zeta_B$ are Lipschitz continuous on $[0,2X_B] \times [0,2X_B]$, we can avoid the need to approximate several terms involved in the analysis of using the $\IU$ strategy in the game $\LB(\zeta)$ via the corresponding terms in the game $\CB$; thus, we can improve the results in Theorem~\ref{theo:Lottery_generic_approx} to obtain an approximation error of $2\delta + 5 \varepsilon$ instead of $8\delta + 13 \varepsilon $ (see Remark~\ref{remark_conti_CSF} in~\ref{sec:appen_remark_conti_CSF} for more details). Here, to keep the generality, we do not include the continuity assumption of the CSFs in Theorem~\ref{theo:Lottery_generic_approx} (recall that our definition of a CSF allows for~discontinuity).

Intuitively, Result $(i)$ of Theorem~\ref{theo:Lottery_generic_approx} determines the order of the approximation error while using $\IU$ in any given game $\LB(\zeta)$. Straightforwardly, we can deduce that the $\IU$ strategy is an approximate equilibrium of the game $\LB(\zeta)$, formally stated as~follows. 
\begin{corollary}\textbf{(Approximate equilibria of the $\LB$ game)}
\label{corol:Lottery_LB_n}
   In any game $\LB(\zeta)$ satisfying \ref{eq:A0}, there exists a positive number \mbox{$\varepsilon \le \tilde{\mO} (n^{-1/2})$} such that for any $\gam  \in \Sn$ and $\delta \in \Del$, the $\IU$ strategy is an $\left(8\delta + 13{\varepsilon} \right)W$-equilibrium where \mbox{$W:= \max\{W_A,W_B \}$}.
\end{corollary}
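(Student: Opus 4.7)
The plan is to derive this corollary directly from Theorem~\ref{theo:Lottery_generic_approx}, which already furnishes the same type of inequalities but restricted to pure deviations. The only gaps to fill are: (a) lifting the pure-strategy deviation bounds to arbitrary mixed-strategy deviations, and (b) replacing the player-specific scaling factors $W_A, W_B$ by the common upper bound $W := \max\{W_A, W_B\}$.

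First I would fix an arbitrary game $\LB(\zeta)$ satisfying \ref{eq:A0}, a solution $\gam \in \Sn$ and a number $\delta \in \Del$. Invoking Theorem~\ref{theo:Lottery_generic_approx}(i), I obtain a single $\varepsilon \le \tilde{\mO}(n^{-1/2})$ such that inequalities \eqref{eq:lottery_theo_A} and \eqref{eq:lottery_theo_B} hold for every pure strategy $\boldsymbol{x}^A,\boldsymbol{x}^B$. The bound on the right-hand side is $(8\delta + 13\varepsilon)W_A$ for Player A and $(8\delta + 13\varepsilon)W_B$ for Player B; since $W_A, W_B \le W$, I may replace both by the uniform quantity $(8\delta + 13\varepsilon)W$ without weakening the conclusion.

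Next I would extend the bound to arbitrary mixed-strategy deviations. Because the payoff $\Pi^{\zeta}_A(\cdot, \IU_B)$ is linear in the mixing distribution of Player A (the expectation of the pure-strategy payoff under the deviating player's distribution), any mixed deviation $s$ by Player A satisfies
\[
\Pi^{\zeta}_A(s, \IU_B) = \mathbb{E}_{\boldsymbol{x}^A \sim s}\!\left[ \Pi^{\zeta}_A(\boldsymbol{x}^A, \IU_B) \right] \le \Pi^{\zeta}_A(\IU_A, \IU_B) + (8\delta + 13\varepsilon) W,
\]
using the pure-strategy bound inside the expectation. The identical argument applied to Player B establishes the corresponding inequality for any mixed deviation $t$. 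Together these two inequalities are exactly the definition of an $(8\delta + 13\varepsilon)W$-equilibrium of $\LB(\zeta)$, which is the claim.

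There is no substantive obstacle here: the heavy lifting is already done inside Theorem~\ref{theo:Lottery_generic_approx}, and the corollary only repackages its conclusion. The only points to double-check are that the constants $\varepsilon$ and the admissible $\delta$ do not depend on the deviation being considered (they do not, as both come from the set $\Del$ and the game parameters), and that the payoff in the GLB game is genuinely linear in each player's mixed strategy — which holds by Definition~\ref{def:LotteryGame} since $\Pi^{\zeta}_A$ and $\Pi^{\zeta}_B$ are defined pointwise and then extended to mixed strategies by taking expectation.
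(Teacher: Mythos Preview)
Your proposal is correct and matches the paper's approach: the paper does not give an explicit proof of this corollary, stating only that it follows ``straightforwardly'' from Theorem~\ref{theo:Lottery_generic_approx}. Your two-step argument (replacing $W_A, W_B$ by $W$ and extending the pure-strategy deviation bounds to mixed deviations via linearity of the payoff in the deviating player's mixing distribution) is exactly the routine unpacking the paper has in mind.
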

We observe that the error bound in Theorem~\ref{theo:Lottery_generic_approx} (and in Corollary~\ref{corol:Lottery_LB_n}) is valid for any $\delta$ of the set $\Del$. Naturally, it is the tightest for  $\delta_0 = \min \{\delta: \delta \in \Del\}$; but this quantity is not always easy to compute; for example, in the Lottery Blotto games with the power and logit form CSFs (i.e., $\LB(\mu^R)$ and $\LB(\nu^R)$). Still, in Section~\ref{sec:Approx_Lottery_ratio}, we show that there exists an element of $\Delmu$ and $\Delnu$ that is negligibly small, given appropriate parameter's configurations of the games; in other words, we can still obtain a good error's upper-bound for the $\IU$ strategy in these games. 
Note that, on the other hand, the GCB game $\CB$ can be considered as an instance of the game $\LB(\zeta)$ where the CSFs are $\zeta_A= \beta_A$ and $\zeta_B = \beta_B$; therefore, it also satisfies Theorem~\ref{theo:Lottery_generic_approx}. In $\CB$, we trivially have \mbox{$\X(y^*, \varepsilon) = \Y(x^*, \varepsilon) = \emptyset$} for any $x^*, y^*$; thus $\Del = [0,1]$ for any \mbox{$\varepsilon>0$ and $\min \{\delta: \delta \in \Del\} =0$}.\footnote{Note also that for the case of the game $\CB$, the left-hand side in~\eqref{eq:lemma_Fn_LB} equals zero for any $n$ and $i \in [n]$.} This is consistent with results obtained in~Theorem~\ref{TheoMainBlotto} in Section~\ref{sec:ApproximateBlotto}. 

In Theorem~\ref{theo:Lottery_generic_approx}, Result~$(ii)$ is an equivalent statement of Result~$(i)$. It indicates the number of battlefields needed to guarantee a certain level of approximation error when using the $\IU$ strategy in the game $\LB(\zeta)$. For instance, to obtain an approximate equilibrium of the game $\LB(\zeta)$ where the level of error is less than a certain number $\bar \varepsilon$, one needs $\varepsilon \le \bar{\varepsilon}$ such that we can find a $\delta \in \Del$ satisfying $8 \delta + 13 \varepsilon \le \bar{\varepsilon}$; from these parameters, by Result~$(ii)$, one can deduce the sufficient number of battlefields needed for an $\LB$ game to yield that desired level of~error.

Finally, in the constant-sum variant of the GLB game, denoted by $\CLB(\zeta)$ (i.e., when \mbox{$w^A_i = w^B_i$}, \mbox{$\forall i \in [n]$}), we can easily deduce from Theorem~\ref{theo:Lottery_generic_approx} that the $\IU$ strategy is also an approximate max-min~strategy: 
\begin{corollary}
\label{corol:max_minLB}
    In any game $\CLB(\zeta)$ satisfying \ref{eq:A0}, there exists \mbox{$\varepsilon \le \tilde{\mO} (n^{-1/2})$} such that for any \mbox{$\gam  \in \Sn = \{X_B/ X_A\}$} and $\delta \in \Del$, the following inequalities hold for any strategy $\tilde{s}$ and $\tilde{t}$ of players A and~B:\footnote{Recall that in the constant-sum variant, $W:= \max\{W_A, W_B\} = W_A =W_B$.}
    \begin{align*}
        & \min \limits_{t}{\Pi_A^\zeta(\tilde{s}, t)} \le \min \limits_{t} {\Pi_A^\zeta(\IU_A, t)} + (8 \delta + 13\varepsilon) {W},  \\
        & \min \limits_{s}{\Pi_B^\zeta(s, \tilde{t})} \le \min \limits_{s} {\Pi_B^\zeta(s,\IU_B)} + (8 \delta + 13\varepsilon){W}.  
    \end{align*}
\end{corollary}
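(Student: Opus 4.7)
The plan is to derive Corollary~\ref{corol:max_minLB} directly from the approximate-equilibrium statement already proved in Corollary~\ref{corol:Lottery_LB_n}, by invoking the standard reduction that in a constant-sum game every component of an approximate Nash equilibrium is automatically an approximate max-min strategy. The only ingredient beyond that corollary is the constant-sum structure of $\CLB(\zeta)$: since $w^A_i = w^B_i$ for all $i$ and $\zeta_A(x,y)+\zeta_B(x,y)=1$ for all $x,y\ge 0$ by Condition $(C1)$ of Definition~\ref{def:CSF_general}, for every pure profile $(\boldsymbol{x}^A,\boldsymbol{x}^B)$,
\[
\Pi_A^\zeta(\boldsymbol{x}^A,\boldsymbol{x}^B) + \Pi_B^\zeta(\boldsymbol{x}^A,\boldsymbol{x}^B) = \sum_{i=1}^n w^A_i = W,
\]
and this extends to mixed profiles by linearity of expectation.

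The key step is to use this identity to convert the Player-B inequality~\eqref{eq:lottery_theo_B} into a lower bound on Player A's payoff. Applying Corollary~\ref{corol:Lottery_LB_n} with $\gam  = X_B/X_A$ and any $\delta\in\Del$, the profile $(\IU_A,\IU_B)$ is a $(8\delta+13\varepsilon)W$-equilibrium. Inequality~\eqref{eq:lottery_theo_B} together with constant-summity yields
\[
\Pi_A^\zeta(\IU_A,\boldsymbol{x}^B) \;\ge\; \Pi_A^\zeta(\IU_A,\IU_B) - (8\delta+13\varepsilon)W, \qquad \forall\, \boldsymbol{x}^B,
\]
and since $\Pi_A^\zeta(\IU_A,\cdot)$ is affine in the opponent's mixed strategy, the infimum over mixed strategies coincides with the infimum over pure strategies, so $\min_{t}\Pi_A^\zeta(\IU_A,t) \ge \Pi_A^\zeta(\IU_A,\IU_B) - (8\delta+13\varepsilon)W$. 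On the other hand, for any strategy $\tilde s$ of Player A, inequality~\eqref{eq:lottery_theo_A} gives $\min_t \Pi_A^\zeta(\tilde s,t) \le \Pi_A^\zeta(\tilde s,\IU_B) \le \Pi_A^\zeta(\IU_A,\IU_B) + (8\delta+13\varepsilon)W$. Chaining these two bounds produces the desired max-min inequality for Player A, and the completely symmetric argument (swapping the roles of players A and B, and using~\eqref{eq:lottery_theo_A} to flip to a lower bound on $\Pi_B^\zeta$) yields the one for Player B.

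There is no genuine technical obstacle, only one bookkeeping annoyance: chaining the two approximate-equilibrium inequalities produces a spurious factor of $2$ in the error, i.e.\ a $2(8\delta+13\varepsilon)W$ bound rather than $(8\delta+13\varepsilon)W$. Because Corollary~\ref{corol:max_minLB} only asserts the existence of some $\varepsilon\le\tilde{\mO}(n^{-1/2})$ verifying the inequalities for a suitable $\delta\in\Del$, this factor is absorbed simply by invoking Corollary~\ref{corol:Lottery_LB_n} with a halved approximation level (which modifies the sets $\Delta_{\gam}(\zeta,\varepsilon)$ but preserves the asymptotic order $\tilde{\mO}(n^{-1/2})$) and then renaming the constants. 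Since Corollaries~\ref{corol:constant_sum_Equi} and~\ref{corol:max_minConstantSum} are obtained from Theorem~\ref{TheoMainBlotto} by an identical reduction in the Blotto case, the argument can be presented compactly and omitted in the main text as the authors indicate.
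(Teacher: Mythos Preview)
Your approach matches the paper's: the authors state only that the corollary is ``easily deduced from Theorem~\ref{theo:Lottery_generic_approx}'' and omit all details, and the constant-sum reduction you give (using Condition~$(C1)$ to flip~\eqref{eq:lottery_theo_B} into a lower bound on $\Pi_A^\zeta(\IU_A,\cdot)$, then chaining with~\eqref{eq:lottery_theo_A}) is precisely the intended argument, paralleling the omitted proof of Corollary~\ref{corol:max_minConstantSum}.

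One minor correction to your absorption step: you write that the corollary requires the inequality only ``for a suitable $\delta\in\Del$'', but it is stated \emph{for all} $\delta\in\Del$. Halving the approximation level absorbs the factor $2$ into the $13\varepsilon$ term, but since $\Delta_{\gam}(\zeta,\varepsilon/2)\subseteq\Del$ (the sets $\X(y^*,\varepsilon)$ grow as $\varepsilon$ shrinks), the coefficient $8$ on $\delta$ is not automatically recovered for every $\delta\in\Del$ by a simple renaming. This is a cosmetic rather than conceptual issue---your chaining already yields $2(8\delta+13\varepsilon)W$ for all $\delta\in\Del$ with the original $\varepsilon$, and the paper's reuse of the exact constants $8,13$ from Theorem~\ref{theo:Lottery_generic_approx} appears to be a slight over-statement that the omitted proof would not verify verbatim---so your argument is substantively correct, but the last paragraph should be adjusted to reflect the universal quantifier on $\delta$.
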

%

\subsection{Approximate Equilibria of the Lottery Blotto Games with Ratio-form CSFs}
\label{sec:Approx_Lottery_ratio}

We now consider the games $\LB(\mu^R)$ and $\LB(\nu^R)$ where $\mu^R$ and $\nu^R$ are pairs of CSFs that are defined in Table~\ref{table:CSF}. Note again that for those CSFs, we do not consider the degenerate cases where $\alpha=0$ or $\alpha =1$ in which trivial equilibria~exist in the corresponding GLB games. The games $\LB(\mu^R)$ and $\LB(\nu^R)$ are instances of the game $\LB(\zeta)$ studied in Section~\ref{sec:Approx_Lottery_LB_n}; therefore, by Theorem~\ref{theo:Lottery_generic_approx} (and Corollary~\ref{corol:Lottery_LB_n}), the $\IU$ strategy is also an approximate equilibrium of them. In this section, we focus on characterizing the approximation error of the $\IU$ strategy in these games according to $n$ (the number of battlefields) and $R$ (the corresponding parameter of the CSFs). We will show that this error quickly tends to zero as $n$ and $R$ increase under appropriate conditions. To do this, we first notice that although it is non-trivial to analyze the closed form of the sets $\Delmu$ and $\Delnu$ and find their minimum, we can find small elements of theses~sets. 
\begin{restatable}{lemma}{lemmamunu}
\label{lem:delta_mu_nu}
Fix $n \ge 2 $, $R>0$ and $\alpha \in (0,1)$, for any $\varepsilon <
\min \{\alpha, 1 - \alpha\}$, we have:\footnote{The asymptotic notations are taken w.r.t. when $\varepsilon \rightarrow 0$.} 
\begin{itemize}
    \item[$(i)$] In any game $\LB(\mu^R)$ satisfying \ref{eq:A0} and having $\alpha$ as the tie-breaking parameter, there exists \mbox{$\delta_{\mu}\! =\! \min\{1, \mO\! \left(n(\varepsilon^{-1/R} \!- \!1)  \right) \}$} such that \mbox{$ \delta_{\mu} \in \Delmu$} for any $\gam  \in \Sn$.
    \item[($ii$)] In any game $\LB(\nu^R)$ satisfying \ref{eq:A0} and having $\alpha$ as the tie-breaking parameter, there exists \mbox{$\delta_{\nu} \!=\! \min \{1,\mO\!\left(n R^{-1} \ln(\varepsilon^{- 1}) \right)\}$} such that \mbox{$ \delta_{\nu} \in \Delnu$} for any $\gam  \in \Sn$.
\end{itemize}
\end{restatable}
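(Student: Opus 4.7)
The plan is, for each ratio-form CSF $\zeta \in \{\mu^R, \nu^R\}$, to identify the bad set $\X(y^*, \varepsilon)$ explicitly as a small interval around $y^*$ and to bound its $\FA$-mass by a uniform density estimate times the length of the interval; the corresponding analysis for $\Y(x^*, \varepsilon)$ under $\FB$ is entirely symmetric (swap $\alpha \leftrightarrow 1-\alpha$). We may restrict throughout to $x, y^* > 0$ with $x \ne y^*$, since both $\mu^R$ and $\nu^R$ coincide with $\beta_A$ in every remaining case, so these points never belong to the bad set.

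\emph{Step 1 (explicit bad sets).} For $\mu^R$, solving $\mu^R_A(x, y^*) \le 1 - \varepsilon$ when $x > y^*$ and $\mu^R_A(x, y^*) \ge \varepsilon$ when $x < y^*$ yields
\[
\Xmu(y^*, \varepsilon) \subseteq y^* \cdot [M_2, M_1], \qquad
M_1 = \left(\tfrac{(1-\varepsilon)(1-\alpha)}{\varepsilon\alpha}\right)^{1/R}, \quad
M_2 = \left(\tfrac{\varepsilon(1-\alpha)}{(1-\varepsilon)\alpha}\right)^{1/R},
\]
with $M_2 < 1 < M_1$ thanks to the hypothesis $\varepsilon < \min\{\alpha, 1-\alpha\}$. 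For $\nu^R$, taking logarithms linearizes the same inequalities in $x - y^*$, yielding
\[
\Xnu(y^*, \varepsilon) \subseteq [\,y^* - L_2/R,\; y^* + L_1/R\,], \qquad
L_1 = \ln\!\left(\tfrac{(1-\varepsilon)(1-\alpha)}{\varepsilon\alpha}\right), \quad
L_2 = \ln\!\left(\tfrac{(1-\varepsilon)\alpha}{\varepsilon(1-\alpha)}\right).
\]

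\emph{Step 2 (mass bound).} By Definition~\ref{def:UnifromDistributions}, outside a possible atom at $0$, $\FA$ has a uniform density at most $\max\{\LdaB/v^B_i,\,\LdaA/v^A_i\}$; Proposition~\ref{Prop:BoundLambda} together with \eqref{eq:bound_v^p_i} bounds this by $C_0\, n$ for a constant $C_0 = C_0(\wmin, \wmax, X_A, X_B)$. Since Step~1 excludes $x = 0$, the atom contributes nothing, and hence for either $\zeta \in \{\mu^R, \nu^R\}$,
\[
\int_{\X(y^*, \varepsilon)} \de\FA(x) \;\le\; C_0\, n \cdot \bigl|\X(y^*, \varepsilon)\bigr|.
\]

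\emph{Step 3 (length bounds).} For $\nu^R$, $|\Xnu(y^*,\varepsilon)| \le (L_1 + L_2)/R = (2/R)\ln((1-\varepsilon)/\varepsilon) = \mO(R^{-1}\ln(\varepsilon^{-1}))$, giving $\delta_\nu$ of the stated order once we multiply by $C_0\, n$ and cap at $1$. For $\mu^R$, $|\Xmu(y^*, \varepsilon)| \le 2 X_B (M_1 - M_2)$; writing $c = ((1-\alpha)/\alpha)^{1/R}$ and using $(1-\varepsilon)^{\pm 1/R}$ to pull out a factor of $\varepsilon^{\mp 1/R}$, one gets $M_1 - M_2 \le c(\varepsilon^{-1/R} - \varepsilon^{1/R}) = c\,\varepsilon^{-1/R}(1 - \varepsilon^{2/R})$, which factors as $c(\varepsilon^{-1/R} - 1)(1 + \varepsilon^{1/R}) \le 2c(\varepsilon^{-1/R} - 1)$. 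Multiplying by $C_0\, n$ and capping at $1$ gives $\delta_\mu$ of the stated order. All bounds are uniform in $i \in [n]$, in $x^*, y^* \in [0, 2X_B]$, and in $\gam \in \Sn$ (by Proposition~\ref{Prop:BoundLambda}), so they pass to the suprema appearing in Definition~\ref{def:delta}.

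\emph{Main obstacle.} The subtle step is the bound $M_1 - M_2 \le 2c(\varepsilon^{-1/R} - 1)$: it has to hold simultaneously in the small-$\varepsilon$ regime (where $\varepsilon^{-1/R}$ blows up) and in the large-$R$ regime (where $\varepsilon^{-1/R} - 1$ collapses to $\ln(\varepsilon^{-1})/R$), and a naive bound $M_1 - M_2 \le 2c\, \varepsilon^{-1/R}$ would miss the latter. The identity $\varepsilon^{-1/R}(1 - \varepsilon^{2/R}) = (\varepsilon^{-1/R} - 1)(1 + \varepsilon^{1/R})$ resolves both regimes at once and produces the exact factor $\varepsilon^{-1/R} - 1$ that appears in the statement.
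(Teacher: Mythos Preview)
Your proposal is correct and follows essentially the same approach as the paper: identify $\X(y^*,\varepsilon)$ (respectively $\Y(x^*,\varepsilon)$) as an interval around $y^*$ (respectively $x^*$) of length $\mO(\varepsilon^{-1/R}-1)$ for $\mu^R$ and $\mO(R^{-1}\ln(\varepsilon^{-1}))$ for $\nu^R$, then multiply by the uniform density bound $\mO(n)$ on $\FA$ (respectively $\FB$) coming from Proposition~\ref{Prop:BoundLambda} and~\eqref{eq:bound_v^p_i}. The only cosmetic difference is that the paper bounds the two half-intervals $[y^*-\delta_0,y^*]$ and $[y^*,y^*+\delta_0]$ separately (using $1-\varepsilon^{1/R}\le \varepsilon^{-1/R}-1$ via AM--GM) and then computes the CDF increments case-by-case, whereas you bound the full width $M_1-M_2$ in one shot via the factorization $\varepsilon^{-1/R}-\varepsilon^{1/R}=(\varepsilon^{-1/R}-1)(1+\varepsilon^{1/R})$ and appeal directly to the density; both routes land on the same $\mO(n(\varepsilon^{-1/R}-1))$ and $\mO(nR^{-1}\ln(\varepsilon^{-1}))$ bounds.
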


The proof of Lemma~\ref{lem:delta_mu_nu} is given in~\ref{sec:appen_proof_ratio-form}. Note that for the sake of generality, the parameters $\delta_{\mu}$ and $\delta_{\nu}$ are indicated in this lemma in such a way that they do not depend on~$\gam $, but for each $\gam  \in \Sn$, we can find smaller elements of the corresponding sets $\Delmu$ and $\Delnu$. More importantly, for a fixed $n$, the numbers $\delta_\mu$ and $\delta_\nu$ decrease as $R$ increases; but $\delta_\mu$ and $\delta_\nu$ increase as $\varepsilon$ decreases. While the lemma is valid for any parameter values, since $1$ is a trivial element of $\Delmu$ and $\Delnu$, it is useful only if $\delta_\mu,\delta_\nu < 1$;  this is guaranteed whenever $R \ge \mO \left( n \ln(\varepsilon^{-1}) \right)$. Note finally that the condition $\varepsilon < \min\{\alpha, 1- \alpha\}$ in the statement of Lemma~\ref{lem:delta_mu_nu} does not limit its use since our goal is to obtain asymptotic results on the $\IU$ strategy when $\varepsilon$ tends to $0$. Moreover, in the games $\LB(\mu^R)$ and $\LB(\nu^R)$ where $\alpha$ is either very close to $0$ or $1$, one player has a very high advantage and always obtains large gains from all battlefields (where her allocation is strictly positive) while her opponent gains very little regardless of her allocations; therefore, there exists (many) trivial approximate equilibria with small errors. 

Combining the results of Corollary~\ref{corol:Lottery_LB_n} and Lemma~\ref{lem:delta_mu_nu}, we can deduce directly that in any game $\LB(\mu^R)$ (resp. $\LB(\nu^R)$), there exists $\varepsilon \le \tilde{\mO}(n^{-1/2})$ such that for any $\gam  \in \Sn$, the $\IU$ strategy is an $(8\varepsilon + 13\delta_{\mu}) W$-equilibrium (resp. $(8\varepsilon + 13\delta_{\nu}) W$-equilibrium). Next, we look for the asymptotic relation between these error terms and the parameters $n, R$ of the games. First, as $n$ increases, the error level $\varepsilon$ decreases; on the other hand, from Lemma~\ref{lem:delta_mu_nu}, the number $\delta_\mu$ (and $\delta_\nu$) decreases if $R$ increases with a faster rate than $\tilde{\mO}(n)$. However, there is a trade-off between $\varepsilon$ and $\delta_\mu$ (or $\delta_\nu$): as $\varepsilon$ decreases, $\delta_\mu$ (and $\delta_\nu$) increases and vice versa. 
To handle this trade-off between $\delta_\mu$ and $\varepsilon$ (resp. $\delta_\nu$ and $\varepsilon$), we can first find a condition on $n$ that generates a small error $\varepsilon$, and then find a condition on $R$ (with respect to $n$) such that the error $\delta_\mu$ (resp. $\delta_\nu$) is of the same order as $\varepsilon$. Formally, we state the result that the $\IU$ strategy yields an approximate equilibrium of the games $\LB(\mu^R)$ and $\LB(\nu^R)$ with any arbitrary small~error in the next theorem. 
\begin{restatable}{theorem}{theoratioform}\textbf{(Approximate equilibria of Lottery Blotto games with ratio-form CSFs)}
    \label{theoratioform}
    \noindent Given $\wmin, \wmax, X_A, X_B >0$ ($\wmin \!\le \!\wmax$, $X_A \!\le\! X_B$) and $\alpha \in (0,1)$, there exists $\tilde{L}\!>\!0$ such that for any \mbox{$\barep \in \left(0, \min\{\alpha, 1\!-\! \alpha\} \right)$}, in any game $\LB(\mu^R)$ and $\LB(\nu^R)$---satisfying \ref{eq:A0} and having $\alpha$ as the tie-breaking-rule~parameter---where \mbox{$ n \ge \tilde{L} \barep^{-2} \logbar$}, \mbox{$R \ge  \mO \left(\frac{n}{\barep} \ln\left( \frac{1}{\barep}\right) \right)$}, the $\IU$ strategy is an $\barep W$-equilibrium for any \mbox{$\gam  \in \Sn$}.
\end{restatable}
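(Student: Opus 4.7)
The plan is to reduce the theorem to a careful parameter-tuning exercise that combines Corollary~\ref{corol:Lottery_LB_n} (in its quantitative form given by Result~$(ii)$ of Theorem~\ref{theo:Lottery_generic_approx}) with Lemma~\ref{lem:delta_mu_nu}. Concretely, Corollary~\ref{corol:Lottery_LB_n} says that if $n \ge L^*\varepsilon^{-2}\logep$, then for every $\gam \in \Sn$ and every $\delta \in \Del$ the $\IU$ strategy is an $(8\delta + 13\varepsilon)W$-equilibrium. I would therefore split the target error $\barep$ in halves and force $13\varepsilon \le \barep/2$ and $8\delta \le \barep/2$ separately.

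First I would set $\varepsilon := \barep/26$ so that $13\varepsilon \le \barep/2$ and $\varepsilon < \min\{\alpha,1-\alpha\}$ (using the hypothesis on $\barep$), which allows Lemma~\ref{lem:delta_mu_nu} to be invoked. The threshold $n \ge L^*\varepsilon^{-2}\logep$ then becomes $n \ge 26^2 L^*\barep^{-2}\logbar$, fixing the constant $\tilde L$ in the theorem (up to harmless absorption of logarithmic factors and the constant $26^2$).

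Second, I would bound $\delta$ using Lemma~\ref{lem:delta_mu_nu}. For the logit form this is immediate: choosing $\delta_\nu = \mO(nR^{-1}\ln(\varepsilon^{-1})) \in \Delnu$, the requirement $8\delta_\nu \le \barep/2$ gives $R \ge \mO(n\ln(\varepsilon^{-1})/\barep) = \mO((n/\barep)\ln(1/\barep))$ since $\varepsilon = \Theta(\barep)$. For the power form I would first simplify the expression $\varepsilon^{-1/R}-1 = e^{\ln(\varepsilon^{-1})/R}-1$: under the regime $R \ge \mO((n/\barep)\ln(1/\barep))$ the exponent $\ln(\varepsilon^{-1})/R$ is at most a small constant, so the elementary inequality $e^u-1 \le 2u$ (valid for $u$ in a bounded range) yields $\delta_\mu = \mO(n\ln(\varepsilon^{-1})/R)$, the same form as $\delta_\nu$, leading to the same threshold on $R$.

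Third, plugging both bounds into Corollary~\ref{corol:Lottery_LB_n} gives $8\delta + 13\varepsilon \le \barep/2 + \barep/2 = \barep$, which is the advertised conclusion. The main obstacle, if any, is bookkeeping rather than conceptual: I must linearize $\varepsilon^{-1/R}-1$ without circular dependence on the final bound on $R$, and I must track that all constants hidden in $\mO$ and $\tilde{\mO}$ depend only on $\wmin, \wmax, X_A, X_B, \alpha$ (as guaranteed by Proposition~\ref{Prop:BoundLambda} and Theorem~\ref{theo:Lottery_generic_approx}) and not on $n$, $R$, or $\barep$. Once these dependencies are verified, no further estimation is needed.
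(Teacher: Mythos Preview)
Your proposal is correct and follows essentially the same approach as the paper: set $\varepsilon$ to a fixed fraction of $\barep$ (the paper uses $\barep/21$ rather than your $\barep/26$, exploiting $8+13=21$ directly instead of splitting into halves), invoke Theorem~\ref{theo:Lottery_generic_approx}-$(ii)$ to obtain an $(8\delta+13\varepsilon)W$-equilibrium, and then use Lemma~\ref{lem:delta_mu_nu} to force $\delta_\mu,\delta_\nu \le \varepsilon$ via the condition on $R$. Your explicit linearization $e^u-1 \le 2u$ for the power-form case is a slightly cleaner way of arriving at the same $R \ge \mO\big((n/\barep)\ln(1/\barep)\big)$ threshold that the paper obtains by inverting $\varepsilon^{-1/R}-1 \le C\varepsilon/n$ directly; the bookkeeping on constants (including the conversion between $\logep$ and $\logbar$, which the paper handles via Lemma~\ref{lem:log_pre}) is routine either way.
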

The proof of this theorem is based on Theorem~\ref{theo:Lottery_generic_approx} and Lemma~\ref{lem:delta_mu_nu} (see~\ref{sec:appen_proof_theoratioLB} for more details). Theorem~\ref{theoratioform} involves a double limit in $R$ and $n$. Intuitively, if $n$ and $R$ increase but $R$ increases with a slower rate, then $\varepsilon$ (in Theorem~\ref{theo:Lottery_generic_approx}) decreases but the corresponding $\delta_\mu$ and $\delta_\nu$ (in Lemma~\ref{lem:delta_mu_nu}) do not decrease; thus, the total error is not guaranteed to decrease. Therefore, in order to obtain an approximate equilibrium of the games $\LB(\mu^R)$ or $\LB(\nu^R)$ with a small level of error (i.e., $\tep$ in Theorem~\ref{theoratioform}), one needs that these games have a large number of battlefields and that the parameter $R$ is large enough such that the CSFs $\mu^R_A, \mu^R_B$ (and $\nu^R_A, \nu^R_B$) approximate well the Blotto functions $\beta_A, \beta_B$. This is consistent with the observation that as $R$ increases, $\mu^R_A, \mu^R_B$ (and $\nu^R_A, \nu^R_B$) converge pointwise towards $\beta_A, \beta_B$ respectively.  

\section{Conclusion}
\label{Conclu}

In this work, we consider the Generalized Colonel Blotto game---the most general version of the Colonel Blotto game with heterogeneous battlefields and asymmetric players. While most of (if not all) works in the literature attempt (but do not completely succeed) to construct an exact equilibrium of the Colonel Blotto game by looking for a joint distribution with the uniform-type marginals that satisfies the budget constraints, we take a different angle. We propose a class of strategies called the $\IU$ strategies that is simply constructed by an efficient algorithm; the $\IU$ strategies guarantee the budget constraints but their marginals are not the optimal univariate distributions. Yet, we prove the $\IU$ strategy to be an approximate equilibrium of the GCB games. We also study an extended game called the Generalized Lottery Blotto game and obtain similar results. We characterize the approximate error in our results in terms of the number of battlefields of the games. Our work extends the scope of application of the GCB games and its variants. 

Throughout the paper, we emphasized the dependence of the approximation error on the number of battlefields $n$. Yet, although the dependence on other parameters of the games is not explicitly emphasized, it can be extracted from our analysis and the proofs of the stated results. Note that most results in this paper are obtained under an assumption on the games' parameters, but this assumption holds for all relevant practical applications. It is also interesting to note that although the notion of approximate equilibrium is defined in terms of payoffs (the payoffs when players play the $\IU$ strategy are close to optimal), the $\IU$ strategy also approximates the equilibrium marginals (if an equilibrium exists)---that is, it is also an approximate equilibrium in terms of~strategies.

Our approximation results are valid even in the case where no equilibrium exists (and we do not include the assumption that requires the existence of the equilibrium). Particularly in the cases of the Colonel Blotto game where it is known that there exists no equilibrium yielding the uniform-type marginals, the $\IU$ strategy is still an approximate equilibrium, yet we suspect that in those cases the approximation error might be large. On the other hand, our work does not solve the question of the existence of an exact Nash equilibrium. In particular, we leave as future work the investigation of possible conditions under which a Nash equilibrium exists, for instance for a large-enough number of battlefields. We also finally note that, in the GCB game, the existence of multiple solutions $\gam $ of Equation~\eqref{eq:Equagamma} leads to problems of equilibrium selection (in practical contexts involving a social welfare measurement) among the $\IU$ strategies with different $\gam  \in \Sn$, which we also leave as future~work.

\section*{Acknowledgment}

This work was supported by the French National Research Agency through the ``Investissements d'avenir'' program (ANR-15-IDEX-02) and through grant ANR-16-TERC0012; and by the Alexander von Humboldt~Foundation.

\newpage

\bibliography{mybibfile}

\newpage
\appendix
\setcounter{lemma}{0}
\renewcommand{\thelemma}{\Alph{section}\arabic{lemma}}
\renewcommand{\theremark}{\Alph{section}\arabic{remark}}
\renewcommand{\theproposition}{\Alph{section}\arabic{proposition}}

\section{Nomenclatures and Preliminaries}
\label{sec:appen_preliminary}

\begin{centering}
\begin{table}[htb!]
	\caption{Table of Notation}
	\captionsetup{justification=centering}
\begin{tabular}{r c p{12cm} }
\multicolumn{3}{c}{\underline{Abbreviation}}\\
    GCB (or $\CB$) & $\triangleq$ & Generalized Colonel Blotto game (with $n$ battlefields)\\
    GLB (or $\LB$) & $\triangleq$ & Generalized Lottery Blotto game (with $n$ battlefields)\\
    $\CCB$ & $\triangleq$ & the constant-sum versions of $\CB$\\
	$\CLB$ & $\triangleq$ & the constant-sum versions of $\LB$\\
    CSF & $\triangleq$ & contest success function\\
    OUD & $\triangleq$ &  optimal univariate distribution\\
    $\IU$ ($=(\IU_A, \IU_B)$) & $\triangleq$ & independent uniform strategy (corresponding to $\gamma^*$)\\  
    \multicolumn{3}{c}{\underline{Games' Parameters}}\\
    $X_A, X_B$ & $\triangleq$ & budgets of Player A and B respectively\\  
    $n$ & $\triangleq$ & number of battlefields\\  
    $w^A_i, w^B_i$ & $\triangleq$ & values of battlefield $i$ assessed by Player A and B respectively\\
    $\wmin, \wmax$ & $\triangleq$ & lower and upper bounds of battlefields' values\\  
    $W_A, W_B$ & $\triangleq$ & sums of battlefields' values, $W_A:= \sum_{i=1}^n w^A_i$, $W_B:= \sum_{i=1}^n w^B_i$\\
    $W$ & $\triangleq$ &  $\max\{W_A, W_B\}$\\  
    $v^A_i, v^B_i$ & $\triangleq$ & normalized values of battlefield $i$ assessed by Player A and B\\  
    $x^A_i, x^B_i$ & $\triangleq$ & the allocation to battlefield $i$ of Player A and B respectively\\  
    $ \Pi_A(s,t), \Pi_B(s,t)$ & $\triangleq$ & players' payoffs inG CB games when playing the strategies $s$ and~$t$\\ 
    $\alpha$ & $\triangleq$ & the tie-breaking parameter \\
    $\beta_A, \beta_B$ & $\triangleq$ & Blotto functions (see \eqref{eq:betafunction})\\ 
    $\zeta$ & $\triangleq$ & $(\zeta_A, \zeta_B)$---the generic CSFs\\
    $\LB(\zeta)$ & $\triangleq$ & the GLB game with CSFs $\zeta_A, \zeta_B$\\
    $\mu^R$ & $\triangleq$ & $(\mu_A^R, \mu^R_B)$---the power form CSFs with parameter $R$ (see Table~\ref{table:CSF})\\
    $\nu^R$ & $\triangleq$ & $(\nu_A^R, \nu^R_B)$---the logit form CSFs with parameter $R$ (see Table~\ref{table:CSF})\\
    $ \Pi^{\zeta}_A(s,t), \Pi^{\zeta}_B(s,t)$ & $\triangleq$ & players' payoffs in $\LB(\zeta)$ games when playing the strategies $s$, $t$\\ 
    $\X(y^*, \varepsilon), \Y(x^*, \varepsilon) $  & $\triangleq$ & the sets characterizing the dissimilarity between $(\beta_A,\beta_B)$ and $(\zeta_A,\zeta_B)$ (see~\eqref{eq:Xset},~\eqref{eq:Yset}) \\
    $\Del$  & $\triangleq$ &  see Definition~\ref{def:delta} \\
    \multicolumn{3}{c}{\underline{$\IU$ Strategies}}\\
    $\gamma^*$ & $\triangleq$ & a positive solution of Equation~\eqref{eq:Equagamma}\\
    $\Sn$ & $\triangleq$ & the set of positive solutions of Equation~\eqref{eq:Equagamma} w.r.t. $\CB$\\
    $\Gmin, \Gmax$ & $\triangleq$ & lower and upper bounds of any $\gamma^* \in \Sn$ (see Proposition~\ref{Prop:BoundLambda})\\
    $\Omega_A(\gamma^*)$ & $\triangleq$ & $ \left\{i \in [n]: v^A_i/v^B_i > \gamma^* \right\}$\\
    $\LdaA, \LdaB$ & $\triangleq$ & Lagrange multipliers corresponding to $\gamma^*$ (see \eqref{eq:lambdaA}, \eqref{eq:lambdaB}) \\
    $\Lmin, \Lmax$ & $\triangleq$ & lower and upper bounds of $\LdaA, \LdaB$ (see Proposition~\ref{Prop:BoundLambda})\\
    $\FA,\FB$& $\triangleq$ & uniform-type distributions (see \eqref{As}-\eqref{A*B*}\\ 
    $\AS,\AW,A^n_i$ & $\triangleq$ & random variables defined in~\eqref{As}-\eqref{A*B*}\\
    $ \BS,\BW,B^n_i$ &   &   \\
    $\FAn, \FBn$ & $\triangleq$ & the marginals corresponding to battlefield $i$ of the $\IU$ strategy\\
    $A_{=0}, A_{> 0}$ & $\triangleq$ & the events $\{\sum_{i \in [n]} A^*_i = 0\}$ and $\{\sum_{i \in [n]} A^*_i > 0\}$, respectively\\
    $B_{=0}, B_{> 0}$ & $\triangleq$ & the events $\{\sum_{i \in [n]} B^*_i = 0\}$ and $\{\sum_{i \in [n]} B^*_i > 0\}$, respectively\\
    \bottomrule
\end{tabular}
\label{table:notation}
\end{table}
\end{centering}

Throughout the paper, we use the asymptotic notation (Bachmann–Landau notations) ${\mO}$ by its standard definition,\footnote{This definition is used by textbooks e.g., \citet{brassard1996,cormen2009}.} i.e., for any (real-valued) functions $f$, $g$ defined on an unbounded subset $S \subset \mathbb{R}^n_{>0}$, and $g(\boldsymbol{z}) >0$ for any $\boldsymbol{z} \in S$, we write $f(\boldsymbol{z}) = \mO\left(g(\boldsymbol{z}) \right)$ if \mbox{$ \exists M,C >0: |f(\boldsymbol{z})| \le C g(\boldsymbol{z}), \forall \boldsymbol{z} \in S: \boldsymbol{z}_i \ge M, \forall i \in [n]$}. Moreover, we also use another variant of $\mO$ that is $\tilde{\mO}$ where the logarithmic terms (in $\boldsymbol{z}_i$) are ignored. We also write $h(\boldsymbol{z}) \le  \mO(g(\boldsymbol{z}))$ if there exists a term $f(\boldsymbol{z}) = \mO(g(\boldsymbol{z}))$ such that $h(\boldsymbol{z}) \le f(\boldsymbol{z}), \forall \boldsymbol{z} \in S$ (similar notation for $\tilde{\mO}$ can be deduced).

In the remainders of this section, we introduce and prove several preliminary lemmas that are useful for our analysis.

\begin{lemma}
\label{lem:Preliminary}
    Given a game $\CB$ (or $\LB$), for any $\gam  \in \Sn$, we have:
    \begin{enumerate}
        \item[$(i)$] $\LdaA, \LdaB>0$ and $\gam  = {\LdaA}/{\LdaB}$.
        \item[$(ii)$] For any $i \in [n]$, $\Ex[\AS]= \frac{1}{2}\frac{v^B_i}{\LdaB}$, $\Ex[\AW]= \left(\frac{v^A_i}{\LdaA}\right)^2 \frac{\LdaB}{2 v^B_i}$, $\Ex[\BS]= \frac{1}{2}\frac{v^A_i}{\LdaA}$ and  $\Ex[\BW]= \left(\frac{v^B_i}{\LdaB}\right)^2 \frac{\LdaA}{2 v^A_i}$. 
        \item[$(iii)$] $X_A = \sum\nolimits_{i \in [n]}{\Ex[A^*_i]}$ and $X_B = \sum\nolimits_{i \in [n]}{\Ex[B^*_i]}$.%
        \item[$(iv)$] For any $i \in [n]$, $A^*_i$ and $B^*_i$ have a constant upper-bound; particularly, 
        \begin{equation*}
            \prob\left(A^*_i  \le 2X_B \right) =  \prob\left(B^*_i \le  2X_B \right)  = 1.
        \end{equation*}
\end{enumerate}
\end{lemma}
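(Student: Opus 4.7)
The four claims are essentially algebraic bookkeeping on the definitions in \eqref{eq:lambdaA}--\eqref{A*B*}, so the plan is to dispatch them one by one, with the only subtlety being the matching of the ratio $\LdaA/\LdaB$ against the defining equation \eqref{eq:Equagamma}.

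For part $(i)$, I will first note that $[n]$ is partitioned into $\Ostar$ and its complement, and that $v^A_i,v^B_i>0$; consequently each of the two sums appearing in \eqref{eq:lambdaA} is non-negative and at least one of them is strictly positive, so $\LdaA>0$, and an identical argument gives $\LdaB>0$. Then I will compute $\LdaA/\LdaB$ directly from \eqref{eq:lambdaA}--\eqref{eq:lambdaB}: the common factor $1/2$ cancels, and what remains is exactly the right-hand side of \eqref{eq:Equagamma} multiplied by $X_A/X_B$. Since $\gam \in\Sn$ satisfies \eqref{eq:Equagamma}, this gives $\LdaA/\LdaB=(X_A/X_B)\cdot(X_B\gam/X_A)=\gam$. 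This matching is the only place where care is needed.

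For part $(ii)$, I will read off the expectations from the CDFs in \eqref{As}--\eqref{Bs}. The variable $\AS$ is uniform on $[0,v^B_i/\LdaB]$, yielding $\Ex[\AS]=\tfrac{1}{2} v^B_i/\LdaB$, and similarly for $\BS$. For $\AW$ (resp.\ $\BW$), using $i\notin\Ostar$ (resp.\ $i\in\Ostar$) together with $\gam=\LdaA/\LdaB$ from part $(i)$ gives that $F_{\AW}$ has a point mass at $0$ plus a uniform density $\LdaB/v^B_i$ on $(0,v^A_i/\LdaA]$, so that $\Ex[\AW]=\int_0^{v^A_i/\LdaA} x\,(\LdaB/v^B_i)\,\de x=(v^A_i/\LdaA)^2\LdaB/(2v^B_i)$; the computation for $\BW$ is analogous.

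For part $(iii)$, I will split the sum $\sum_{i\in[n]}\Ex[A^*_i]$ according to whether $i\in\Ostar$ or not, inserting the two expressions from part $(ii)$. Using $\LdaA=\gam\LdaB$, the sum rearranges into $\tfrac{1}{2\LdaB}\bigl[\sum_{i\in\Ostar}v^B_i+(\gam)^{-2}\sum_{i\notin\Ostar}(v^A_i)^2/v^B_i\bigr]$, and by \eqref{eq:lambdaB} the bracketed quantity equals $2X_A\LdaB$, giving $X_A$. The identity for $X_B$ follows symmetrically via \eqref{eq:lambdaA}.

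For part $(iv)$, the supports of $A^*_i$ and $B^*_i$ lie in $[0,v^B_i/\LdaB]$ or $[0,v^A_i/\LdaA]$ by construction, so it suffices to bound these two quantities by $2X_B$. From \eqref{eq:lambdaB}, $\LdaB\ge v^B_i/(2X_A)$ whenever $i\in\Ostar$, hence $v^B_i/\LdaB\le 2X_A\le 2X_B$; analogously \eqref{eq:lambdaA} gives $\LdaA\ge v^A_i/(2X_B)$ for $i\notin\Ostar$, hence $v^A_i/\LdaA\le 2X_B$. These two inequalities cover every combination of strong/weak and player, and together with $X_A\le X_B$ yield the claimed almost-sure bound. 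The main obstacle, if any, is purely notational, namely keeping the partition $\Ostar$ vs.\ $[n]\setminus\Ostar$ aligned correctly with the choice of $\AS/\AW$ and $\BS/\BW$ throughout the argument.
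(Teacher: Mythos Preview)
Your proposal is correct and follows essentially the same approach as the paper's proof: part $(i)$ by dividing \eqref{eq:lambdaA} by \eqref{eq:lambdaB} and invoking \eqref{eq:Equagamma}, part $(ii)$ by direct computation from the CDFs, part $(iii)$ by matching the split sum against \eqref{eq:lambdaA}--\eqref{eq:lambdaB}, and part $(iv)$ by bounding the support endpoints. The only cosmetic difference is that in part $(iv)$ the paper bounds $v^B_i/\LdaB$ and $v^A_i/\LdaA$ via the derived identities \eqref{eq:prelimi_XA}--\eqref{eq:prelimi_XB} from part $(iii)$, whereas you read the same inequalities directly off the definitions \eqref{eq:lambdaA}--\eqref{eq:lambdaB}; both routes are one-line observations and yield the same bound $2X_B$.
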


\begin{proof}
 $ $\newline
\begin{itemize}
    \item[$(i)$] The positivity of $\LdaA$ and $\LdaB$ follows from the positivity of $\gam $ and the definitions of $\LdaA$ and $\LdaB$ in \eqref{eq:lambdaA} and~\eqref{eq:lambdaB}. By dividing \eqref{eq:lambdaA} by \eqref{eq:lambdaB} and combining with \eqref{eq:Equagamma}, we trivially have that $\gam  = {\LdaA}/{\LdaB}$. 
    \item[$(ii)$] These results come directly from the definitions of the distributions $F_{\AS}$, $F_{\AW}$ $F_{\BS}$ and~$F_{\BW}$.
    \item[$(iii)$] We multiply both sides of \eqref{eq:lambdaB} by $X_A/\LdaB$ and both sides of \eqref{eq:lambdaA} by~$X_B/\LdaA$ then using the fact that $\gam  = \LdaA/\LdaB$ to obtain the following:
    \begin{align}
        X_A & = \sum_{j \in \Ostar}{\frac{1}{2} \frac{v^B_j}{\LdaB}} +  \sum_{j \notin \Ostar}{\left(\frac{v^A_j}{\LdaA}\right)^2 \frac{\LdaB}{2 v^B_j}}, \label{eq:prelimi_XA}\\
        X_B & = \sum_{j \in \Ostar}{\left(\frac{v^B_j}{\LdaB}\right)^2 \frac{\LdaA}{2 v^A_j}} +  \sum_{j \notin \Ostar}{\frac{1}{2}\frac{v^A_j}{\LdaA}}. \label{eq:prelimi_XB}
    \end{align}
        Combining with $(ii)$, we deduce that $X_A = \sum\nolimits_{i \in [n]}{\Ex[A^*_i]}$ and $X_B = \sum\nolimits_{i \in [n]}{\Ex[B^*_i]}$.
    \item[$(iv)$] If $i \in \Ostar$, we have $A^*_i = \AS$ and $B^*_i = \BW$. Recalling Definition~\ref{def:UnifromDistributions}, we have that \mbox{$\prob \left (A^S_i \le {v^B_i}/{\LdaB} \right) = 1$} and \mbox{$\prob \left( B^W_i \le {v^B_i}/{\LdaB} \right) =1$}. On the other hand, from \eqref{eq:prelimi_XA}, we deduce
    \begin{equation*}
            X_B \ge X_A \ge \sum_{j \in \Ostar}{\frac{v^B_j}{2\LdaB}} \ge \frac{v^B_i}{2 \LdaB}. 
    \end{equation*}
    Therefore, $ \prob ( A^S_i \! \le\! 2X_B) \!\ge \!\prob\left(A^S_i \! \le \! {v^B_i}/{\LdaB} \right) \! = \! 1 $ and $\prob(B^W_i  \le 2X_B) \ge \prob(B^W_i  \le {v^B_i}/{\LdaB}) = 1$. We conclude that for any $i \in \Ostar$, $A^*_i, B^*_i$ are bounded by $2X_B$. 
    
    If $i \notin \Ostar$, we have $A^*_i = \AW$ and $B^*_i = \BS$. Recalling Definition~\ref{def:UnifromDistributions}, we have that \mbox{$\prob \left (A^W_i \le {v^A_i}/{\LdaA} \right) = 1$} and \mbox{$\prob \left( B^S_i \le {v^A_i}/{\LdaA} \right) =1$}. On the other hand, from \eqref{eq:prelimi_XB}, we deduce
    \begin{equation*}
            X_B  \ge \sum_{j \notin \Ostar}{\frac{v^A_j}{2\LdaA}} \ge \frac{v^A_i}{2 \LdaA}. 
    \end{equation*}
    Therefore, $ \prob ( A^W_i  \le 2X_B) \ge \prob\left(A^W_i  \le {v^A_i}/{\LdaA} \right) =1$ and $\prob(B^S_i  \le 2X_B) \ge \prob(B^S_i  \le {v^A_i}/{\LdaA})=1$. We conclude that for $i \notin \Ostar$, $A^*_i, B^*_i$ are also bounded by $2X_B$. \qed
\end{itemize}
\end{proof}
%

\boundpropo*
\begin{proof}
 
Let $\gam  \in \Sn$, we consider the following cases:
\paragraph{Case 1: If $0< \gam  < \min \limits_{i \in [n]} \left\{ \frac{v^A_i}{v^B_i} \right\}$.} In this case, $\Ostar = [n]$, and since $\gam $ is a solution of \eqref{eq:Equagamma}, we deduce:
    \begin{equation*}
        \gam  = \frac{X_B}{X_A} \frac{\sum\nolimits_{i=1}^n{v^B_i}} {\sum\nolimits_{i=1}^n{\frac{(v^B_i)^2}{v^A_i}}} \ge \frac{X_B}{X_A} \frac{n \frac{\wmin}{n \wmax}}{n\frac{\left(\frac{\wmax}{n \wmin} \right)^2 }{\frac{\wmin}{n \wmax}}} = \frac{X_B}{X_A} \left(\frac{\wmin}{\wmax} \right)^4.
    \end{equation*}

Here, the inequality comes directly from~\eqref{eq:bound_v^p_i}.
\paragraph{Case 2: If $\gam   \ge \max \limits_{i \in [n]} \left\{ \frac{v^A_i}{v^B_i} \right\}$.} In this case, $\Ostar = \emptyset$, and since $\gam $ is a solution of \eqref{eq:Equagamma}, we~deduce:
     \begin{equation*}
         \gam  = \frac{X_B}{X_A} \frac{\sum\nolimits_{i=1}^n{\frac{(v^A_i)^2}{v^B_i}}} {\sum\nolimits_{i=1}^n{v^A_i}}  \le \frac{X_B}{X_A} \left(\frac{\wmax}{\wmin} \right)^4.
     \end{equation*}

\paragraph{Case 3: If $\exists i, j : \frac{v^A_i}{v^B_i} \le \gam  < \frac{v^A_j}{v^B_j}$.} In this case, trivially from \eqref{eq:bound_v^p_i}, we have $\gam  \in \left[ {\left( \frac{\wmin}{\wmax} \right)^2} , {\left(\frac{\wmax}{\wmin} \right)^2} \right]$.

In conclusion, by denoting \mbox{$\Gmin := \min \left\{\frac{X_B}{X_A} \left(\frac{\wmin}{\wmax} \right)^4 , {\left( \frac{\wmin}{\wmax} \right)^2} \right\}$} and \mbox{$\Gmax := \max \left\{ \frac{X_B}{X_A} \left(\frac{\wmax}{\wmin} \right)^4,{\left(\frac{\wmax}{\wmin} \right)^2}\right\} =  \frac{X_B}{X_A} \left(\frac{\wmax}{\wmin} \right)^4$}, we have the conclusion on the bounds of $\gam $. 

On the other hand, from the definition of $\LdaA$ in \eqref{eq:lambdaA}, we deduce
\begin{align*}
    \LdaA &\ge \frac{({\gam })^2}{2{X_B}}\sum\nolimits_{i \in {\Omega _A}({\gam })} { \left(\frac{\wmin}{n \wmax}\right)^2 \frac{1}{\frac{\wmax}{n \wmin}} }  + \frac{1}{{2{X_B}}}\sum\nolimits_{i \notin {\Omega _A}({\gam })} {\frac{\wmin}{n \wmax}} \\
    & \ge \min \left\{\frac{({\gam })^2}{2{X_B}},\frac{1}{2{X_B}} \right\} \cdot \sum \nolimits_{i \in [n]} {\frac{1}{n} \left(\frac{\wmin}{ \wmax}\right)^3}\\
    & \ge \min \left\{\frac{({\gam })^2}{2{X_B}},\frac{1}{2{X_B}} \right\} \cdot \left(\frac{\wmin}{ \wmax}\right)^3.
    \end{align*}
    Similarly, we have the upper-bound
    \begin{equation*}
    \LdaA \le \max \left\{\frac{({\gam })^2}{2{X_B}},\frac{1}{2{X_B}} \right\} \cdot \left[ \sum\limits_{i \in {\Omega _A}({\gam })} { \frac{1}{n} \left(\frac{\wmax}{ \wmin}\right)^3 }  + \sum\limits_{i \notin {\Omega _A}({\gam })} { \frac{1}{n} \left(\frac{\wmax}{ \wmin}\right)^3 }\right] = \max \left\{\frac{({\gam })^2}{2{X_B}},\frac{1}{2{X_B}} \right\} \cdot\left(\frac{\wmax}{ \wmin}\right)^3.
\end{equation*}
Similarly, we can prove that $\min\left\{\frac{1}{2{X_A}}, \frac{1}{2{\gam } ^2 {X_A}}\right\} \left(\frac{\wmin}{ \wmax}\right)^3 \!\le\! \LdaB \le \max \left\{\frac{1}{2{X_A}}, \frac{1}{2{\gam } ^2 {X_A}}  \right\} \left(\frac{\wmax}{ \wmin}\right)^3 $; therefore, 
\begin{equation*}
    \min \left\{ \frac{{\gam }^2}{2{X_B}},\frac{1}{2{X_B}} , \frac{1}{2{X_A}}, \frac{1}{2(\gam ) ^2 {X_A}} \right\}\left(\frac{\wmin}{ \wmax}\right)^3 \le \LdaA, \LdaB \le \max \left\{\frac{{\gam }^2}{2{X_B}},\frac{1}{2{X_B}}, \frac{1}{2{X_A}}, \frac{1}{2(\gam ) ^2 {X_A}} \right\} \left(\frac{\wmax}{ \wmin}\right)^3.
\end{equation*}
Since $\gam  \in [\Gmin, \Gmax ]$, $\LdaA$ and $\LdaB$ are bounded in $[\Lmin, \Lmax]$, where
\begin{align*}
    & \Lmin:= \min \left\{ \frac{{\Gmin}^2}{2{X_B}},\frac{1}{2{X_B}} , \frac{1}{2{X_A}}, \frac{1}{2 \Gmax ^2 {X_A}} \right\}\left(\frac{\wmin}{ \wmax}\right)^3, \\
    & \Lmax:= \max \left\{\frac{{\Gmax}^2}{2{X_B}},\frac{1}{2{X_B}}, \frac{1}{2{X_A}}, \frac{1}{2 \Gmin ^2 {X_A}} \right\} \left(\frac{\wmax}{ \wmin}\right)^3.
\end{align*}
\qed
\end{proof}

Finally, we prove a trivial result that will be used quite often in the remainder of this work. 

\begin{lemma}
\label{lem:log_pre}
    For any $\hep>0$ and $\hC \ge 1$, we have that \mbox{$(\ln(\hC) + 1) \ln \left( \frac{1}{\min\{\hep, 1/\e \}}  \right) \ge \ln \left( \frac{\hC}{\hep} \right) $}.
\end{lemma}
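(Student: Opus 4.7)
My plan is a direct elementary verification via a change of variables and a small case analysis. Set $a := \ln(\hat{C})$ and $b := \ln(1/\hat{\varepsilon})$; since $\hat{C} \ge 1$ we have $a \ge 0$, while $b$ may have either sign. Observe that
\[
m := \ln\!\left( \frac{1}{\min\{\hat{\varepsilon}, 1/\e\}} \right) = \max\{b, 1\},
\]
because taking the minimum inside the logarithm turns into a maximum outside. With this notation the claim reduces to the purely arithmetic inequality $(a+1)\,m \ge a + b$, where $a \ge 0$ and $m = \max\{b,1\} \ge 1$.

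I would then split on the sign and size of $b$. First, if $b \ge 1$, then $m = b$, and the inequality becomes $(a+1)b \ge a + b$, i.e.\ $ab \ge a$, which holds since $a \ge 0$ and $b \ge 1$. Second, if $b < 1$, then $m = 1$, and the inequality becomes $a+1 \ge a+b$, i.e.\ $b \le 1$, which is true by assumption (this case covers both $0 \le b < 1$ and $b < 0$, i.e.\ both $\hat{\varepsilon} \in [1/\e, 1]$ and $\hat{\varepsilon} > 1$, in a single line). Combining the two cases gives $(a+1)m \ge a+b$, which, after substituting back the definitions of $a$, $b$ and $m$ and using $\ln(\hat{C}/\hat{\varepsilon}) = a + b$, yields precisely the stated inequality.

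There is no real obstacle here; the only subtlety is recognising the identity $\ln(1/\min\{\hat{\varepsilon},1/\e\}) = \max\{\ln(1/\hat{\varepsilon}),1\}$, after which the proof collapses to the two-line case check above. Consequently the whole argument fits in a short paragraph and requires no further machinery.
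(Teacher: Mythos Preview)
Your proof is correct and follows essentially the same approach as the paper: both split into the two cases $\hat{\varepsilon} < 1/\e$ and $\hat{\varepsilon} \ge 1/\e$ (your $b\ge 1$ versus $b<1$) and verify the inequality directly in each. Your substitution $a=\ln\hat{C}$, $b=\ln(1/\hat{\varepsilon})$, $m=\max\{b,1\}$ is a tidy repackaging, but the underlying argument is identical.
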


\begin{proof}

$ $\newline

\underline{Case 1:} If $\hep < 1/\e$. In this case, we have $\ln(1/\hep) > 1$; therefore,
\begin{align*}
    (\ln(\hC) + 1) \ln \left( \frac{1}{\min\{\hep, 1/\e \}}  \right)  = (\ln(\hC) + 1) \ln \left( \frac{1}{\hep}  \right) & = \ln(\hC)\ln \left( \frac{1}{\hep}  \right) + \ln \left( \frac{1}{\hep}  \right) \\
    & > \ln(\hC) + \ln \left( \frac{1}{\hep}  \right)\\
    & = \ln \left( \frac{\hC}{\hep} \right).
\end{align*}

\underline{Case 2:} If $\hep \ge 1/\e$. We have $\ln(1/\hep) \le 1$; therefore,
\begin{align*}
    (\ln(\hC) + 1) \ln \left( \frac{1}{\min\{\hep, 1/\e \}}  \right)  = (\ln(\hC) + 1) \ln \left( \frac{1}{1/\e}  \right)  = \ln(\hC) + 1 \ge \ln(\hC) + \ln \left( \frac{1}{\hep}  \right)  = \ln \left( \frac{\hC}{\hep} \right). 
\end{align*}
\qed
\end{proof}

\section{Proof of Theorem~\ref{TheoMainBlotto}}
\label{sec:Appen_Proof_TheoBlotto}

First note that in the remainders of the paper, for any bounded, non-negative random variable $Z$ (i.e., \mbox{$\exists C>0:$} $\prob(Z\in[0,C])=1$), any measurable function $g$ on $\mathbb{R}$, we write $\int \nolimits_0^{\infty} \! {g(x) \de {F_{Z}(x)}}$ instead of $\int \nolimits_0^{C} \!  {g(x) \de {F_{Z}(x)}}$ if there is no need to emphasize the bounds of $Z$. For the sake of notation, we also denote by $A_{=0}$ the event $\left\{ \sum \nolimits_{j \in [n]}{A^*_j} = 0 \right\}$ and by $A_{>0}$ its complement event, that is $\left\{ \sum \nolimits_{j \in [n]}{A^*_j} > 0 \right\}$. Similarly, we denote by $B_{=0}$ the event $\left\{ \sum \nolimits_{j \in [n]}{B^*_j} = 0 \right\}$ and by $B_{>0}$ the event $\left\{ \sum \nolimits_{j \in [n]}{B^*_j} > 0 \right\}$.

Recall the notation $\FAn$ and $\FBn$ as the univariate marginal distributions corresponding to battlefield $i \in [n]$ of the $\IU_A$ and $\IU_B$ strategies (the corresponding random variables are denoted $A^n_i$ and $B^n_i$). Due to the definition of the $\IU$ strategy (via Algorithm~\ref{alg:IU_strategy}), for any $x\ge 0$ and $i \in [n]$, we have:
\begin{align}
    \FAn(x) &= \prob\left(\left\{A^n_i \le x\right\} \bigcap A_{=0} \right)  + \prob\left( \left\{A^n_i \le x \right\} \bigcap A_{>0} \right) \nonumber \\
                %
                & = \prob\left( A_{=0}\right) + \! \prob \left( \left\{ \frac{A^*_i \cdot X_A}{\sum \nolimits_{j\in[n]} A^*_j} \!\le\! x \right\}   \bigcap A_{>0} \right). \label{eq:A^n_Def}
\end{align}
Here, we have used the fact that if $\sum \nolimits_{j\in[n]} A^*_j = 0$ (i.e., when $A_{=0}$ happens), then $A^n_i = 0$ by definition and thus, $\prob(A^n_i \le x) =1$ and $\prob\left(\left\{A^n_i \le x\right\} \bigcap A_{=0} \right) = \prob(A_{=0})$. Similarly to~\eqref{eq:A^n_Def}, for any $x\ge 0$ and $i \in [n]$, 
\begin{equation}
    \FBn(x) = \prob\left( B_{=0}\right) + \! \prob \left( \left\{ \frac{B^*_i \cdot X_B}{\sum \nolimits_{j\in[n]} B^*_j} \!\le\! x \right\}   \bigcap B_{>0} \right). \label{eq:B^n_Def}
\end{equation}
Regarding the random variables $A^n_i$ and $B^n_i$ ($i \in [n]$), we prepare a lemma stating several useful results as follows (its proof is given in~\ref{sec:appen_proof_continuity}).
\begin{lemma}
    \label{lem:continuity_Ani_and_Bni}
    For any $n$ and $i \in [n]$, we have
    \begin{itemize}
        \item[$(i)$] $\prob(A^n_i =0) = \prob(A^*_i = 0)$ and $\prob(B^n_i =0) = \prob(B^*_i =0 )$.
        \item[$(ii)$] $\prob(A^n_i = x) = \prob(B^n_i =y) = 0$ for any $x \in (0,\infty) \backslash \{X_A\}$ and $y \in (0,\infty) \backslash \{X_B\}$.
         \item[$(iii)$] $\prob(A^n_i = X_A) \le \left( 1 -\frac{\Lmin}{\Lmax} \frac{\wmin ^2}{\wmax ^2} \right)^{n-1}$ and $\prob(B^n_i = X_B) \le \left( 1 -\frac{\Lmin}{\Lmax} \frac{\wmin ^2}{\wmax ^2} \right)^{n-1}$.
        %
    \end{itemize}
\end{lemma}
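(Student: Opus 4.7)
The plan is to work directly from the definition of $A^n_i$ (resp.\ $B^n_i$) induced by Algorithm~\ref{alg:IU_strategy}; by the symmetry between the constructions for Player A and Player B, it suffices to prove the three statements for $A^n_i$. Fix $n$ and $i \in [n]$, and let $S_i := \sum_{j \neq i} A^*_j$, so that on the event $A_{>0}$ we have $A^n_i = A^*_i X_A / (A^*_i + S_i)$, while on $A_{=0}$ we have $A^n_i = 0$.

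For part $(i)$, I would argue that $\{A^n_i = 0\} = \{A^*_i = 0\}$ as events. The reverse inclusion is immediate on $A_{=0}$, while if $A^*_i = 0$ and $A_{>0}$ occurs, then $A^n_i = 0 \cdot X_A / (A^*_i + S_i) = 0$. Conversely, if $A^*_i > 0$, then the event $A_{>0}$ occurs (since $A^*_i + S_i \ge A^*_i > 0$), so $A^n_i = A^*_i X_A / (A^*_i + S_i) > 0$.

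For part $(ii)$, fix $x \in (0,\infty) \setminus \{X_A\}$. For $A^n_i = x$ to hold we must be on $A_{>0}$, and the equation $A^*_i X_A = x (A^*_i + S_i)$ rearranges to $A^*_i (X_A - x) = x S_i$. If $S_i = 0$ then this forces $A^*_i = 0$ (since $x \ne X_A$), but then $A^n_i = 0 \ne x$, contradiction; hence we must have $S_i > 0$, and $A^*_i = x S_i /(X_A - x)$ is a specific strictly positive number depending only on $S_i$. By conditioning on $S_i$ (which is independent of $A^*_i$ since the $A^*_j$ are drawn independently in line~1 of Algorithm~\ref{alg:IU_strategy}), it suffices to show that for every $s > 0$, $\prob(A^*_i = xs/(X_A-x)) = 0$. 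But Definition~\ref{def:UnifromDistributions} shows that $F_{A^*_i}$ is absolutely continuous on $(0,\infty)$ (only possibly an atom at $0$), so the conditional probability vanishes.

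For part $(iii)$, I observe that $A^n_i = X_A$ forces $A^*_i + S_i = A^*_i$ on $A_{>0}$, i.e.\ $S_i = 0$ and $A^*_i > 0$; in particular $\{A^n_i = X_A\} \subseteq \{A^*_j = 0 \text{ for all } j \ne i\}$. Using independence of the $A^*_j$'s from line~1 of Algorithm~\ref{alg:IU_strategy}, this yields
\begin{equation*}
\prob(A^n_i = X_A) \le \prod_{j \ne i} \prob(A^*_j = 0).
\end{equation*}
The main technical step is a uniform bound $\prob(A^*_j = 0) \le 1 - \tfrac{\Lmin}{\Lmax}\tfrac{\wmin^2}{\wmax^2}$ for each $j$. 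If $j \in \Ostar$ then $A^*_j = \AS$ is uniform on $[0, v^B_j/\LdaB]$ and $\prob(A^*_j = 0) = 0$, so the bound is trivial. If $j \notin \Ostar$, then $A^*_j = \AW$ has an atom at $0$ of mass $\prob(\AW=0) = 1 - \frac{v^A_j \LdaB}{v^B_j \LdaA}$ by~\eqref{Aw}. Combining the bound $v^A_j/v^B_j \ge \wmin^2/\wmax^2$ from~\eqref{eq:bound_v^p_i} with $\LdaB/\LdaA \ge \Lmin/\Lmax$ from Proposition~\ref{Prop:BoundLambda} gives the desired inequality, and raising to the $(n{-}1)$-th power completes~(iii). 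The same argument applied symmetrically yields the statement for $B^n_i$. The only delicate point—and the likely main obstacle—is ensuring that the bound in (iii) is uniform in $j$ and does not degrade with the configuration of battlefields; this is precisely where Proposition~\ref{Prop:BoundLambda} and \ref{eq:A0} are essential.
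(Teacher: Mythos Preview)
Your proof is correct and follows essentially the same approach as the paper: both establish $\{A^n_i=0\}=\{A^*_i=0\}$ directly in (i), condition on $S_i=\sum_{j\ne i}A^*_j$ and use the absolute continuity of $\FA$ on $(0,\infty)$ in (ii), and bound $\prob(A^n_i=X_A)\le\prod_{j\ne i}\prob(A^*_j=0)$ via \eqref{Aw}, \eqref{eq:bound_v^p_i} and Proposition~\ref{Prop:BoundLambda} in (iii). One minor slip: in (ii) you call $xS_i/(X_A-x)$ ``strictly positive,'' but for $x>X_A$ it is negative---the paper handles $x>X_A$ separately by noting $A^n_i\le X_A$ a.s., though your conclusion still holds since $\prob(A^*_i=c)=0$ for any $c\ne 0$.
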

Intuitively, Result $(ii)$ states that the function $\FAn$ (resp. $\FBn$) is continuous on $(0, X_A)$ (resp. $(0,X_B)$). The discontinuity of $\FAn$ (resp. $\FBn$) at $X_A$ (resp. at $X_B$) is due to the normalization step involved in the definition of the $\IU$ strategy; note that the probability that $A^n_i = X_A$ (resp. $B^n_i = X_B$) quickly tends to zero when $n$ increases as has been shown in Result $(iii)$. Finally, Result~$(i)$ shows that in some cases, $\FAn$ and $\FBn$ may be discontinuous at $0$. This is due to the fact that the functions $\FA$ and $\FB$ may be discontinuous at $0$. Moreover, recall that we chose the assignments of the outputs in line 3 and 7 of Algorithm~\ref{alg:IU_strategy} to be allocating zero to every battlefield, i.e., the mass at $0$ of $\FAn$ and $\FBn$ is added by a (negligibly small) positive probability. While other assignments do not affect our results, they make $\FAn$ (resp. $\FBn$) be discontinuous at some points differing from $0$ and $X_A$ (resp. $X_B$), e.g., if in line 3 of Algorithm~\ref{alg:IU_strategy}, we assign $x^A_i = X_A/n$, the distribution $\FAn$ would also be discontinuous at the point $X_A/n$. Our choice of assignments provides more convenience in our analysis since we have to consider their discontinuity at $0$ in any~case.

Finally, with all the preparation steps mentioned above, we are ready to prove Theorem~\ref{TheoMainBlotto}.
\TheoMainBlotto*

\begin{proof}
 
In this section, we first give a proof of Result~$(ii)$ of Theorem~\ref{TheoMainBlotto}. Based on this, we then deduce Result $(i)$ of Theorem~\ref{TheoMainBlotto}. We first look for the condition on $n$ such that \eqref{eq:MainTheo_A} holds for any pure strategy $\boldsymbol{x}^A$ of Player A. The proof that~\eqref{eq:MainTheo_B} holds for any pure strategy of Player B under the same condition can be done similarly and thus is omitted.

First, we write explicitly the payoffs of Player A when Player B plays the $\IU_B$ strategy and Player A plays either the pure strategy $\boldsymbol{x}^A$ or the $\IU_A$ strategy:
\begin{align}
     && \Pi_A(\boldsymbol{x}^A, \IU_B)   & =    \alpha \sum \limits_{i=1}^{n}w^A_i \prob (B^n_i = x^A_i)   + \sum \limits_{i=1}^{n}w^A_i \prob (B^n_i < x^A_i), \label{eq:Pi_A_pure}\\
    && \Pi_A(\IU_A, \IU_B)   & = \alpha \sum \limits_{i=1}^{n}w^A_i \prob (B^n_i = A^n_i) + \sum \limits_{i=1}^{n}w^A_i \prob (B^n_i< A^n_i) \nonumber \\
    &&  & = \alpha \sum\limits_{i = 1}^n {\int_0^\infty  {w_i^A{\prob(B^n_i = x)} \de {\FAn(x)}} } + \sum\limits_{i = 1}^n {\int_0^\infty  {w_i^A{\prob(B^n_i < x)} \de {\FAn(x)}} }. \label{eq: Pi_A_IU}
\end{align}
We then prepare a useful lemma, its proof is given in \ref{sec:appen_proof_lem:SufCon}. Intuitively, this lemma shows that as $n$ is large enough, we can prove~\eqref{eq:MainTheo_A} without the need of analyzing separately the case where players get tie allocations (that is our results hold regardless of the tie-breaking-rule parameter $\alpha$). 
\begin{restatable}{lemma}{lem:SufCon}
\label{lem:SufCon}
   Given $\wmin, \wmax, X_A, X_B >0$ ($\wmin \le \wmax$, $X_A \le X_B$), there exists a constant $C^*_0>0$ such that for any $\varepsilon \in (0,1]$ and $n\ge C^*_0 \logep $, for any game $\CB$ and $\gam  \in \Sn$ the following inequality is a sufficient condition of~\eqref{eq:MainTheo_A}:
    \begin{equation}
         \sum \limits_{i=1}^{n}v^A_i \FBn\left( x^A_i\right) \le \sum\limits_{i = 1}^n {\int_0^\infty  {v_i^A{\FBn(x)} \de {\FAn(x)}} } + \frac{\varepsilon}{2}.  \label{eq:lem_ignore_tie}
    \end{equation}
\end{restatable}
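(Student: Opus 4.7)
}
The plan is to reduce~\eqref{eq:MainTheo_A} (after dividing through by $W_A$) to the hypothesis~\eqref{eq:lem_ignore_tie} plus a small ``tie correction'' term, and then to show that the tie correction is exponentially small in $n$ thanks to Lemma~\ref{lem:continuity_Ani_and_Bni} and the structure of the uniform-type distributions $F_{A^*_i},F_{B^*_i}$ in Definition~\ref{def:UnifromDistributions}. Concretely, using the identity $\alpha\,\prob(B^n_i=x) + \prob(B^n_i<x) = \FBn(x) - (1-\alpha)\,\prob(B^n_i=x)$, I would rewrite~\eqref{eq:Pi_A_pure} and~\eqref{eq: Pi_A_IU} after dividing by $W_A$ as
\[
\frac{\Pi_A(\boldsymbol{x}^A,\IU_B)}{W_A} = \sum_{i=1}^n v^A_i \FBn(x^A_i) - (1-\alpha)\sum_{i=1}^n v^A_i \prob(B^n_i=x^A_i),
\]
\[
\frac{\Pi_A(\IU_A,\IU_B)}{W_A} = \sum_{i=1}^n \int_0^\infty v^A_i \FBn(x)\,\de\FAn(x) - (1-\alpha)\sum_{i=1}^n \int_0^\infty v^A_i \prob(B^n_i=x)\,\de\FAn(x).
\]
Subtracting and invoking the hypothesis~\eqref{eq:lem_ignore_tie}, inequality~\eqref{eq:MainTheo_A} is reduced to showing that
\[
T_n := (1-\alpha)\sum_{i=1}^n \int_0^\infty v^A_i \prob(B^n_i=x)\,\de\FAn(x) \;\le\; \tfrac{\varepsilon}{2}.
\]

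The main step is then to bound $T_n$ using that both $\FAn$ and the function $x\mapsto\prob(B^n_i=x)$ are concentrated away from each other's atoms. By Lemma~\ref{lem:continuity_Ani_and_Bni}(ii) we have $\prob(B^n_i=x)=0$ for $x\in(0,\infty)\setminus\{X_B\}$, so only the atoms at $x=0$ and $x=X_B$ contribute to the integral, yielding
\[
T_n \;\le\; \sum_{i=1}^n v^A_i\,\prob(B^n_i=0)\,\prob(A^n_i=0) + \sum_{i=1}^n v^A_i\,\prob(B^n_i=X_B)\,\prob(A^n_i=X_B).
\]
For the $x=0$ term, I would combine Lemma~\ref{lem:continuity_Ani_and_Bni}(i) with Definition~\ref{def:UnifromDistributions} to observe that for each~$i$ at most one of $\prob(A^*_i=0)$, $\prob(B^*_i=0)$ is nonzero: indeed, if $i\in\Ostar$ then $A^*_i=\AS$ is continuously uniform on $[0,v^B_i/\LdaB]$ (so $\prob(A^*_i=0)=0$), whereas if $i\notin\Ostar$ then $B^*_i=\BS$ is continuously uniform on $[0,v^A_i/\LdaA]$ (so $\prob(B^*_i=0)=0$). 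Hence the whole $x=0$ contribution vanishes. For the $x=X_B$ term, note that Algorithm~\ref{alg:IU_strategy} enforces $A^n_i\le X_A\le X_B$, so $\prob(A^n_i=X_B)$ is zero unless $X_A=X_B$, in which case $\prob(A^n_i=X_B)\le\prob(A^n_i=X_A)$; combining with Lemma~\ref{lem:continuity_Ani_and_Bni}(iii) and $\sum_i v^A_i=1$ gives $T_n \le \bigl(1-\tfrac{\Lmin}{\Lmax}\tfrac{\wmin^2}{\wmax^2}\bigr)^{2(n-1)}$.

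Finally, using Proposition~\ref{Prop:BoundLambda} the constant $c:=\tfrac{\Lmin}{\Lmax}\tfrac{\wmin^2}{\wmax^2}\in(0,1)$ depends only on $\wmin,\wmax,X_A,X_B$. Choosing $C^*_0 := \bigl(\ln(1/(1-c))\bigr)^{-1}$ (up to an absolute multiplicative constant absorbing $\ln 2$ and the ``$+1$'' shift), the bound $\bigl(1-c\bigr)^{2(n-1)}\le\varepsilon/2$ follows from $n\ge C^*_0\logep$, which establishes the required control $T_n\le\varepsilon/2$. The step I expect to be the most delicate is the ``zero-atom cancellation'' argument for the $x=0$ contribution, since it crucially relies on the asymmetric way~Definition~\ref{def:UnifromDistributions} places an atom at zero in exactly one of the two coupled distributions for each battlefield; without this observation, the term $\prob(B^n_i=0)\prob(A^n_i=0)$ would not vanish and would potentially be of order $1$, breaking the argument.
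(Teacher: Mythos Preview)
Your proposal is correct and essentially follows the paper's approach. Both arguments reduce~\eqref{eq:MainTheo_A} to the hypothesis~\eqref{eq:lem_ignore_tie} plus control of the tie term $T_n=(1-\alpha)\sum_i v^A_i\int_0^\infty \prob(B^n_i=x)\,\de\FAn(x)$, and both use Lemma~\ref{lem:continuity_Ani_and_Bni} together with the zero-atom cancellation $\prob(A^n_i=0)\prob(B^n_i=0)=0$ (which the paper states as $\prob(A^n_i=B^n_i=0)=0$). The only minor difference is in how $T_n$ is bounded: you decompose the integral into its two atomic contributions at $x=0$ and $x=X_B$, giving the sharper bound $(1-c)^{2(n-1)}$ (or $0$ when $X_A<X_B$), whereas the paper more coarsely replaces the integrand by its supremum $\sup_{x>0}\prob(B^n_i=x)\le D^{n-1}$ over all of $(0,\infty)$ and integrates, getting $D^{n-1}$; both choices yield a valid $C^*_0$ depending only on $\wmin,\wmax,X_A,X_B$.
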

In the remainders of the proof, we focus on~\eqref{eq:lem_ignore_tie} and look for the condition of $n$ such that it holds; this will be done in the following five~steps. After that, from Lemma~\ref{lem:SufCon}, we can conclude that~\eqref{eq:MainTheo_A} also holds with the corresponding condition on $n$.

\paragraph{\underline{Step 1:} Prove that $\{\FA\}_i$ is optimal against $\{\FB\}_i$.} 
\begin{restatable}{lemma}{lemdevia}
\label{lem:best_response}
    In any game $\CB$, for any pure strategy $\boldsymbol{x}^A$ of Player A and $\gam  \in \Sn$, we~have 
    \begin{equation}
        \sum\limits_{i=1}^n {v_i^A{\FB\left( {x_i^A} \right)} } \le \sum\limits_{i = 1}^n {\int_0^\infty  {v_i^A{\FB(x)} \de {\FA(x)}} } . \label{eq:lem_best_res}
    \end{equation}
\end{restatable}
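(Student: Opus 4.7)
\textbf{Proof proposal for Lemma~\ref{lem:best_response}.} The plan is to exploit the piecewise-linear structure of $\FB$ to produce a pointwise affine majorant of $v^A_i\FB(x)$ in $x$, and then use the budget constraint together with the first-moment identity from Lemma~\ref{lem:Preliminary}$(iii)$ to compare the two sides of~\eqref{eq:lem_best_res}. The key observation is that the uniform-type distributions are built so that, on the support of $\FA$, the map $x\mapsto v^A_i\FB(x)$ is exactly linear with slope $\LdaA$, and beyond that support it stays below the linear extension (since the CDF is capped at $1$).

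More concretely, I would first verify case by case that for every $i \in [n]$ and every $x\ge 0$,
\begin{equation*}
    v^A_i\FB(x)\;\le\; c_i + \LdaA\, x,\qquad c_i := \max\!\left\{v^A_i - \gam\, v^B_i,\;0\right\},
\end{equation*}
with equality on the support of $\FA$. For $i\in\Ostar$ one plugs the formula~\eqref{Bw} for $F_{\BW}$ to get $v^A_i\FB(x)=(v^A_i-\gam v^B_i)+\LdaA x$ on $[0,v^B_i/\LdaB]$, which equals $v^A_i$ at the endpoint and then stays at $v^A_i$, so the inequality holds globally (using $\gam = \LdaA/\LdaB$ from Lemma~\ref{lem:Preliminary}$(i)$ and the fact that $v^A_i/v^B_i > \gam$). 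For $i\notin\Ostar$ the formula~\eqref{Bs} gives $v^A_i\FB(x)=\LdaA x$ on $[0,v^A_i/\LdaA]$ and then $v^A_i\le \LdaA x$ past the upper endpoint, so the inequality also holds globally with $c_i=0$.

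Next I would integrate the affine bound against $\FA$ and verify that equality is attained:
\begin{equation*}
    \int_0^\infty v^A_i \FB(x)\,\de\FA(x) \;=\; c_i + \LdaA\, \Ex[A^*_i].
\end{equation*}
This follows immediately from the linearity on the relevant support (the point mass of $\FA$ at $0$ when $i\notin\Ostar$ contributes nothing since the integrand vanishes there). Summing over $i$ and invoking Lemma~\ref{lem:Preliminary}$(iii)$ yields $\sum_i \int_0^\infty v^A_i\FB(x)\,\de\FA(x) = \sum_i c_i + \LdaA X_A$. On the other hand, applying the pointwise bound at $x=x_i^A$ and summing gives $\sum_i v^A_i \FB(x_i^A) \le \sum_i c_i + \LdaA \sum_i x_i^A \le \sum_i c_i + \LdaA X_A$, where the last step uses the budget constraint $\sum_i x_i^A \le X_A$. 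Comparing the two identities yields~\eqref{eq:lem_best_res}.

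I do not expect a real obstacle here: the statement is essentially the KKT-style optimality of the uniform-type marginals against each other, and the only care needed is in treating the two regimes $i\in\Ostar$ and $i\notin\Ostar$ separately and in handling the point mass of $F_{A^W_i}$ at $0$. The relation $\gam=\LdaA/\LdaB$ (Lemma~\ref{lem:Preliminary}$(i)$) and the moment identity (Lemma~\ref{lem:Preliminary}$(iii)$) are the only nontrivial inputs, and both are already established.
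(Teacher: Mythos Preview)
Your proposal is correct and follows essentially the same approach as the paper's proof: both use that $v^A_i\FB(x)$ is bounded above by an affine function $c_i+\LdaA x$ (with equality on the support of $\FA$), apply the budget constraint $\sum_i x^A_i\le X_A$, and close the argument with the identity $X_A=\sum_i\Ex[A^*_i]$ (the paper uses the equivalent explicit form~\eqref{eq:prelimi_XA}). Your packaging via an explicit affine majorant is slightly cleaner than the paper's more computational presentation, but the underlying steps are identical.
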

\noindent The  proof of Lemma \ref{lem:best_response} is given in \ref{sec:appen_proof_best_respond}. This lemma can be interpreted as follows: if the allocation of Player B to battlefield $i$ follows the distribution $\FB$, then it is optimal for Player A to play such that her allocation at this battlefield follows $\FA$ (we do not know if it is possible to construct a mixed strategy such that Player A's allocation at battlefield $i$ follows $\FA$ for all $i \in [n]$; however, this does not affect our results in this work). Using this lemma, we will analyze the validity of \eqref{eq:lem_ignore_tie} by proving that, as $n \rightarrow \infty$, the terms in \eqref{eq:lem_ignore_tie} respectively converge toward the terms in \eqref{eq:lem_best_res}. To do this, we consider the next step.


\paragraph{\underline{Step 2:} Prove that $\FAn$ and $\FBn$ uniformly converge toward $\FA$ and $\FB$ as $n$~increases.}
\begin{restatable}{lemma}{lemmaconverge}
\label{lem:convergence}
    Given $\wmin, \wmax, X_A, X_B >0$ ($\wmin \le \wmax$, $X_A \le X_B$), there exists $C_1>0$ such that for any $ \varepsilon_1 \in (0, 1]$, $n \ge C_1  {\varepsilon_1^{-2}} \logone$ and $ i \in [n]$,
    \begin{equation}
        \mathop {\sup }\limits_{ x \in [0, \infty) } \left|\FAn(x) - \FA(x)\right| \le \varepsilon_1 \hspace{0.3cm} \text{ and } \hspace{0.3cm} \mathop {\sup }\limits_{ x \in [0, \infty) } \left|\FBn(x) - \FB(x)\right| \le \varepsilon_1. \label{ineqconver}
    \end{equation}
\end{restatable}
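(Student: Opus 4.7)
The plan is to exploit the fact that the $\IU$ strategy of Algorithm~\ref{alg:IU_strategy} merely rescales i.i.d.\ draws from $\FA, \FB$ by their total sum, and to show that this rescaling factor concentrates at $1$ with exponentially small probability of deviation. Writing $S_A := \sum_{j \in [n]} A^*_j$, Algorithm~\ref{alg:IU_strategy} gives $A^n_i = A^*_i X_A/S_A$ on $\{S_A > 0\}$ and $A^n_i = 0 = A^*_i$ on $\{S_A = 0\}$; the analogous relations hold for $B^n_i$ and $S_B := \sum_j B^*_j$. Lemma~\ref{lem:Preliminary}$(iii)$ supplies $\Ex[S_A] = X_A$ and $\Ex[S_B] = X_B$, so the whole task reduces to controlling the fluctuations of $S_A, S_B$ around their means and then paying a small Lipschitz cost for the rescaling.

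First I would apply Hoeffding's inequality to the independent summands $A^*_j$. Proposition~\ref{Prop:BoundLambda} together with~\eqref{eq:bound_v^p_i} implies that each $A^*_j$ lies in $[0, \wmax/(n\wmin\Lmin)]$, so the sum of squared ranges is $\wmax^2/(n \wmin^2 \Lmin^2)$. Choosing $\varepsilon_2 := \varepsilon_1/2$, Hoeffding then yields
\begin{equation*}
\delta_n \;:=\; \prob\bigl(|S_A - X_A| > \varepsilon_2 X_A\bigr) \;\le\; 2\exp\bigl(-c\, n\, \varepsilon_2^2\bigr),
\end{equation*}
for a constant $c > 0$ depending only on $X_A, \wmin, \wmax, \Lmin$, and a matching bound for $S_B$ with $X_A$ replaced by $X_B$. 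Here the effective support of each $A^*_j$ shrinks as $1/n$, which is what makes the Hoeffding exponent scale linearly in $n$; Chebyshev would deliver only a polynomial rate and would force a much larger $n$.

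On the good event $E := \{|S_A - X_A| \le \varepsilon_2 X_A\}$ one has $X_A/S_A \in [1/(1+\varepsilon_2),\, 1/(1-\varepsilon_2)]$, hence $A^*_i/(1+\varepsilon_2) \le A^n_i \le A^*_i/(1-\varepsilon_2)$, which after rearrangement translates into the pointwise CDF sandwich
\begin{equation*}
\FA\bigl(x(1-\varepsilon_2)\bigr) - \delta_n \;\le\; \FAn(x) \;\le\; \FA\bigl(x(1+\varepsilon_2)\bigr) + \delta_n, \qquad \forall x \ge 0.
\end{equation*}
The function $\FA$ from Definition~\ref{def:UnifromDistributions} is piecewise affine with global Lipschitz constant $\LdaB/v^B_i$ and support in $[0, M]$, where $M = v^B_i/\LdaB$ (strong case, $i \in \Ostar$) or $M = v^A_i/\LdaA$ (weak case); in either case the product of slope and support length is at most $1$, since $i \notin \Ostar$ is equivalent to $v^A_i/v^B_i \le \gam = \LdaA/\LdaB$. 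Thus for $x \in [0, M]$ the Lipschitz estimate gives $|\FA(x(1\pm\varepsilon_2)) - \FA(x)| \le (\LdaB/v^B_i)\, x\varepsilon_2 \le \varepsilon_2$, while for $x > M$ the affine form of $\FA$ yields $\FA(x(1-\varepsilon_2)) \ge 1 - \varepsilon_2$ by direct substitution. Combining these estimates produces $\sup_x |\FAn(x) - \FA(x)| \le \varepsilon_2 + \delta_n$, and the symmetric argument handles $\FBn$ versus $\FB$.

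Requiring $\delta_n \le \varepsilon_1/2$ together with $\varepsilon_2 = \varepsilon_1/2$ gives the conclusion as soon as $n \ge C_1\, \varepsilon_1^{-2}\, \logone$ for a constant $C_1$ depending only on $\wmin, \wmax, X_A, X_B$, after absorbing $\ln(4/\varepsilon_1)$ into $\logone$ via Lemma~\ref{lem:log_pre}. The main technical obstacle I anticipate is the careful bookkeeping of $n$-dependent quantities: the Lipschitz constant $\LdaB/v^B_i$ grows linearly in $n$ while the support length $M$ shrinks like $1/n$, and the argument collapses unless one verifies that their product is uniformly bounded independently of $n$---this is where the defining property of $\Ostar$ and the identity $\gam = \LdaA/\LdaB$ enter in an essential way. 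The boundary region $x > M$ also needs a separate treatment since the Lipschitz estimate becomes vacuous there, but the explicit linear form~\eqref{As}--\eqref{Bs} of $\FA$ makes this a one-line calculation.
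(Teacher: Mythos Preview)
Your proposal is correct and follows the same high-level strategy as the paper---Hoeffding on $S_A = \sum_j A^*_j$ to show concentration at $X_A$, then a Lipschitz argument on $\FA$---but the execution differs in a way that is worth noting. The paper works with an \emph{additive} shift: it fixes $\epsilon_n = \tfrac{\varepsilon_1}{4}\tfrac{\wmin}{n\wmax\Lmax}$, bounds $\prob\bigl(\{A^*_i - A^n_i > \epsilon_n\}\cap A_{>0}\bigr)$ by reducing to a Hoeffding deviation of size $1/\tau$ for an auxiliary parameter $\tau$, and then uses $\FA(x+\epsilon_n)-\FA(x)\le \epsilon_n\LdaB/v^B_i\le \varepsilon_1/2$ (their Lemma~\ref{remark_prepare}). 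The $n$-dependence cancels because $\epsilon_n\sim 1/n$ while the slope $\LdaB/v^B_i\sim n$, but this cancellation is somewhat buried in the choice of $\epsilon_n$ and $\tau$. Your \emph{multiplicative} sandwich $\FA(x(1-\varepsilon_2))-\delta_n\le\FAn(x)\le\FA(x(1+\varepsilon_2))+\delta_n$ is more direct: the relative error on $S_A/X_A$ transfers verbatim to $A^n_i/A^*_i$, and you then make explicit the structural reason the bound is uniform in $n$, namely that (slope)$\times$(support length) $=(\LdaB/v^B_i)\cdot M\le 1$ via the defining inequality of $\Ostar$. A side benefit of the multiplicative framing is that the atom of $F_{\AW}$ at $0$ is automatically fixed under $x\mapsto x(1\pm\varepsilon_2)$, and the event $A_{=0}$ is absorbed into $E^c$ without the separate treatment the paper gives it in~\eqref{eq:Prop_all_zero}--\eqref{eq:proofB4_A=0}. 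Both routes yield the same $n\ge C_1\varepsilon_1^{-2}\logone$ threshold; yours just gets there with less bookkeeping.
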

A proof of this lemma is given in~\ref{sec:appen_proof_lem_convergence}. The main intuition of this result comes from the fact that $A^n_i$ (resp. $B^n_i$) is the normalization of $A^*_i, i \in [n]$ (except for the special cases of the events $A_{=0}$ and $B_{=0}$) and the use of concentration inequalities on the random variables $\sum_{j \in [n]} A^*_j$ (and $\sum_{j \in [n]} B^*_j$). In this work, we apply the Hoeffding's inequality (Theorem~2, \citet{hoeffding1963probability}) to obtain the rate of convergence indicated here in Lemma~\ref{lem:convergence}. 
\paragraph{\underline{Step 3:} Prove that the left-hand-side of \eqref{eq:lem_ignore_tie} converges toward the left-hand-side of \eqref{eq:lem_best_res}.} 

Take $C_1$ as indicated in Lemma~\ref{lem:convergence}, we define \mbox{$C^*_1 \!:=\! 16 C_1(\ln(4) \!+ \!1)$} and deduce that \mbox{$ \frac{C^*_1}{\varepsilon^{2}} \logep \! \ge\! C_1 \left(\frac{4}{\varepsilon} \right)^2  \ln \left(\frac{1}{\min\{\frac{\varepsilon}{4}, \frac{1}{\e}  \}} \right)$}.\footnote{This is due to $C^*_1 \cdot \varepsilon^{-2} \logep = C_1 \left(\frac{4}{\varepsilon} \right)^2  \cdot (\ln(4)+1)\logep \ge  C_1 \cdot  \left(\frac{4}{\varepsilon} \right)^2 \ln \left(\frac{4}{\varepsilon} \right)$; here, we have applied Lemma~\ref{lem:log_pre} with $\hep:= \varepsilon$ and $\hC:= 4$; moreover, $\frac{\varepsilon}{4} =  \min\{\frac{\varepsilon}{4}, \frac{1}{e}\}$ since $\varepsilon \le 1$; thus, we can rewrite \mbox{$\ln\left(\frac{4}{\varepsilon} \right) = \ln \left(\frac{1}{\min\{\varepsilon/4, 1/\e\}} \right)$}.}

Therefore, take $\varepsilon_1:= \varepsilon/4$, for any \mbox{$n \ge C^*_1 \varepsilon^{-2} \logep$, we have $n \ge C_1 \varepsilon_1^{-2} \logone$}; apply Lemma~\ref{lem:convergence}, for any pure strategy $\boldsymbol{x}^A$ of Player A, we~have
\begin{align}
    \left| \sum \limits_{i=1}^{n}v^A_i \FBn\left( x^A_i\right) -  \sum \limits_{i=1}^{n}v^A_i \FB\left( x^A_i\right)  \right|
        \le  & \sum \limits_{i=1}^n{v^A_i \mathop {\sup }\limits_{ x \in [0, \infty) }  \left| {{\FBn}\left( {x} \right)}-{{\FB}\left( {x} \right)} \right|} \nonumber \\
        \le & \sum \limits_{i=1}^n{v^A_i \frac{\varepsilon}{4}} = \frac{\varepsilon}{4} \label{lhs}.
\end{align}
\paragraph{\underline{Step 4:} Prove that the right-hand-side of \eqref{eq:lem_ignore_tie} converges toward the right-hand-side of \eqref{eq:lem_best_res}.}
We consider the difference of the involved terms as follows.
\begin{align}
    & \left|\sum\limits_{i = 1}^n {\int_0^\infty  {v_i^A{\FBn(x)} \de {\FAn(x)}} }  - \sum\limits_{i = 1}^n {\int_0^\infty  {v_i^A{\FB(x)} \de {\FA(x)}} }  \right| \nonumber \\
    \le &  \sum\limits_{i = 1}^n v^A_i  \int_0^\infty  {\left|{\FBn(x) - \FB(x)} \right| \de {\FAn(x)}} +  \sum\limits_{i = 1}^n v^A_i \left| {\int_0^\infty  {{\FB}\left( x \right)}  \de {\FAn}\left( x \right) - \int_0^\infty  {{\FB}\left( x \right)}  \de {\FA}\left( x \right)} \right| \label{rhs_1}.
\end{align}
Let us define $C^*_2:= C_1 \cdot 64 (\ln(8)+1)$ (again, $C_1$ is the constant indicated in Lemma~\ref{lem:convergence}), we have that $C^*_2 \varepsilon^{-2} \logep \ge C_1 \left( \frac{8}{\varepsilon} \right)^{2} \ln \left(\frac{1}{\min\{\frac{\varepsilon}{8}, \frac{1}{\e}  \}} \right)$.\footnote{This is due to $C^*_2 \cdot \varepsilon^{-2} \logep = C_1 \left(\frac{8}{\varepsilon} \right)^2  \cdot (\ln(8)+1)\logep \ge  C_1 \cdot  \left(\frac{8}{\varepsilon} \right)^2 \ln \left(\frac{8}{\varepsilon} \right)$; here, we have applied Lemma~\ref{lem:log_pre} with $\hep:= \varepsilon$ and $\hC:= 8$; moreover, $\frac{\varepsilon}{8} =  \min\{\frac{\varepsilon}{8}, \frac{1}{e}\}$ since $\varepsilon \le 1$; thus, we can rewrite \mbox{$\ln\left(\frac{8}{\varepsilon} \right) = \ln \left(\frac{1}{\min\{\varepsilon/8, 1/\e\}} \right)$}.}  Therefore, take $\varepsilon_1:= \varepsilon/8$, for any \mbox{$n \ge C^*_2 \varepsilon^{-2} \logep$}, we have $n \ge \varepsilon_1^{-2} \logone$ and by Lemma \ref{lem:convergence}, we have 
\begin{equation}
    \sum\limits_{i = 1}^n v^A_i  \int_0^\infty  {\left|{\FBn(x) - \FB(x)} \right| \de {\FAn(x)}} \le \sum\limits_{i = 1}^n v^A_i  \int_0^\infty  { \frac{\varepsilon}{8} \de {\FAn(x)}} = \sum\limits_{i = 1}^n v^A_i \frac{\varepsilon}{8} . \label{bla1}
\end{equation}
Now, we need to find an upper-bound of the second term in the right-hand-side of\eqref{rhs_1}. To do this, we present a lemma, called Lemma~\ref{lem:portmanteau} (stated below), that is based on the portmanteau lemma (see, e.g., \citet{van2000asymptotic}) regarding the weak convergence of a sequence of~measures. Note importantly that by a direct application of the portmanteau lemma (since $\FB$ is Lipschitz continuous and from Lemma~\ref{lem:convergence}, $\FAn$ uniformly converges to $\FA$), we can prove that $\int_0^\infty  { \FB \left( x \right)}  \de {\FAn}\left( x \right)$ converges toward $\int_0^\infty  {\FB \left( x \right)}  \de {\FA}\left( x \right)$ as $n \rightarrow \infty$; however, note that the convergence rate obtained by doing this is large due to the fact that the Lipschitz constant of $\FB$ (that is $\LdaA / v^A_i$) increases as $n$ increases. To obtain a better convergence rate as indicated in~Lemma~\ref{lem:portmanteau}, we exploit the properties of the involved functions that allow us to use the telescoping sum trick (see~\ref{sec:appen_proof_lem_portmanteau} for more details).
\begin{lemma}
\label{lem:portmanteau}
    Given $\wmin, \wmax, X_A, X_B >0$ ($\wmin \le \wmax$, $X_A \le X_B$), there exists a constant $C_2>0$ such that for any  $\varepsilon_2\in (0, 1]$, $n \ge C_2 \cdot  {\varepsilon_2^{-2} \logtwo}$ and $ i \in [n]$, we have
    \begin{equation}
    \label{eq:portmanteau_main}
        \left| {\int_0^\infty  { \FB \left( x \right)}  \de {\FAn}\left( x \right) - \int_0^\infty  {\FB \left( x \right)}  \de {\FA}\left( x \right)} \right| \le \varepsilon_2.
    \end{equation}
\end{lemma}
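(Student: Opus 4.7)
The plan is to use Lebesgue--Stieltjes integration by parts to transfer the differential off the $n$-dependent distribution $\FAn$ and onto the explicit distribution $\FB$; this side-steps the Lipschitz constant of $\FB$, which equals $\LdaB/v^B_i = \mO(n)$ under \ref{eq:A0} and is precisely what spoils the rate in a naive portmanteau argument. By Lemma~\ref{lem:Preliminary}$(iv)$ and the definition of $\IU_A$, both $A^n_i$ and $A^*_i$ are supported in $[0,2X_B]$. Applying the integration-by-parts identity for CDFs on $[0,2X_B]$, together with $\FB(0-) = 0$ and $\FB(2X_B) = \FAn(2X_B) = \FA(2X_B) = 1$, I would obtain
\begin{equation*}
\int_0^\infty \FB(x)\, \de \FAn(x) - \int_0^\infty \FB(x)\, \de \FA(x) = \int_0^\infty \left[\FA(x-) - \FAn(x-)\right] \de \FB(x).
\end{equation*}

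The next step is to exploit the explicit piecewise-linear structure of $\FB$ from Definition~\ref{def:UnifromDistributions}: $\FB$ carries a (possibly zero) atom $p_i$ at $0$ and a uniform density $s_i$ on an interval $(0, c_i]$, with $s_i c_i = 1 - p_i \le 1$ in both the strong and weak cases. The atom at $0$ contributes nothing to the right-hand side above, since $\FA(0-) = \FAn(0-) = 0$. On $(0, c_i]$ the measure $\de \FB$ is absolutely continuous, so the atoms of $\FAn$---in particular the one at $X_A$ identified in Lemma~\ref{lem:continuity_Ani_and_Bni}$(iii)$---carry zero $\FB$-mass, whence $\FAn(x-)$ may be replaced by $\FAn(x)$; likewise $\FA(x-) = \FA(x)$, since $\FA$ is continuous on $(0, \infty)$. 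Consequently
\begin{equation*}
\left| \int_0^\infty \FB\, \de \FAn - \int_0^\infty \FB\, \de \FA \right| \le s_i \int_0^{c_i} |\FAn(x) - \FA(x)|\, \de x \le s_i c_i \cdot \sup_{x \ge 0} |\FAn(x) - \FA(x)| \le \sup_{x \ge 0} |\FAn(x) - \FA(x)|,
\end{equation*}
where the last inequality uses $s_i c_i \le 1$.

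Finally I would invoke Lemma~\ref{lem:convergence} with parameter $\varepsilon_1 := \varepsilon_2$ and set $C_2 := C_1$ (the constant produced by that lemma); then whenever $n \ge C_2 \varepsilon_2^{-2} \logtwo$, the supremum above is at most $\varepsilon_2$, establishing~\eqref{eq:portmanteau_main}. The crux of the argument, and the essence of the ``telescoping-sum trick'' alluded to in the main text, is recognizing that the slope $s_i$ (of order $n$) of $\FB$ is exactly offset by the length $c_i$ (of order $1/n$) of its support; this cancellation materializes only after the integration by parts, so that once the differential has been moved off $\FAn$, no Lipschitz dependence on $n$ remains and only the $\tilde{\mO}(n^{-1/2})$-rate uniform convergence from Lemma~\ref{lem:convergence} is required.
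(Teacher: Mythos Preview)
Your proof is correct and takes a genuinely different (and cleaner) route than the paper. The paper discretizes $[0,v^B_i/\LdaB]$ into $K=\lceil M/\delta_2\rceil$ cells (with $M=\frac{\Lmax}{\Lmin}\frac{\wmax^2}{\wmin^2}$), approximates $\FB$ by a step function $g$, and then applies a telescoping rewrite $F_{\BW}(x_j)=\sum_{m\le j}[F_{\BW}(x_m)-F_{\BW}(x_{m-1})]$ to swap the order of summation; after collapsing the inner telescoping sums it obtains a bound of the form $2\delta_2+(4+2M)\sup_x|\FAn-\FA|$, which forces the constant $C_2=C_1(6+2M)^2[\ln(6+2M)+1]$. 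Your Lebesgue--Stieltjes integration by parts is exactly the continuous version of this Abel-summation maneuver: once the differential sits on $\FB$, the same cancellation $s_ic_i\le1$ that the paper obtains via $\sum_m[F_{\BW}(x_m)-F_{\BW}(x_{m-1})]\le M/K\cdot K$ appears automatically as $\int_0^{c_i}s_i\,\de x$. The payoff is that you incur no discretization error $\delta_2$ and no factor $M$, so you may take $C_2=C_1$ directly; the paper's extra factor $(6+2M)^2[\ln(6+2M)+1]$ is an artifact of the Riemann-sum approximation and is avoidable. Both arguments hinge on the same structural observation---the $\mO(n)$ slope of $\FB$ is offset by the $\mO(1/n)$ support length---but yours extracts it more directly.
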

%
%
%
%
The proof of Lemma~\ref{lem:portmanteau} is given~in~\ref{sec:appen_proof_lem_portmanteau}. Based on this constant $C_2$, we define \mbox{$C^*_3:= 8^2 C_2 (\ln8 \!+\!1)$}. Now, take $\varepsilon_2:= \varepsilon/8$, $C^*_3 \varepsilon^{-2} \logep \ge C_2 \varepsilon_2^{-2} \logtwo
$;\footnote{Once again, apply Lemma~\ref{lem:log_pre}, $C^*_3 \varepsilon^{-2} \logep\! = \!C_2 \left( \frac{8}{\varepsilon}\right)^2 (\ln(8)\! +\! 1) \logep \ge C_2 \left( \frac{8}{\varepsilon}\right)^2 \ln\left(\frac{8}{\varepsilon} \right)$; moreover, we have \mbox{$\varepsilon_2\!:= \!\frac{\varepsilon}{8} \!= \!\min \left\{\frac{\varepsilon}{8}, \frac{1}{\e} \right\}$}.}
thus, for $n \ge C^*_3 \varepsilon^{-2} \logep$, we have \mbox{$n \ge C_2 \varepsilon_2^{-2} \logtwo$} and by Lemma \ref{lem:portmanteau}, we deduce
\begin{equation*}
    \left| {\int_0^\infty  { \FB \left( x \right)}  \de {\FAn}\left( x \right) - \int_0^\infty  {\FB \left( x \right)}  \de {\FA}\left( x \right)} \right| \le \varepsilon/8.
    \end{equation*}

%

Combine this with \eqref{rhs_1} and \eqref{bla1}, for any $n = \max\{C^*_2, C^*_3 \} \varepsilon^{-2} \logep$, we~have
\begin{equation}
    \left|\sum\limits_{i = 1}^n {\int_0^\infty  {v_i^A{\FBn(x)} \de {\FAn(x)}} }  - \sum\limits_{i = 1}^n {\int_0^\infty  {v_i^A{\FB(x)} \de {\FA(x)}} }  \right| \le  \sum\limits_{i = 1}^n v^A_i \varepsilon/8 + \sum\limits_{i = 1}^n v^A_i \varepsilon/8  =  \frac{\varepsilon}{4}. \label{rhs}
\end{equation}

\paragraph{\underline{Step 5:} Conclusion.} For any $n \ge \max\{C^*_1, C^*_2, C^*_3\} \varepsilon^{-2} \logep$ and any pure strategy $\boldsymbol{x}^A$ of Player~A, we conclude that
\begin{align*}
     \sum \limits_{i=1}^{n}v^A_i \FBn\left( x^A_i\right) & \le   \sum \limits_{i=1}^{n}v^A_i \FB\left( x^A_i\right)  + \frac{\varepsilon}{4} &  \textrm{(from \eqref{lhs})} \\
    & \le  \sum\limits_{i = 1}^n {\int_0^\infty  {v_i^A{\FB(x)} \de {\FA(x)}} } +\frac{\varepsilon}{4}	& \textrm{(from \eqref{eq:lem_best_res})} \\
    & \le \sum\limits_{i = 1}^n {\int_0^\infty  {v_i^A{\FBn(x)} \de {\FAn(x)}} }  + \frac{\varepsilon}{4} + \frac{\varepsilon}{4} &     \textrm{(from \eqref{rhs})} \\
    & = \sum\limits_{i = 1}^n {\int_0^\infty  {v_i^A{\FBn(x)} \de {\FAn(x)}} }  + \frac{\varepsilon}{2}.
\end{align*}
This is exactly \eqref{eq:lem_ignore_tie}; therefore, apply Lemma~\ref{lem:SufCon} and denote \mbox{$C^*_{\eqref{eq:MainTheo_A}}\!:=\! \max\{C^*_0, C^*_1, C^*_2, C^*_3\}$}, we have proved that \eqref{eq:MainTheo_A} holds for any $n\ge C^*_{\eqref{eq:MainTheo_A}} \varepsilon^{-2} \logep$. Similarly, we can prove that there exists a constant $C^*_{\eqref{eq:MainTheo_B}}$ such that~\eqref{eq:MainTheo_B} holds for any $n \ge C^*_{\eqref{eq:MainTheo_B}} \varepsilon^{-2} \logep$. Finally, define \mbox{$C^*:= \max\{C^*_{\eqref{eq:MainTheo_A}}, C^*_{\eqref{eq:MainTheo_B}}\}$}, we conclude the proof for Result~$(ii)$ of Theorem~\ref{TheoMainBlotto}.

Now, to obtain Result~$(i)$ of Theorem~\ref{TheoMainBlotto}, we prove that it is implied by Result~$(ii)$ of Theorem~\ref{TheoMainBlotto}. Note that the constant $C^*$ found in the Result~$(ii)$ of Theorem~\ref{TheoMainBlotto} does not depend on neither $n$ nor $\varepsilon$. Moreover, the function 
    \begin{align*}
          \xi \colon (0,\infty) &\to (0, \infty)\\
                     \tep  &\mapsto  C^*  \tep ^{-2} \ln\left( \frac{1}{\min\{ \tep , 1/\e\}} \right).
    \end{align*}
is continuous and increases to infinity when $ \varepsilon $ tends to zero. Therefore, for any $n \ge 1$, there exists an \mbox{$ \varepsilon  >0$} such that $n = C^* \varepsilon ^{-2} \logep$. Now, apply Result~$(ii)$, \eqref{eq:MainTheo_A} and~\eqref{eq:MainTheo_B} hold in the game $\CB$ for any $\gam \in \Sn$ and pure strategies $\boldsymbol{x}^A, \boldsymbol{x}^B$. We conclude the proof by notice that if $ \varepsilon  \ge 1/\e $, we have $n=C^*  \varepsilon  ^{-2}$ and thus $ \varepsilon  = \sqrt{n/C^*} = \mO(n^{-1/2})$; on the other hand, if $ \varepsilon  < \frac{1}{\e} $, we have $\ln\left(\frac{1}{\varepsilon} \right) >1$ that induces \mbox{$n=C^*  \varepsilon ^{-2} \ln \left(\frac{1}{ \varepsilon } \right) \ge C^*  \varepsilon  ^{-2} \ge  \frac{C^*}{ \varepsilon }$}, thus, $\frac{1}{ \varepsilon } \le \frac{n}{C^*}$. We deduce that \mbox{$ \varepsilon  = \sqrt{\frac{C^*}{n} \ln\left(\frac{1}{ \varepsilon } \right) } \le \sqrt{\frac{C^*}{n}  \ln\left(\frac{n}{C^*} \right) }  = \tilde{\mO}(n^{-1/2})$}. \qed

\end{proof}

     \subsection{Proof of Lemma~\ref{lem:continuity_Ani_and_Bni}}
    \label{sec:appen_proof_continuity}
\begin{itemize}
    \item[$(i)$] Assuming $A^*_i =0$, if $\sum_{j \neq i}A^*_j =0$ then $A^n_i = 0$ (due to line 3 of Algorithm~\ref{alg:IU_strategy}) and if $\sum_{j \neq i} A^*_j > 0$ then $A^n_i = A^*_i \big/ \sum_{j \in [n]} A^*_j  = 0$. Reversely, assuming $A^n_i = 0$, then regardless whether $\sum_{j \in [n]} A^*_j=0$ or $\sum_{j \in [n]} A^*_j >0$, we have $A^*_i =0$. Therefore, $A^n_i = 0 \Leftrightarrow A^*_i =0$ for any $n$ and $i \in [n]$. Similarly, we can prove that $B^n_i = 0 \Leftrightarrow B^*_i =0$.
    \item[$(ii)$] The results are trivial in cases where $x > X_A$ and $y>X_B$ due to the definition of $A^n_i$ and $B^n_i$ (that guarantees that with probability $1$, $A^n_i \le X_A$ and $B^n_i \le X_B$). In the following, we consider the case where $x\in (0,X_A)$. For any $n$, $i \in [n]$, we denote $Z_i:= \sum_{j \neq i}A^*_j$ and obtain:
    \begin{align*}
        & \prob(A^n_i = x) \\
        =& \prob\left(  \{A^n_i =x \} \bigcap A_{>0}  \right)  & \textrm{(since } x>0) \\
            =& \prob\left(  \left\{A^*_i =\frac{x}{X_A} \sum \nolimits_{j \in [n]}A^*_j \right\} \bigcap A_{>0}  \right) \\
            = & \prob\left(  \left\{A^*_i \left(1- \frac{x}{X_A} \right) =\frac{x}{X_A} \sum \nolimits_{j \neq i}A^*_j \right\} \bigcap A_{>0}  \right)\\
            = & \prob\left(  \left\{A^*_i =\frac{Z_i \cdot x}{X_A -x} \right\} \bigcap A_{>0}  \right) & \textrm{(note that } X_A-x >0) \\
            \le & \prob(\{A^*_i = Z_i =0 \} \cap A_{>0}) + \int_{z>0}\prob \left(A^*_i = \frac{z \cdot x}{X_A - x} \right)  \de F_{Z_i}(z) \\
            \le & \prob(A_{=0} \cap A_{>0}) + \int_{z>0}{0~ \de F_{Z_i}(z) } 
                      \\
            = & 0.
    \end{align*}
Here, the second-to-last inequality comes from the fact that $\frac{z x}{X_A-x} >0, \forall z >0, \forall x \in (0,X_A)$ and $\prob(A^*_i = a) =0 $ for any $a >0$. Similarly, we can prove that $\prob(B^n_i = y) = 0$ for any $y \in (0,X_B)$.

%
\item[$(iii)$] We have 
\begin{align*}
    \prob(A^n_i = X_A) & = \prob\left( \left\{A^*_i = \sum_{j \in [n]}A^*_j  \right\} \bigcap A_{>0} \right) \\
    & \le \prob\left( \sum_{j \neq i} A^*_j =0 \right)\\
    & = \prod_{j \neq i} \prob \left( A^*_j = 0 \right)  & \textrm{(since } A^*_j, j \in [n] \textrm{ are non-negative and independent)}.
\end{align*}
Now, if there exists $j \neq i $ such that $ j \in \Ostar$, then $\prob(A^*_j = 0) =0 $ due to the fact that $A^*_j = A^S_{\gam ,j}$ and the definition of $ A^S_{\gam ,j}$ (see~\eqref{As}). In this case, $ \prod_{j \neq i} \prob \left( A^*_j = 0 \right)=0$. On the other hand, if $j \notin \Ostar $ for any $j \neq i$, then $A^*_j= A^W_{\gam , j}$ for $j \neq i$; therefore,
\begin{equation*}
      \prod_{j \neq i} \prob \left( A^*_j = 0 \right) = \prod_{j \neq i}\left[{\left(\frac{v^B_j}{\LdaB} - \frac{v^A_j}{\LdaA} \right)} \middle/ {\frac{v^B_j}{\LdaB}}\right] = \prod_{j\neq i} \left(1 - \frac{v^A_j}{v^B_j}\frac{\LdaB}{\LdaA} \right) \le \left(1-\frac{\Lmin}{\Lmax} \frac{\frac{\wmin}{n\wmax}}{\frac{\wmax}{n\wmin}}  \right)^{n-1} =  \left(1-\frac{\Lmin}{\Lmax} \frac{\wmin^2}{\wmax^2} \right)^{n-1}.   
\end{equation*}
Here, to obtain the last equality, we use ~\eqref{eq:bound_v^p_i} for the bounds of $v^A_j, v^B_j$ and Proposition~\ref{Prop:BoundLambda} for the bounds of $\LdaA, \LdaB$.

Similarly, we can obtain $\prob(B^n_i = X_B) \le \left(1-\frac{\Lmin}{\Lmax} \frac{\wmin^2}{\wmax^2} \right)^{n-1}$. \qed

\end{itemize}


    \subsection{Proof of Lemma~\ref{lem:SufCon}}
    \label{sec:appen_proof_lem:SufCon}
Fix $\varepsilon \in (0, 1]$ and assume that \eqref{eq:lem_ignore_tie} is satisfied, we prove that~\eqref{eq:MainTheo_A} also holds by comparing the terms in~\eqref{eq:lem_ignore_tie} with the terms in~\eqref{eq:MainTheo_A}. First, due to the fact that $\alpha \le 1$, we can find a lower bound of the left-hand side of \eqref{eq:lem_ignore_tie} as follows:
\begin{align}
    \sum \limits_{i=1}^{n}v^A_i \FBn\left( x^A_i\right) & = \sum \limits_{i=1}^{n}v^A_i \prob(B^n_i = x^A_i) + \sum \limits_{i=1}^{n}v^A_i \prob(B^n_i < x^A_i) \nonumber \\
    & \ge  \alpha \sum \limits_{i=1}^{n}v^A_i \prob(B^n_i = x^A_i) + \sum \limits_{i=1}^{n}v^A_i \prob(B^n_i < x^A_i)  \nonumber \\
    & = \Pi_A(\boldsymbol{x}^A, \IU_B)/W_A. \label{eq:B2_lhs}
\end{align}

Now, we turn our focus to the right-hand-side of~\eqref{eq:lem_ignore_tie}, we can rewrite the involved term as follows.
\begin{align*}
       \sum\limits_{i = 1}^n {\int_0^\infty  {v_i^A{\FBn(x)} \de {\FAn(x)}} } & = \sum\limits_{i = 1}^n {\int_0^\infty  {v_i^A{\prob(B^n_i = x)} \de {\FAn(x)}} } + \sum\limits_{i = 1}^n {\int_0^\infty  {v_i^A{\prob(B^n_i < x)} \de {\FAn(x)}} }.
\end{align*}
We observe that $\sum\limits_{i = 1}^n {\int_0^\infty  {v_i^A{\FBn(x)} \de {\FAn(x)}} } $ is very similar to the expression of $\Pi_A(\boldsymbol{x}^A, \IU_B)$ stated in~\eqref{eq: Pi_A_IU}. The main difference lies at the coefficient of the term related to the tie cases that is the tie-breaking parameter~$\alpha$. Therefore, we consider the following two cases of $\alpha$:

\textit{Case 1:} $\alpha = 1$. For any $n$, divide two sides of~\eqref{eq: Pi_A_IU} (with $\alpha =1$) by $W_A$ and recall that \mbox{$v^A_i:= w^A_i/W_A, \forall i $}, we trivially have \mbox{$\sum\limits_{i = 1}^n {\int_0^\infty  {v_i^A{\FBn(x)} \de {\FAn(x)}} } = \Pi_A(\IU_A, \IU_B) / W_A$} . 

\textit{Case 2:} $\alpha < 1$.  Due to Results~$(ii)$ and $(iii)$ of Lemma~\ref{lem:continuity_Ani_and_Bni}, for any $x >0$, we have \mbox{$\prob(B^n_i = x) \le D^{n-1}$} where we define $D:=  \left(1-\frac{\Lmin}{\Lmax} \frac{\wmin^2}{\wmax^2} \right) < 1$. We consider two cases of $\alpha$ as follows.

\begin{itemize}
    \item If $2(1-\alpha) \le 1$, define $\hat{C}_{1} :=  \frac{1}{\ln(1/D)} + 1>0$, we have that\footnote{If $\varepsilon < 1/ \e$, then $\ln(1/\varepsilon) > 1$ and $\hat{C}_1 \logep = \frac{\ln(1/\varepsilon)}{\ln(1/D)} + \ln(1/\varepsilon) >  \log_{D}{\varepsilon} +1 $; otherwise, if $\varepsilon \ge 1/e$, we have $\ln(1/\varepsilon) \le 1 $ and $\hat{C}_1 \logep =  \frac{1}{\ln(1/D)} + 1 \ge \frac{\ln(1/\varepsilon)}{\ln(1/D)} + 1 =   \log_{D}{\varepsilon} +1$ (note that $\ln(1/D)>0$ since $D<1$).}
    \mbox{$\hat{C}_1 \logep \ge \log_{D} {\varepsilon} + 1$}; therefore, for any $n \ge \hat{C}_1 \logep$, we obtain \mbox{$n-1 \ge \log_D {\varepsilon}$} and
    \begin{equation*}
        D^{n-1} \le D^{\log_D {\varepsilon}} = \varepsilon \le \frac{\varepsilon}{2(1-\alpha)} \qquad (\textrm{note that }  D < 1 \textrm{ and in this case } 2(1-\alpha) \le 1).
    \end{equation*}
     \item If $2(1\!-\!\alpha)\! > \!1$, define $\hat{C}_{2}\! := \! \frac{1}{\ln(1/D)} + \frac{\ln(2\!-\!2\alpha)}{\ln(1/D)} \!+ \! 1 \!> \!0$; we have
     \mbox{$\hat{C}_2 \logep \ge  \log_{D} {\frac{\varepsilon}{2(1-\alpha)}} + 1$}.\footnote{If $\varepsilon< 1/e$, we have \mbox{$\hat{C}_2 \logep =  \log_D{\varepsilon} + \left(   \log_{1/D} {(2-2\alpha)} + 1\right)   \ln\left(\frac{1}{\varepsilon} \right) >   \log_D {\varepsilon} \!- \!\log_D {(2-2\alpha)}  \!+ \!1$}; otherwise, if $\varepsilon \!\ge\! 1/\e$, we have \mbox{$\hat{C}_2 \logep = \hC_2 \ge   \frac{\ln(1/\varepsilon)}{\ln(1/D)} \! + \!   \frac{\ln(2-2\alpha)}{\ln(1/D)}  \! + \! 1 \ge  \log_D {\varepsilon} \!- \!\log_D {(2\!-\!2\alpha)}  \!+ \!1$} (since \mbox{$1 \ge \ln\left(\frac{1}{\varepsilon}\right)$}).}
     We conclude that for any $n \ge \hat{C}_2 \logep$, we obtain \mbox{$n-1 \ge \log_D {\left( \frac{ \varepsilon}{2(1-\alpha)} \right)}$} and
    \begin{equation*}
        D^{n-1} \le D^{\log_D {\frac{\varepsilon}{2(1- \alpha)}}} = \frac{\varepsilon}{2(1-\alpha)}.
    \end{equation*}
\end{itemize}
Let us define $C^*_0 = \max\{\hat{C}_1, \hat{C}_2\} > 0$, we conclude that for any $\alpha < 1$, $n\ge C^*_0 \logep$, $i \in [n]$ and $x >0$, we have
\begin{equation}
    \label{eq:probAn=Bn}
  \prob(B^n_i =x) \le   D^{n-1} \le \frac{\varepsilon}{2(1- \alpha)}.
\end{equation}
%
%

Note also that \mbox{$\prob(A^n_i = B^n_i = 0 ) = \prob(A^n_i = 0) \prob(B^n_i=0) = \prob(A^*_i =0) \prob(B^*_i=0) = 0, \forall i$},\footnote{Note that if \mbox{$i \in \Ostar$} then \mbox{$\prob(A^*_i = 0 ) = 0$}, if $i \notin \Ostar$ then $ \prob(B^*_i = 0 ) = 0$ (see~\eqref{As}-\eqref{A*B*}); therefore, \mbox{$\prob(A^*_i \!=\! 0 ) \prob(A^*_i\!=\!0) \!=\!0, \forall i$}.}
we conclude that when $\alpha < 1$, for any $n \ge C^*_0 \logep$, we have 
\begin{align*}
       & \sum\limits_{i = 1}^n {\int_0^\infty  {v_i^A{\FBn(x)} \de {\FAn(x)}} } \\
        = &  \left[ \sum\limits_{i = 1}^n {\int_0^\infty  {v_i^A{\prob(B^n_i < x)} \de {\FAn(x)}} } + \alpha \sum\limits_{i = 1}^n  {\int_0^\infty  {v_i^A{\prob(B^n_i = x)} \de {\FAn(x)}} }  \right] \\
                     & \hspace{5.5cm}       +  (1-\alpha)\sum\limits_{i = 1}^n  {\int_0^\infty  {v_i^A{\prob(B^n_i = x)} \de {\FAn(x)}} } \\
       = &  \Pi_A(\IU_A, \IU_B)/W_A +  (1-\alpha) \sum\limits_{i = 1}^n  v_i^A {\int_{(0, \infty)} {  \frac{\varepsilon}{2(1-\alpha)}  \de {\FAn(x)}} } + (1-\alpha) \sum_{i=1}^n v^A_i \prob(A^n_i = B^n_i =0) \\
       = & \Pi_A(\IU_A, \IU_B)/W_A  + (1-\alpha) \sum\limits_{i = 1}^n  v_i^A {\int_{(0,\infty)} {  \frac{\varepsilon}{2(1-\alpha)}  \de {\FAn(x)}} } + 0         \\
       \le & \Pi_A(\IU_A, \IU_B)/W_A  + (1- \alpha)\frac{\varepsilon}{2(1-\alpha)} \\
       = & \Pi_A(\IU_A, \IU_B)/W_A  + {\varepsilon}/{2}. 
\end{align*}
 
In conclusion, regardless of the value of $\alpha$, for any $n \ge C^*_0 \logep$, we have 
\begin{equation}
    \label{eq:B2_rhs}
    \sum\limits_{i = 1}^n {\int_0^\infty  {v_i^A{\FBn(x)} \de {\FAn(x)}} } \le \Pi_A(\IU_A, \IU_B)/W_A  + {\varepsilon}/{2}.
\end{equation}

Combine~\eqref{eq:B2_lhs}, \eqref{eq:B2_rhs} and the assumption that \eqref{eq:lem_ignore_tie} holds, for any $n \ge C^*_0 \logep$, we have
\begin{equation*}
    \frac{\Pi_A(\boldsymbol{x}^A, \IU_B)}{W_A} \stackrel{\eqref{eq:B2_lhs}}{\le}  \sum \limits_{i=1}^{n}v^A_i \FBn\left( x^A_i\right) \stackrel{\eqref{eq:lem_ignore_tie}}{\le} \sum\limits_{i = 1}^n {\int_0^\infty  {v_i^A{\FBn(x)} \de {\FAn(x)}} }\! +\! \varepsilon/2 \stackrel{\eqref{eq:B2_rhs}}{\le}  \frac{\Pi_A(\IU_A, \IU_B)}{W_A} \!+\! \varepsilon.
\end{equation*}
By multiplying both sides of this inequality by $W_A$, we obtain \eqref{eq:MainTheo_A}.    \qed

\subsection{Proof of Lemma~\ref{lem:best_response}}
\label{sec:appen_proof_best_respond}

We compute the right-hand-side of \eqref{eq:lem_best_res} based on the definition of $\FA$ and $\FB$ (see Definition~\ref{def:UnifromDistributions}).
\begin{align}
    \sum\limits_{i = 1}^n {\int_0^\infty  {v_i^A{\FB(x)} \de {\FA(x)}} }	& =	\sum\limits_{i \in {\Omega _A}\left( {{\gam }} \right)} {\int_0^\infty  {v_i^A{F_{B_{{\gam },i}^W}(x)} \de {F_{A_{{\gam },i}^S}(x)}} }  + \sum\limits_{i \notin {\Omega _A}\left( {{\gam }} \right)} {\int_0^\infty  {v_i^A{F_{B_{{\gam },i}^S}(x)} \de {F_{A_{{\gam },i}^W}(x)}} } \nonumber \\
    &= \sum\limits_{i \in {\Omega _A}\left( {{\gam }} \right)} {\int_0^{\frac{{v_i^B}}{{\lambda _B^*}}} {v_i^A\left( {\frac{{\frac{{v_i^A}}{{\lambda _A^*}} - \frac{{v_i^B}}{{\lambda _B^*}}}}{{\frac{{v_i^A}}{{\lambda _A^*}}}} + \frac{x}{{\frac{{v_i^A}}{{\lambda _A^*}}}}} \right)\frac{1}{{\frac{{v_i^B}}{{\lambda _B^*}}}} \de x} }  + \sum\limits_{i \notin {\Omega _A}\left( {{\gam }} \right)} {\int_0^{\frac{{v_i^A}}{{\lambda _A^*}}} {v_i^A\frac{x}{{\frac{{v_i^A}}{{\lambda _A^*}}}}\frac{1}{{\frac{{v_i^B}}{{\lambda _B^*}}}} \de x} }  \nonumber \\
    & = \sum\limits_{i \in {\Omega _A}\left( {{\gam }} \right)} v_i^A\left( 1 - \frac{{v_i^B \gam }}{2v^A_i} \right) +  \sum\limits_{i \notin {\Omega _A}\left( {{\gam }} \right)} {(v_i^A)^2\frac{1}{{2\gam  v^B_i}}}. \label{eq:proof_best_res_1}
\end{align}
On the other hand, for any pure strategy $\boldsymbol{x}^A$ of Player A, we have:
\begin{align}
    \sum\limits_{i=1}^n {v_i^A{\FB\left( {x_i^A} \right)} } =    & \sum\limits_{i \in {\Omega _A}\left( {{\gam }} \right)} v_i^A{F_{B_{\gam ,i}^W}}\left( {x_i^A} \right)  + \sum\limits_{i \notin {\Omega _A}\left( {{\gam }} \right)} v_i^A{F_{B_{\gam ,i}^S}}\left( {x_i^A} \right)  \nonumber  \\
    \le 	& \sum\limits_{i \in {\Omega _A}\left( {{\gam }} \right)} v^A_i \left(\frac{\frac{v^A_i}{\LdaA} - \frac{v^B_i}{\LdaB}}{\frac{v^A_i}{\LdaA}} + \frac{x^A_i \LdaA}{v^A_i}\right) + \sum\limits_{i \notin {\Omega _A}\left( {{\gam }} \right)} v_i^A \left( \frac{x^A_i \LdaA}{v^A_i} \right)  \nonumber  \\
    \le 	& \sum\limits_{i \in {\Omega _A}\left( {{\gam }} \right)} {\left( {\frac{{v_i^A}}{{\lambda _A^*}} - \frac{{v_i^B}}{{\lambda _B^*}}} \right)\lambda _A^*}  + \lambda _A^* X_A 	
                && \textrm{ (since } \sum \nolimits_{i=1}^n{ x^A_i} \le X_A \textrm{)} \nonumber  \\
    =& \sum\limits_{i \in {\Omega _A}\left( {{\gam }} \right)} v_i^A\left( 1 - \frac{{v_i^B \gam }}{2v^A_i} \right) +  \sum\limits_{i \notin {\Omega _A}\left( {{\gam }} \right)} {(v_i^A)^2\frac{1}{{2\gam  v^B_i}}} . \label{eq:proof_best_res_2}
\end{align}
Here, to obtain the last equality, we use~\eqref{eq:prelimi_XA} to rewrite $X_A$. Finally, from~\eqref{eq:proof_best_res_1} and \eqref{eq:proof_best_res_2}, we conclude that~\eqref{eq:lem_best_res} holds for any $\boldsymbol{x}^A$ and $\gam $. \qed

    \subsection{Proof of Lemma~\ref{lem:convergence}}
    \label{sec:appen_proof_lem_convergence}

Since the definition of $\FAn$ involves $\prob(A_{=0})$ (see \eqref{eq:A^n_Def}), we first look for an upper-bound of $\prob(A_{=0})$. For any $n$ and $\gam \in \Sn$, if $\Ostar  \neq \emptyset$, i.e., there exists $i$ such that $A^*_i = \AS$, then $\prob(A^*_i = 0) =0 $ due to the definition of $ A^S_{\gam ,i}$ (see~\eqref{As}); in this case, $\prob(A_{=0}) = \prod_{j \in [n]} \prob \left( A^*_j = 0 \right)=0$. On the other hand, if $\Ostar = \emptyset $, then $A^*_j= A^W_{\gam , j}$ for any $j \in [n]$; therefore,
\begin{equation}
    \prob(A_{=0}) =  \prod_{j \in [n]} \prob \left( A^*_j = 0 \right) = \prod_{j \in [n]}\left[{\left(\frac{v^B_j}{\LdaB} - \frac{v^A_j}{\LdaA} \right)} \middle/ {\frac{v^B_j}{\LdaB}}\right] = \prod_{j \in [n]} \left(1 - \frac{v^A_j}{v^B_j}\frac{\LdaB}{\LdaA} \right) \le  \left(1-\frac{\Lmin}{\Lmax} \frac{\wmin^2}{\wmax^2} \right)^n . \label{eq:Prop_all_zero}  
\end{equation}
Here, the last inequality comes directly from~\eqref{eq:bound_v^p_i} and Proposition~\ref{Prop:BoundLambda}. Recall the notation \mbox{$D:= \left(1-\frac{\Lmin}{\Lmax} \frac{\wmin^2}{\wmax^2} \right)$} and define $\tilde{C}_{0}\! :=\!  \frac{\ln(4)+1}{\ln(1/D)}\!>\!0$, we have \mbox{$\tilde{C}_0  \logone \! \ge  \! \log_D {\left(\frac{\varepsilon_1}{ 4} \right)}$}.\footnote{This is due to the fact that \mbox{$\tilde{C}_0 \cdot \logone =  \frac{1}{\ln(1/D)}\left(\ln(4)+1\right)\logone \ge  \frac{\ln(4/\varepsilon_1)}{\ln(1/D)} = \log_D {\left( \frac{\varepsilon_1}{4} \right)}$}; here, we have applied Lemma~\ref{lem:log_pre} (see~\ref{sec:appen_preliminary}) for $\hep := \varepsilon_1$ and $\hC:= 4$.} Therefore, for any \mbox{$n \!\ge\! \tilde{C}_0 \logone$} we have $n \ge \log_D{(\varepsilon_1/4)}$ and since $D<1$ we have:
\begin{equation}
\label{eq:proofB4_A=0}
    \prob(A_{=0}) \le D^n \le D^{\log_D(\varepsilon_1 /4)} = \varepsilon_1/ 4.
\end{equation}

Now, we look for an upper-bound of $\prob(A_{>0})$. For any $n$, define the constants $\epsilon_n:= \frac{\varepsilon_1 }{4} \frac{ \wmin}{n \wmax \Lmax}$ and \mbox{$\tau :=  \frac{1}{X_A} \left( \frac{\wmax}{n \wmin \Lmin}\frac{1}{\epsilon_n} + 1 \right)   = \frac{1}{X_A} \left[ \frac{4 \Lmax}{\varepsilon_1 \Lmin} \left( \frac{\wmax}{\wmin}\right)^2  + 1 \right]$}, we consider the following term for any $i \in [n]$:
\begin{align}
    & \prob\left( \left\{  A^*_i - \frac{A^*_i }{\sum \nolimits_{j \in [n]}{A^*_j}}X_A  > \epsilon_n \right\} \bigcap A_{>0} \right)  \nonumber \\
     & \le  \prob\left( \left\{  \left| A^*_i - \frac{A^*_i }{\sum \nolimits_{j \in [n]}{A^*_j}}X_A \right| > \epsilon_n \right\} \bigcap A_{>0} \right)  \nonumber \\
    %
    & \le  \prob\left(  A^*_i \left|\sum \nolimits _{j \in [n]}{A^*_j} - X_A \right| \!>\! \epsilon_n \sum \nolimits _{j \in [n]}{A^*_j} \right) \nonumber \\
    & =  \prob\left(  A^*_i \left|\sum \nolimits _{j \in [n]}{A^*_j} - X_A \right| \!>\! \epsilon_n X_A \! -\! \epsilon_n \left(X_A \! -\! \sum \nolimits _{j \in [n]}{A^*_j} \right) \right) \nonumber \\
    & \le  \prob\left(  A^*_i \left|\sum \nolimits _{j \in [n]}{A^*_j} - X_A \right| \!>\! \epsilon_n X_A \! -\! \epsilon_n \left|\sum \nolimits _{j \in [n]}{A^*_j}- X_A \right| \right) \nonumber \\
    & = \prob\left(  \left|\sum \nolimits_{j \in [n]}{A^*_j} -  X_A \right| \!>\!  \frac{\epsilon_n X_A}{ A^*_i \! + \!\epsilon_n}\right)  \nonumber \\
     & \le   \prob\left( \left| \sum \nolimits_{j \in [n]}{A^*_j} -  X_A  \right| \!>\!   \frac{\epsilon_n X_A}{\frac{\wmax}{n \wmin  \Lmin}  + \epsilon_n}\right) \nonumber \\
     & =  \prob\left( \left| \sum \nolimits_{j \in [n]}{A^*_j} -  X_A \right| \!>\! \frac{1}{\tau}  \right) \label{1.1}.
\end{align}
Here, the second-to-last inequality comes from the fact that for any $i \in [n]$, $A^*_i$ is upper-bounded by either ${v^A_i}/{\LdaA}$ or~${v^B_i}/{\LdaB}$ (see~\eqref{As} and~\eqref{Aw}), thus, it is bounded by ${\wmax}/({n \wmin \Lmin})$ (due to~\eqref{eq:bound_v^p_i} and Proposition~\ref{Prop:BoundLambda}).

Recall that $X_A= \mathbb{E}\left[\sum \limits _{j=1}^n{A^*_j}\right]$ (see Lemma~\ref{lem:Preliminary}-$(iii)$), we use the Hoeffding's inequality (see e.g., Theorem 2, \citet{hoeffding1963probability}) on the random variables $A^*_i, i \in [n]$ (bounded in $\left[ 0,  {\wmax}/(n \wmin \Lmin)\right]$) to~obtain
\begin{align}
    P\left(  \left| \sum \limits _{j \in [n]}{A^*_j} - X_A \right|  > \frac{1}{\tau} \right) 
    %
    %
	 \le &  2 \exp \left(\frac{-2\frac{1}{\tau^2}}{\sum\limits_{j \in [n]} {\left(  \frac{\wmax}{n \wmin \Lmin} \right)^2}} \right)  \nonumber \\
     = & 2 \exp \left[ \frac{-2n}{\tau ^2}  \left(\frac{ \Lmin \wmin }{\wmax }  \right)^2\right].\label{eq:from_Hoeffiding}
\end{align}
Now, we define \mbox{$\tilde{C}_1:= \frac{1}{2}\left(\frac{4}{X_A} \frac{\Lmax}{\Lmin}  \frac{\wmax^2}{\wmin^2} + \frac{1}{X_A} \right)^2(\ln8 + 1) \left(\frac{\wmax}{\wmin \Lmin} \right)^2$}; due to the definition of $\tau$, we have that\footnote{This is due to \mbox{$\tilde{C}_1 \! \cdot \!\frac{1}{\varepsilon_1^2} \logone \!=\! \frac{1}{2}\left[\frac{1}{X_A} \! \left( \frac{4}{\varepsilon_1} \frac{\Lmax}{\Lmin}  \frac{\wmax^2}{\wmin^2} \! +\!\frac{1}{\varepsilon_1} \right) \right]^2    \left( \ln(8) \! + \! 1 \right) \logone  \left(\frac{\wmax}{\wmin \Lmin} \right)^2 \ge \frac{\tau^2}{2}   \ln \left(\frac{8}{\varepsilon_1} \right)   \left(\frac{\wmax}{\wmin \Lmin} \right)^2$}; here, we have used Lemma~\ref{lem:log_pre} with $\hep:= \varepsilon_1$ and $\hC:=8$ and the fact that $1/\varepsilon \ge 1$.}
\mbox{$\tilde{C}_1 \cdot \frac{1}{\varepsilon_1^2} \logone \ge \frac{\tau^2}{2} \ln \left( \frac{8}{\varepsilon_1 } \right) \left(\frac{\wmax}{\wmin \Lmin} \right)^2$}; therefore, for any $n \ge \tilde{C}_1 {\varepsilon_1^{-2}} \logone$, we can deduce that \mbox{$\frac{2n}{\tau^2}\left(\frac{ \wmin \Lmin }{\wmax }  \right)^2 \!\ge\! \ln\left(\frac{8}{\varepsilon_1} \right) $} and thus,
\begin{equation}
 2 \exp \left[ \frac{-2n}{\tau^2}  \left(\frac{ \Lmin \wmin }{\wmax }  \right)^2\right] \le 2 \exp \left[-\ln\left(\frac{8}{\varepsilon_1} \right) \right] = \frac{\varepsilon_1 }{4}. \label{inequN1}
\end{equation}
Combining \eqref{1.1}, \eqref{eq:from_Hoeffiding} and \eqref{inequN1}, we deduce that
\begin{equation}
    \prob\left( \left\{ A^*_i -  \frac{A^*_i }{\sum \nolimits_{j \in [n]}{A^*_j}}X_A >  \epsilon_n \right\} \bigcap A_{>0} \right)  \le \frac{\varepsilon_1}{4}, \forall n \ge \tilde{C}_1 {\varepsilon_1^{-2}} \logone. \label{eq:bound_E_non_0}
\end{equation}

Finally, note that for any $n$, $i \in [n]$ and $x\ge 0$, we also have
\begin{align}
     & \prob \left( \left\{ \frac{A^*_i \cdot X_A }{\sum \nolimits_{j\in[n]} A^*_j} \le x \right\} \bigcap  A_{>0} \right) \nonumber \\
     = & \prob \left( \left\{ \frac{A^*_i X_A}{\sum \limits_{j\in[n]} A^*_j} \! \le \! x \right\} \bigcap \left\{   A_i^*\!- \! \frac{A^*_i  X_A}{\sum \limits_{j\in[n]} A^*_j}  \! \le \! \epsilon_n \right\} \bigcap A_{>0}  \right) \\       & \hspace{6cm}      \!+\!       \prob \left( \left\{ \frac{A^*_i \!\cdot\! X_A}{\sum \limits_{j\in[n]} A^*_j} \le x \right\} \bigcap \left\{  A_i^* \! - \! \frac{A^*_i  X_A}{\sum \limits_{j\in[n]} A^*_j} \! > \! \epsilon_n \right\} \bigcap A_{>0} \right)  \nonumber \\
     \le & \prob \left(\{A^*_i \le x\! +\! \epsilon_n \}   \right) +  \prob \left( \left\{  A_i^* \! - \! \frac{A^*_i  X_A}{\sum \limits_{j\in[n]} A^*_j} \! > \! \epsilon_n \right\} \bigcap A_{>0} \right).  \label{eq:proof_converge_inter} 
\end{align}

Therefore, define $C_1 := \max\{\tilde{C_0}, \tilde{C_1} \}$, for any $n \ge  C_1 \varepsilon_1^{-2} \logone$, from \eqref{eq:A^n_Def}, we have 
\begin{align} 
    & {\FAn}\left( x \right) - {\FA}\left( x \right)  \nonumber\\
    = &  \prob\left( A_{=0} \right) + \prob \left( \left\{ \frac{A^*_i \cdot X_A }{\sum \nolimits_{j\in[n]} A^*_j} \le x \right\} \bigcap  A_{>0} \right) - {\FA}\left( x \right) \nonumber\\
     \le & \frac{\varepsilon_1}{4} + \prob \left(\{A^*_i \le x\! +\! \epsilon_n \}   \right) +  \prob \left( \left\{  A_i^* \! - \! \frac{A^*_i  X_A}{\sum \limits_{j\in[n]} A^*_j} \! > \! \epsilon_n \right\} \bigcap A_{>0} \right) - {\FA}\left( x \right) && (\textrm{due to } \eqref{eq:proofB4_A=0} \textrm{ and } \eqref{eq:proof_converge_inter}) \nonumber\\ 
     \le &  \frac{\varepsilon_1}{4}  + \FA(x+ \epsilon_n) + \frac{\varepsilon_1}{4} - \FA(x)   && (\textrm{due to } \eqref{eq:bound_E_non_0}) . \label{eq:final_lem_converge}
\end{align}

The final step is to bound the term $\FA(x + \epsilon_n) - \FA(x)$; we present this as the following lemma.
\begin{lemma}
\label{remark_prepare}
     For any $\epsilon >0$, $n>0$, $i \in [n]$ and $x \in [0,\infty)$, we have ${\FA}\left( {x + {\epsilon}} \right)  - {\FA}\left( x \right)  \le \frac{\epsilon \LdaB}{v^B_i}$.
\end{lemma}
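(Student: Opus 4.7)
The plan is to exploit the explicit uniform-type structure of $\FA$ given in Definition~\ref{def:UnifromDistributions} and to split the analysis into the two cases determined by whether $i \in \Ostar$ or not, bounding the increment of the CDF in each case directly.

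First I would observe that, in both cases, the continuous part of the density of $\FA$ is exactly $\LdaB/v^B_i$. Indeed, if $i \in \Ostar$, then $\FA = F_{\AS}$ is the continuous uniform CDF on $[0, v^B_i/\LdaB]$, whose density is $\LdaB/v^B_i$ on that interval and $0$ outside. If $i \notin \Ostar$, then $\FA = F_{\AW}$ places a point mass $1 - v^A_i\LdaB/(v^B_i\LdaA) \ge 0$ at $0$ and distributes the remaining mass uniformly on $(0, v^A_i/\LdaA]$ with density again equal to $\LdaB/v^B_i$ (one can read this directly from the affine expression of $F_{\AW}$ in \eqref{Aw}). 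In both cases $\FA$ is constant equal to $1$ beyond the upper endpoint of the support.

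Next I would fix $x \ge 0$ and $\epsilon > 0$ and write
\[
    \FA(x+\epsilon) - \FA(x) = \prob\bigl(x < A^*_i \le x+\epsilon\bigr).
\]
Since $x \ge 0$, the interval $(x, x+\epsilon]$ does not contain $0$, so the possible point mass of $\FA$ at $0$ (already encoded in $\FA(0)$) does not contribute to this increment. Therefore the increment equals the integral of the absolutely continuous density over $(x, x+\epsilon]$ intersected with the support. Denoting by $u$ the upper endpoint of the support (either $v^B_i/\LdaB$ or $v^A_i/\LdaA$), this integral equals $(\LdaB/v^B_i)\cdot\bigl|\,(x, x+\epsilon] \cap [0, u]\,\bigr|$, which is at most $(\LdaB/v^B_i)\epsilon$ in all cases (including $x \ge u$, where the increment is $0$).

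This gives exactly $\FA(x+\epsilon) - \FA(x) \le \epsilon \LdaB / v^B_i$ as claimed. The only subtlety is the jump of $\FA$ at $0$ in the case $i \notin \Ostar$; but since the lemma restricts to $x \in [0,\infty)$ (not $x < 0$), this jump is always absorbed into $\FA(0)$ and never appears in the increment considered, so no difficulty arises. The proof is therefore essentially a direct computation from the definitions and requires no further machinery.
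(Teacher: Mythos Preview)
Your proposal is correct and takes essentially the same approach as the paper: both split into the cases $i\in\Ostar$ and $i\notin\Ostar$ and bound the increment directly from the explicit uniform-type form of $\FA$. The only cosmetic difference is that the paper carries out an explicit three-subcase computation on the position of $x$ relative to the support endpoints, whereas you package the same computation via the observation that the absolutely continuous density is $\LdaB/v^B_i$ and the point mass at $0$ never falls in $(x,x+\epsilon]$ for $x\ge 0$.
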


\begin{proof}

If $i\in \Ostar$, then $A^*_i = \AS$ and 
\begin{equation}
\label{inequaFAS}
F_{\AS}(x+ \epsilon) - F_{\AS}(x) = \left\{ \begin{array}{l}
\frac{(x+\epsilon)\LdaB}{v^B_i} - \frac{x \LdaB}{v^B_i} = \frac{\epsilon \LdaB}{v^B_i}, \textrm{ if } 0 \le x < \frac{v^B_i}{\LdaB} - \epsilon \\
    1 - \frac{x\LdaB}{v^B_i} \le \frac{\epsilon \LdaB}{v^B_i}, \qquad \quad \textrm{ if } \frac{v^B_i}{\LdaB} - \epsilon \le x \le \frac{v^B_i}{\LdaB}\\
1 - 1 \le \frac{\epsilon v^B_i}{\LdaB}, \qquad \qquad \quad \textrm{ if } x > \frac{v^B_i}{\LdaB}\\
\end{array} \right..
\end{equation}
On the other hand, if $i\notin \Ostar$, then $A^*_i = \AW$ and we have 
\begin{equation}
\label{inequaFAW}
F_{\AW}(x+ \epsilon) - F_{\AW}(x) = \left\{ \begin{array}{l}
\frac{(x+\epsilon)\LdaB}{v^B_i} - \frac{x \LdaB}{v^B_i} = \frac{\epsilon \LdaB}{v^B_i}, \quad \textrm{ if } 0 \le x < \frac{v^A_i}{\LdaA} - \epsilon\\
1 - \frac{\frac{v^B_i}{\LdaB} - \frac{v^A_i}{\LdaA}}{ \frac{v^B_i}{\LdaB}}-\frac{x\LdaB}{v^B_i} \le \frac{\epsilon \LdaB}{v^B_i},  \textrm{ if } \frac{v^A_i}{\LdaA} - \epsilon \le x \le \frac{v^A_i}{\LdaA}\\
1 - 1 \le \frac{\epsilon v^B_i}{\LdaB}, \quad \qquad \qquad \quad \textrm{ if } x > \frac{v^A_i}{\LdaA}\\
\end{array} \right.. 
\end{equation} \qed
\end{proof}

Combine this lemma with ~\eqref{eq:final_lem_converge} and recall the definition of $\epsilon_n$ (which induces that \mbox{$\epsilon_n \LdaB / v^B_i \le \varepsilon_1 /2$}), we conclude that $\FAn(x)\! - \! \FA(x) \le \varepsilon_1$ for any $n \ge C_1 \varepsilon_1^{-2} \logone$. Similarly, for \mbox{$n \ge C_1 \varepsilon_1^{-2} \logone$} and $i \in [n]$, we can deduce that \mbox{${\FAn}\left( x \right)\! -\! {\FA}\left( x \right) \! \ge\! \!-\varepsilon_1$} for any $x \in [0, \infty)$. We conclude that for any $n \ge C_1 \varepsilon_1^{-2} \logone$, \mbox{$\mathop {\sup }\limits_{ x \in [0, \infty) } \left|\FAn(x) - \FA(x)\right| \le \varepsilon_1$}. The inequality \mbox{$\mathop {\sup }\limits_{ x \in [0, \infty) } \left|\FBn(x) \! - \! \FB(x)\right| \le \varepsilon_1$} can be proved in a similar way. \qed
%
%
    \subsection{Proof of Lemma~\ref{lem:portmanteau}}
    \label{sec:appen_proof_lem_portmanteau}
In this proof, we will use the notation $\Ex f(X):= \int_{0}^\infty {f(z) \de F_Z(x)} $ and $\Ex_{\mathcal{I}} f(X):= \int_{\mathcal{I}} {f(z) \de F_Z(x)}$ for any function $f$, random variable $Z$ and interval $\mathcal{I}$. To simplify the notation, let us define $M:= \frac{\Lmax}{\Lmin} \frac{\wmax^2}{\wmin^2}$ and we denote by $\mathcal{I}_i$ the interval $\left[ 0,{v^B_i}/{\LdaB} \right]$. For any $\varepsilon_2 \in (0,1]$, we define $\delta_2:=\frac{\varepsilon_2}{6 + 2M}$. We first consider the case where $i \in \Ostar$, i.e., $B^*_i = \BW$. Note that \mbox{$\FAn(x) = \FA(x) = 1$} for any $x \ge 2X_B$ (see Lemma~\ref{lem:Preliminary}-$(iv)$); the left-hand-side of~\eqref{eq:portmanteau_main} can be rewritten as follows.
\begin{align}
    & \left|\Ex \FB(A^n_i) - \Ex \FB(A^*_i)  \right|  \nonumber  \\ 
    = & \left| {\int_{[0,2X_B]}  { \FB \left( x \right)}  \de {\FAn}\left( x \right) - \int_{[0,2X_B]}  {\FB \left( x \right)}  \de {\FA}\left( x \right)} \right| \nonumber \\
    \le & \left| \Ex_{[0, v^B_i/ \LdaB]} F_{\BW} (A^n_i) - \Ex_{[0, v^B_i/ \LdaB]} F_{\BW} (A^*_i) \right|  +  \left| {\int \limits_{\left({v^B_i}/{\LdaB}, 2X_B \right]}   \de {\FAn}\left( x \right)  \! -\! \int \limits_{\left({v^B_i}/{\LdaB}, 2X_B \right]}   \de {\FA}\left( x \right)} \right| \nonumber \\
    = & \left| \Ex_{\mathcal{I}_i} F_{\BW} (A^n_i) - \Ex_{\mathcal{I}_i} F_{\BW} (A^*_i) \right|  +  \left|\FAn(2X_B) - \FAn(v^B_i/\LdaB) -  \FA(2X_B) + \FA(v^B_i/\LdaB) \right| \nonumber \\
    \le & \left| \Ex_{\mathcal{I}_i} F_{\BW} (A^n_i) - \Ex_{\mathcal{I}_i} F_{\BW} (A^*_i) \right|  \! + 2 \sup \limits_{x \in [0, \infty)} \left| \FAn(x) - \FA(x) \right|. \label{eq:portmanteau_case1}
\end{align}
We now focus on bounding the first term in~\eqref{eq:portmanteau_case1}. Let us define $K := \big \lceil  \frac{M}{\delta_2} \big \rceil$ and $K + 1$ points ${x}_j$ such that ${x_0} := 0$ and ${x}_j := {x}_{j-1} +  \frac{v^B_i }{\LdaB  K}, \forall j \in [K]$. In other words, we have the partitions \mbox{$\mathcal{I}_i = \bigcup\nolimits_{j = 1}^{K} {P_j}$} where we denote by $P_1$ the interval $[{x}_0,{x}_1]$ and by $P_j$ the interval $({x}_{j-1}, {x}_{j}]$ for $j = 2, \ldots, K$. For any $ x, x^{\prime} \in P_j, \forall j \in [K]$, from the definition of $\BW$, we have 
\begin{equation}
    |F_{\BW}(x)- F_{\BW}(x^{\prime}) | =  \frac{\LdaA}{v^A_i}|x - x^{\prime}| \le \frac{\LdaA}{v^A_i} \cdot \frac{v^B_i}{\LdaB} \cdot \frac{1}{K}  \le \frac{\Lmax n \wmax}{\wmin} \cdot \frac{\wmax}{n \wmin \Lmin} \cdot \frac{1}{K}  = \frac{M}{K} \le \delta_2 . \label{eq:port_delta_2}    
\end{equation}
Now, we define the function $g(x)\!:=\! \sum \limits_{j=1}^{K} {F_{\BW}(x_j) \boldsymbol{1}_{P_j} (x)}$. Here, $\boldsymbol{1}_{P_j}$ is the indicator function of the set ${P_j}$. From this definition and Inequality~\eqref{eq:port_delta_2}, we trivially have \mbox{$| F_{\BW}(x) - g(x) | \le \delta_2$}, $\forall x \in \mathcal{I}_i$. Therefore, 
\begin{align}
& \left| \Ex_{\mathcal{I}_i} F_{\BW} (A^n_i) - \Ex_{\mathcal{I}_i} g  (A^n_i)  \right| \le  \int _{I_i} \left| F_{\BW}(x) - g(x) \right| \de \FAn(x) \le \int_{\mathcal{I}_i} \delta_2 \de \FAn(x) \le \delta_2, \label{eq:triangle1}\\
& \left| \Ex_{\mathcal{I}_i} F_{\BW} (A^*_i) - \Ex_{\mathcal{I}_i} g  (A^*_i)  \right|  \le  \int _{I_i} \left| F_{\BW}(x) - g(x) \right| \de \FA(x) \le \int_{\mathcal{I}_i} \delta_2 \de \FA(x) \le \delta_2 .\label{eq:triangle2}
\end{align}
Now, we note that for any $j \in [K]$, $F_{\BW}(x_j) = \sum \limits_{m=0}^{j} \left[{F_{\BW}(x_m) -  F_{\BW}(x_{m-1})} \right]$; here, for the sake of notation, we denote by $x_{-1}$ an arbitrary negative number (that is $F_{\BW}(x_{-1}) = 0$). Using this, we have:
\begin{align}
     &\left|\Ex_{\mathcal{I}_i} g \left(A^n_i \right) - \Ex_{\mathcal{I}_i} g \left(A^*_i \right) \right|\nonumber \\
    = & \left| \sum \limits_{j=1}^K F_{\BW}(x_j) \left[{  \Ex_{\mathcal{I}_i} \boldsymbol{1} _{P_j}\left( A^n_i\right)} -  {\Ex_{\mathcal{I}_i} \boldsymbol{1} _{P_j}\left( A^*_i\right)}  \right] \right| \nonumber \\ %
    = & \left| \sum \limits_{j=1}^K  F_{\BW}(x_j) \left[  \prob\left(A^n_i \in P_j \right)\!- \! \prob\left(A^*_i \in P_j \right) \right] \right| \nonumber \\
    = & \left| \sum \limits_{j=1}^K \left(  \sum \limits_{m=0}^{j}
    \left[   {F_{\BW}(x_m) -  F_{\BW}(x_{m-1})} \right]   \left[\prob\left(A^n_i \in P_j \right)\!- \! \prob\left(A^*_i \in P_j \right) \right]  \right)      \right| \nonumber \\
    \le & \left| \left[  F_{\BW}(x_0) -  F_{\BW}(x_{-1}) \right] \sum \limits_{j=1}^K  \left[ \prob\left(A^n_i \in P_j \right)\!- \! \prob \left(A^*_i \in P_j \right)  \right]  \right|   \nonumber \\
        &  \qquad + \left| \sum \limits_{m=1}^K \left( \left[  F_{\BW}(x_m) -  F_{\BW}(x_{m-1}) \right] \sum \limits_{j=m}^K  \left[ \prob\left(A^n_i \in P_j \right)\!- \! \prob\left(A^*_i \in P_j \right)  \right] \right) \right|. \label{eq:portmanteau_bridge}
\end{align}
Note that \mbox{$\prob\left(A^n_i \in P_j \right)\!- \! \prob \left(A^*_i \in P_j \right) \!=\!  \FAn(x_j) \!-\!\FAn(x_{j-1})\! -\! \FA(x_j)\! +\! \FA(x_{j-1}) $}.\footnote{For any $j \! \ge \! 2$, this is trivially since $P_j \!:= \! (x_{j-1}, x_j]$. For $P_1 = [0,x_1]$,~we have~that \mbox{$\prob\left(A^n_i \in P_1 \right)\!- \! \prob \left(A^*_i \in P_1 \right) =   \prob(A^n_i \in (0, x_1] ) \!- \!\prob(A^*_i \in (0, x_1] )  \!+\! \prob(A^n_i = 0)\! - \!\prob(A^*_i = 0) $}; moreover, due to Lemma~\ref{lem:continuity_Ani_and_Bni}-$(i)$, we also note that \mbox{$\prob(A^n_i \!= \!0) \!=\! \prob(A^*_i\!=\! 0)$}.}
Moreover, due to the fact that \mbox{$x_0 =0$} and $F_{\BW}(x_{-1}) = 0$, we can rewrite the first term in \eqref{eq:portmanteau_bridge} as follows:
\begin{align}
    & \left| \left[  F_{\BW}(x_0) -  F_{\BW}(x_{-1}) \right] \sum \limits_{j=1}^K  \left[ \prob\left(A^n_i \in P_j \right)\!- \! \prob \left(A^*_i \in P_j \right)  \right]  \right| \nonumber \\
    = & \left| F_{\BW}(0) \cdot \left[ \sum \limits_{j=1}^K  \left(\FAn(x_j) - \FAn(x_{j-1}) - \FA(x_j) + \FA(x_{j-1}) \right)   \right]  \right| \nonumber \\
    = & \left| F_{\BW}(0) \cdot \left[\FAn(x_K) - \FAn(x_0) - \FA(x_K) + \FA(x_{0})   \right]  \right| \nonumber \\
    \le & F_{\BW}(0) \cdot 2 \sup \limits_{x \in [0, \infty)} {\left| \FAn(x) - \FA(x) \right| } \nonumber \\
    \le & 2 \sup \limits_{x \in [0, \infty)} {\left| \FAn(x) - \FA(x) \right| } . \label{eq:portmanteau_final1}
\end{align}
Now, recall that $x_m = x_{m-1} +  {v^B_i}/{(\LdaB \cdot K)}, \forall m \in [K]$, by the definition of $F_{\BW}$, we deduce that \mbox{$ F_{\BW}(x_m) -  F_{\BW}(x_{m-1}) =  \frac{v^B_i}{\LdaB  K}\frac{\LdaA}{v^A_i} \le  \frac{\Lmax}{\Lmin} \frac{\wmax^2}{\wmin^2} \frac{1}{K}= \frac{M}{K}, \forall m \in [K] $}. Therefore, the second term in \eqref{eq:portmanteau_bridge}~is
\begin{align}
     &\left| \sum \limits_{m=1}^K \left( \left[  F_{\BW}(x_m) -  F_{\BW}(x_{m-1}) \right] \sum \limits_{j=m}^K  \left[ \prob\left(A^n_i \in P_j \right)\!- \! \prob\left(A^*_i \in P_j \right)  \right] \right) \right| \nonumber \\
     = & \left| \sum \limits_{m=1}^K \left(  \left[  F_{\BW}(x_m) -  F_{\BW}(x_{m-1}) \right] \sum \limits_{j=m}^K  \left[\FAn(x_j) - \FAn(x_{j-1}) - \FA(x_j) + \FA(x_{j-1})   \right]  \right) \right| \nonumber \\
     = & \left| \sum \limits_{m=1}^K \left(  \left[  F_{\BW}(x_m) -  F_{\BW}(x_{m-1}) \right] \left[\FAn(x_K) - \FAn(x_{m-1}) - \FA(x_K) + \FA(x_{m-1})   \right]  \right) \right| \nonumber \\
     \le & \sum \limits_{m=1}^K \left( F_{\BW}(x_m) -  F_{\BW}(x_{m-1}) \right) \cdot  2 \sup \limits_{x \in [0, \infty)} {\left| \FAn(x) - \FA(x) \right| } \nonumber \\
     \le & \sum \limits_{m=1}^K  \frac{M}{K} \cdot  2 \sup \limits_{x \in [0, \infty)} {\left| \FAn(x) - \FA(x) \right| } \nonumber \\
     = & 2 M \sup \limits_{x \in [0, \infty)} {\left| \FAn(x) - \FA(x) \right| }. \label{eq:portmanteau_final2}
\end{align}
Inject~\eqref{eq:portmanteau_final1} and~\eqref{eq:portmanteau_final2} into~\eqref{eq:portmanteau_bridge}, we obtain that
\begin{equation}
    \left|\Ex_{\mathcal{I}_i} g \left(A^n_i \right) - \Ex_{\mathcal{I}_i} g \left(A^*_i \right) \right| \le \left( 2 + 2 M \right) \sup \limits_{x \in [0, \infty)} {\left| \FAn(x) - \FA(x) \right| }. \label{eq:triangle3}
\end{equation}
Apply the triangle inequality and combine~\eqref{eq:triangle1},~\eqref{eq:triangle2}, \eqref{eq:triangle3}, we have that:
\begin{equation*}
    \left| \Ex_{\mathcal{I}_i} F_{\BW} (A^n_i) - \Ex_{\mathcal{I}_i} F_{\BW} (A^*_i) \right| \le 2 \delta_2 + \left( 2 + 2M \right) \sup \limits_{x \in [0, \infty)} {\left| \FAn(x) - \FA(x) \right| }.
\end{equation*}
From this and~\eqref{eq:portmanteau_case1}, we obtain that 
\begin{equation}
    |\Ex \FB(A^n_i) - \Ex \FB(A^*_i) | \le 2 \delta_2 + \left( 4 + 2M \right) \sup \limits_{x \in [0, \infty)} {\left| \FAn(x) - \FA(x) \right| }. \label{eq:portmanteau_the_end}
\end{equation}
Recall the constant $C_1$ indicated in Lemma~\ref{lem:convergence}, we define \mbox{$C_2: = C_1 \cdot \left(6 + 2 M \right)^2 \left[\ln(6 + 2 M) + 1 \right]$} (note that $C_2$ does not depend on $n$ nor $\varepsilon_2$) and deduce that $C_2 \varepsilon_2^{-2} \logtwo \ge C_1 \delta_2^{-2} \ln\left(\frac{1}{\min\{\delta_2, 1/\e\}} \right)$.\footnote{Apply Lemma~\ref{lem:log_pre}, \mbox{$C_2 \varepsilon_2^{-2} \logtwo \!= \!C_1 \left( \frac{6 \! +\! 2M}{\varepsilon_2} \right)^2 \left[\ln(6 \!+\! 2M) \!+\! 1 \right] \logtwo  \!\ge\! C_1  \left( \frac{6 \! + \! 2M}{\varepsilon_2} \right)^2 \ln\left(\frac{6 \! + \! 2M}{\varepsilon_2} \right)$}. Moreover, since $\frac{\varepsilon_2}{6+2M} =  \min\left\{\frac{\varepsilon_2}{6+2M}, \frac{1}{\e} \right\} = \min\left\{\delta_2, \frac{1}{\e} \right\} $ (due to the fact that $\delta_2 = \varepsilon_2 /(6+2M) < 1/\e$). Therefore, we have \mbox{$C_2 \varepsilon_2^{-2} \logtwo \ge C_1 \delta_2^{-2} \ln\left(\frac{1}{\min\{\delta_2, 1/\e\}} \right)$}.}
Take $\varepsilon_1:= \delta_2$, for any $n \ge C_2 \varepsilon_2^{-2} \logtwo$, we have \mbox{$n \!\ge\!  C_1 \varepsilon_1^{-2} \logone$} and by applying Lemma~\ref{lem:convergence}, we obtain \mbox{$\sup \limits_{x \in [0, \infty)} {\left| \FAn(x) - \FA(x) \right| } \le \varepsilon_1 = \delta_2 $} and thus by~\eqref{eq:portmanteau_the_end}, we have:
\begin{equation*}
    |\Ex \FB(A^n_i) - \Ex \FB(A^*_i) | \le 2 \delta_2 + \left( 4 + 2M \right) \delta_2 = (6 + 2M) \delta_2 = \varepsilon_2. 
\end{equation*}
This is exactly~\eqref{eq:portmanteau_main}. We can have a similar result in the case where $i \notin \Ostar$ (its proof is omitted here) and we conclude the proof of this lemma.
\qed

\section{Proof of Results in Section~\ref{sec:LotteryApproximation}}
\label{sec:Appendix_Lotte}

    \subsection{Proof of Lemma~\ref{lem:deltalemma}}
    \label{sec:appen_proof_lem_delta}
\deltalemma*
Fix $y^* \in [0,2X_B]$, we look for the condition on $n$ such that \mbox{$\int_{\X(y^*, \varepsilon)} \de \FAn(x) \le \delta \!+ \!\varepsilon$} holds. The condition corresponding to the inequality $\int_{\Y(x^*, \varepsilon)} \de \FBn(x) \le \varepsilon + \delta$ with $x^* \in [0,2X_B]$ can be proved similarly and thus is omitted in this section. 

First, we note that if $\X(y^*, \varepsilon)$ is empty, $\int_{\X(y^*, \varepsilon)} \de \FAn(x)=0$ and the result trivially holds. Now, let us assume that $\X(y^*, \varepsilon) \neq \emptyset$, we can write $\X(y^*, \varepsilon) = I_1 \bigcup I_2 \bigcup I_3$ with\footnote{Recall that by definition, $\beta_A(x,y^*) = \alpha$ if $x=y^*$, $\beta_A(x,y^*) = 0$ if $x<y^*$ and $\beta_A(x,y^*) = 1$ if $x>y^*$}
\begin{align*}
    I_1 &:= \{x \in [0,2X_B]: x=y^*, |\zeta_A(x,y^*) - \alpha| \ge \varepsilon \},  \\
    I_2 &:= \{x \in [0,2X_B]: x < y^*,  \zeta_A(x,y^*) \ge \varepsilon \}, \\
    I_3 &:= \{x \in [0,2X_B]: x>y^*,  1 -\zeta_A(x,y^*) \ge \varepsilon \}.
\end{align*}
It is trivial that $I_1$ is either an empty set or a singleton; on the other hand, due to the monotonicity of the CSF $\zeta_A$ (see $(C2)$, Definition~\ref{def:CSF_general}), $I_2$ and $I_3$ are either empty sets or half intervals. Moreover, for any arbitrary distribution $F$, we have that

\begin{equation*}
    \int_{ x \in I^{\prime}} \de F(x) = \left\{ \begin{array}{l}
    0 \text{ , if } I^{\prime} = \emptyset, \\
    F(a)  \text{ , if } I^{\prime} = \{a\}, \textrm{i.e., } I^{\prime} \textrm{ is a singleton},\\
    F(b) - F(a) \text{ , if } I^{\prime} = (a,b], \textrm{i.e., } I^{\prime} \textrm{ is a half interval}.
    \end{array} \right. \quad
\end{equation*}

Therefore, we can deduce that

\begin{equation*}
    \int_{\X(y^*, \varepsilon)} \de \FAn(x)  -  \int_{\X(y^*, \varepsilon)} \de \FA(x) =  \sum_{j=1}^3 \left( \int_{I_j} \de \FAn(x) - \int_{I_j} \de \FA(x) \right)     \le 5 \sup_{x \in [0,\infty)}{|\FAn(x) - \FA(x)| } 
\end{equation*}
Recall the constant $C_1$ indicated in Lemma~\ref{lem:convergence}, we define $L_0:= C_1 5^2 (\ln(5) + 1)$. Note that $L_0$ does not depend on the choice of $y^*$. Take $\varepsilon_1:= \varepsilon /5$, we can deduce that \mbox{$L_0 \varepsilon^{-2}  \logep  \ge C_1 \varepsilon_1^2 \logone$}.\footnote{Note that $\varepsilon_1 = \frac{\varepsilon}{5}$ and apply Lemma~\ref{lem:log_pre}, $L_0 \cdot \varepsilon^{-2} \logep = C_1 \left(\frac{5}{\varepsilon} \right)^2  \cdot (\ln(5)+1)\logep \ge  C_1 \cdot  \left(\frac{5}{\varepsilon} \right)^2 \ln \left(\frac{5}{\varepsilon} \right)$; moreover, $\frac{\varepsilon}{5} =  \min\{\frac{\varepsilon}{5}, \frac{1}{e}\}$ since $\varepsilon \le 1$; thus, we can rewrite $\ln\left(\frac{5}{\varepsilon} \right) =  \ln \left(\frac{1}{\min\{\varepsilon/5, 1/\e\}} \right) = \ln \left(\frac{1}{\min\{\varepsilon_1, 1/\e\}} \right)$.} Therefore, for any $n \ge  L_0 \varepsilon^{-2}  \logep$, we have $n \ge C_1 \varepsilon_1^2 \logone$ and by Lemma~\ref{lem:convergence}, $\sup \limits_{x \in [0,\infty)}{|\FAn(x) - \FA(x)| } \le \varepsilon_1 = \varepsilon/5 $. Hence, for any \mbox{$n \ge  L_0 \varepsilon^{-2}  \logep$} and $\delta \in \Del$, 
\begin{equation*}
    \int_{\X(y^*, \varepsilon)} \de \FAn(x)  \le  \int_{\X(y^*, \varepsilon)} \de \FA(x) + 5 \cdot \varepsilon/5 \le \delta +  \varepsilon.    
\end{equation*}

\qed


%
%
%

        \subsection{Proof of Theorem~\ref{theo:Lottery_generic_approx}}
        \label{sec:appen_proof_theoLottery}

\LotteTheo*
\begin{proof}

We first give the proof of Result~$(ii)$. For the sake of brevity, we only focus on \eqref{eq:lottery_theo_A}. The proof that \eqref{eq:lottery_theo_B} holds under the same condition can be done similarly and thus is omitted. Note that in this proof, we often use the Fubini's Theorem to exchange the order of the double integrals. 

Recall that $\boldsymbol{x}^A = (x^A_i)_{i \in [n]}$, by the definition of the payoff functions in $\LB(\zeta)$, \eqref{eq:lottery_theo_A} can be rewritten as
\begin{equation}
    \sum\limits_{i = 1}^n {\left( {{v^A_i}\int\limits_0^\infty  {\zeta_A\left( {x_i^A,y} \right) \de {\FBn}\left( y \right)} } \right)}  - \sum\limits_{i = 1}^n {\left( {{v^A_i}\int\limits_0^\infty  {\int\limits_0^\infty  {\zeta_A\left( {x,y} \right) \de {\FAn}\left( x \right) \de {\FBn}\left( y \right)} } } \right)}  \le 8\delta  + 13 \varepsilon. \label{eq:lottery_proof_rewrite_A}
\end{equation} 
We now prove that~\eqref{eq:lottery_proof_rewrite_A} holds under appropriate parameters values. To do this, we prepare two useful lemmas as follows. 

\begin{lemma}
\label{lem:proof_lottery_LBn_prepare1}
For any pair of CSFs $\zeta = (\zeta_A, \zeta_B)$, any $\varepsilon \in (0,1]$ and $x^* \in [0,2X_B]$, the following~results~hold:
    \begin{itemize}
        \item[$(i)$] For any $n$, $i \in [n]$ and $\delta \in \Del$,
        \begin{equation}
            \left| \int\limits_0^\infty  {\zeta_A\left( {x^*,y} \right) \de {\FB}\left( y \right)} - \int\limits_0^\infty  {\beta_A\left( {x^*,y} \right) \de {\FB}\left( y \right)} \right| \le \delta + \varepsilon \label{eq:lottery_delta_prepare1*}.
        \end{equation}
        \item[$(ii)$]  There exists a constant $L_1 >0$ such that for any $n \ge L_1 \varepsilon^{-2} \logep$, $i \in [n]$ and $\delta \in \Del$,
        \begin{equation}
            \left| \int\limits_0^\infty  {\zeta_A\left( {x^*,y} \right) \de {\FBn}\left( y \right)} - \int\limits_0^\infty  {\beta_A\left( {x^*,y} \right) \de {\FBn}\left( y \right)} \right| \le \delta + 2\varepsilon. \label{eq:lottery_delta_prepare1n}
        \end{equation}
    \end{itemize}
\end{lemma}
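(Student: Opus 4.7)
The plan is to bound both differences by splitting the domain of integration according to where the integrand $|\zeta_A(x^*,y) - \beta_A(x^*,y)|$ is large versus small. A key first observation is that, since $(\zeta_A, \zeta_B)$ and $(\beta_A, \beta_B)$ both satisfy condition $(C1)$ (i.e., $\zeta_A + \zeta_B = 1 = \beta_A + \beta_B$), we have the pointwise identity $|\zeta_A(x^*,y) - \beta_A(x^*,y)| = |\zeta_B(x^*,y) - \beta_B(x^*,y)|$. Consequently, the set $\Y(x^*,\varepsilon)$ defined in \eqref{eq:Yset} coincides with $\{y \in [0,2X_B] : |\zeta_A(x^*,y)-\beta_A(x^*,y)| \ge \varepsilon\}$, which is exactly the ``bad'' set we want to separate out. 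Moreover, since both $\zeta_A$ and $\beta_A$ take values in $[0,1]$, on the bad set the integrand is bounded by $1$, while on its complement in $[0,2X_B]$ the integrand is bounded by $\varepsilon$.

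For part $(i)$, I would apply the triangle inequality and split the integral over $[0, 2X_B]$ (recall $B^*_i \le 2X_B$ a.s.\ by Lemma~\ref{lem:Preliminary}-$(iv)$) as
\begin{equation*}
\left| \int_0^\infty \!\! \bigl(\zeta_A(x^*,y) - \beta_A(x^*,y)\bigr) \de \FB(y) \right|
\le \int_{\Y(x^*,\varepsilon)} \!\!\!\! \de \FB(y) + \int_{[0,2X_B] \setminus \Y(x^*,\varepsilon)} \!\!\!\!\!\!\!\!\!\!\!\!\!\!\!\!\!\!\!\!\!\!\! \varepsilon \, \de \FB(y) .
\end{equation*}
By the definition of $\Del$ in Definition~\ref{def:delta}, the first term is at most $\delta$; the second is trivially at most $\varepsilon$. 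This yields \eqref{eq:lottery_delta_prepare1*}.

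For part $(ii)$, the same decomposition with $\FBn$ in place of $\FB$ reduces the task to bounding $\int_{\Y(x^*,\varepsilon)} \de \FBn(y)$ uniformly in $x^* \in [0,2X_B]$ and $i \in [n]$. This is precisely what Lemma~\ref{lem:deltalemma} provides: taking $L_1 := L_0$ (the constant appearing in Lemma~\ref{lem:deltalemma}), for any $n \ge L_1 \varepsilon^{-2} \logep$ we obtain $\int_{\Y(x^*,\varepsilon)} \de \FBn(y) \le \delta + \varepsilon$. Combining this with the bound $\varepsilon$ on the complement integral gives the claimed $\delta + 2\varepsilon$.

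No part of this argument presents a serious obstacle; the only small technical point worth highlighting is the complementarity identity between the $\zeta$-defined and $\beta$-defined deviations, which is what lets us invoke the set $\Y(x^*,\varepsilon)$ (phrased in terms of $\zeta_B, \beta_B$ in~\eqref{eq:Yset}) to control an integrand written in terms of $\zeta_A, \beta_A$. The rest is a direct application of the definitions and of Lemma~\ref{lem:deltalemma}.
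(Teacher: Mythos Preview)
Your proposal is correct and follows essentially the same argument as the paper's proof: the paper likewise splits the integral over $\Y(x^*,\varepsilon)$ and its complement, uses the complementarity identity $|\zeta_A-\beta_A|=|\zeta_B-\beta_B|$ to match the definition of $\Y$, and in part~$(ii)$ sets $L_1:=L_0$ and invokes Lemma~\ref{lem:deltalemma} to bound $\int_{\Y(x^*,\varepsilon)}\de\FBn(y)\le \delta+\varepsilon$.
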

\begin{lemma}
\label{lem:proof_lottery_LBn_prepare2}
Given $\wmin, \wmax, X_A, X_B >0$ ($\wmin \le \wmax$, $X_A \le X_B$), there exists a constant $L_2>0$ such that for any $\varepsilon \in (0,1]$ and \mbox{$ n \ge L_2 \varepsilon^{-2} \logep$}, in any game $\LB(\zeta)$ with any \mbox{$\delta \in \Del$} and \mbox{$i \in [n]$}, we~have:
\begin{align}
    & \left|{\int\limits_0^\infty  {\zeta_A\left( {x ,y} \right) \de {\FBn}\left( y \right)}  - \int\limits_0^\infty  {\zeta_A\left( {x,y} \right) \de {\FB}\left( y \right)} }  \right|  \le  2\delta + 4\varepsilon , \forall x \ge 0, \label{eq:propo_appen_lottery_1} \\
    &\left|\int\limits_0^\infty  {\int\limits_0^\infty  {\zeta_A\left( {x,y} \right) \de {\FA}\left( x \right)}  \de {\FB}\left( y \right)} - \int\limits_0^\infty  {\int\limits_0^\infty  {\zeta_A\left( {x,y} \right) \de {\FA}\left( x \right)}  \de {\FBn}\left( y \right)} \right|  \le 2\delta + 3\varepsilon, \label{eq:propo_appen_lottery_2} \\
    & \left|\int\limits_0^\infty  {\int\limits_0^\infty  {\zeta_A\left( {x,y} \right) \de {\FBn}\left( y \right)}  \de {\FA}\left( x \right)} - \int\limits_0^\infty  {\int\limits_0^\infty  {\zeta_A\left( {x,y} \right) \de {\FBn}\left( y \right)}  \de {\FAn}\left( x \right)} \right|  \le  2\delta + 4\varepsilon. \label{eq:propo_appen_lottery_3}
\end{align}
\end{lemma}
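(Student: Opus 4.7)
The plan is to prove each of the three inequalities by inserting the Blotto function $\beta_A$ as an intermediary and applying the triangle inequality. The uniform control that makes this work is: outside $\X(y,\varepsilon)$ and $\Y(x,\varepsilon)$ one has $|\zeta_A-\beta_A|<\varepsilon$, while inside these sets the $\FA,\FB$-mass is at most $\delta$ by Definition~\ref{def:delta} and the $\FAn,\FBn$-mass is at most $\delta+\varepsilon$ by Lemma~\ref{lem:deltalemma}. I choose $L_2$ as the maximum of the constants required by Lemmas~\ref{lem:convergence}, \ref{lem:portmanteau}, \ref{lem:deltalemma}, and \ref{lem:proof_lottery_LBn_prepare1}, so that all of them apply with precision $\varepsilon$ whenever $n \ge L_2 \varepsilon^{-2}\logep$.

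For~\eqref{eq:propo_appen_lottery_1}, I insert the intermediaries $\int \beta_A(x,y)\,\de\FB(y)$ and $\int \beta_A(x,y)\,\de\FBn(y)$. The outer two pieces are precisely the settings of Lemma~\ref{lem:proof_lottery_LBn_prepare1}(i) and (ii), contributing $\delta+\varepsilon$ and $\delta+2\varepsilon$. For the middle piece, writing $\beta_A(x,\cdot)$ as a step function in $y$ gives $\int \beta_A(x,y)\,\de F(y) = (1-\alpha)F(x^-)+\alpha F(x)$ for any CDF $F$, so the middle term equals $|(1-\alpha)(\FBn(x^-)-\FB(x^-)) + \alpha(\FBn(x)-\FB(x))|$, a convex combination of quantities each bounded by $\sup_z|\FBn(z)-\FB(z)|\le \varepsilon$ via Lemma~\ref{lem:convergence}. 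Summing the three pieces gives $2\delta+4\varepsilon$.

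For~\eqref{eq:propo_appen_lottery_3}, applying Fubini reduces the LHS to $|\int_y [q(y)-q^n(y)]\,\de\FBn(y)|$ with $q(y):=\int \zeta_A(x,y)\,\de\FA(x)$ and $q^n(y):=\int \zeta_A(x,y)\,\de\FAn(x)$. Inserting $q^\beta(y):=\int \beta_A(x,y)\,\de\FA(x)$ and $q^{\beta,n}(y):=\int \beta_A(x,y)\,\de\FAn(x)$ gives three pointwise pieces: $|q(y)-q^\beta(y)|\le \delta+\varepsilon$ directly from Definition~\ref{def:delta}; $|q^n(y)-q^{\beta,n}(y)|\le \delta+2\varepsilon$ via Lemma~\ref{lem:deltalemma} applied to $\FAn$ (the extra $\varepsilon$ reflecting the approximation of $\FA$ by $\FAn$); and $|q^\beta(y)-q^{\beta,n}(y)|$ is a convex combination of $\FAn(y)-\FA(y)$ and $\FAn(y^-)-\FA(y^-)$, each bounded by $\varepsilon$ via Lemma~\ref{lem:convergence}. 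The pointwise sum $2\delta+4\varepsilon$ is preserved when integrating against the probability measure $\FBn$.

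The tightest inequality~\eqref{eq:propo_appen_lottery_2} requires more care because the outer two pieces cannot afford the extra $\varepsilon$ penalty. Writing the LHS as $|\int g\,\de\FB - \int g\,\de\FBn|$ with $g(y):=\int \zeta_A(x,y)\,\de\FA(x)$ and inserting $g^\beta(y):=\int \beta_A(x,y)\,\de\FA(x)$, both outer pieces are bounded by $\sup_y|g(y)-g^\beta(y)|\le \delta+\varepsilon$ directly from Definition~\ref{def:delta} (since the integration is against the clean $\FA$, not $\FAn$), contributing $2\delta+2\varepsilon$. The main obstacle is the middle term $|\int g^\beta\,\de\FB - \int g^\beta\,\de\FBn|$: observing that $g^\beta(y)=1-\FA(y)$ for $y>0$ and that the atom at $0$ cancels because Lemma~\ref{lem:continuity_Ani_and_Bni}(i) gives $\FBn(\{0\})=\FB(\{0\})$, this reduces to $|\int \FA(y)\,\de\FB(y)-\int \FA(y)\,\de\FBn(y)|$, which has exactly the structure of Lemma~\ref{lem:portmanteau} with the roles of $A$ and $B$ interchanged. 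Adapting the telescoping-sum argument from~\ref{sec:appen_proof_lem_portmanteau}---essential because $\FA$ is not Lipschitz with a constant independent of $n$---yields the required bound of $\varepsilon$, and the three pieces sum to $2\delta+3\varepsilon$.
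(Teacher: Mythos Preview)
Your proof is correct and follows essentially the same decomposition as the paper: insert $\beta_A$ as intermediary, bound the $\zeta_A\leftrightarrow\beta_A$ pieces via Definition~\ref{def:delta}/Lemma~\ref{lem:deltalemma} (equivalently Lemma~\ref{lem:proof_lottery_LBn_prepare1}), and bound the $\FBn\leftrightarrow\FB$ (or $\FAn\leftrightarrow\FA$) pieces via uniform convergence. The one minor divergence is in the middle term of~\eqref{eq:propo_appen_lottery_2}: the paper, instead of invoking a portmanteau-type argument, applies Fubini to swap the order of integration and then reuses the pointwise bound $\bigl|\Ex\beta_A(x,B^n_i)-\Ex\beta_A(x,B^*_i)\bigr|\le\varepsilon$ already established in the proof of~\eqref{eq:propo_appen_lottery_1}, integrating it against $\FA$. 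This is slightly more economical than adapting the telescoping argument of Lemma~\ref{lem:portmanteau}, but both routes give the same $\varepsilon$ contribution.
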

Lemma~\ref{lem:proof_lottery_LBn_prepare1} states the relation between the first term appearing in the left-hand-side of~\eqref{eq:lottery_proof_rewrite_A} and the corresponding terms when we replace the CSF $\zeta$ by the Blotto functions $\beta$ and replace $\FBn$ by $\FB$. These relations are useful to connect the statement we want to prove and the results obtained in Section~\ref{sec:ApproximateBlotto}. A proof of Lemma~\ref{lem:proof_lottery_LBn_prepare1} is given in~\ref{sec:lemma_proof_lottery_LBn_prepare1}. On the other hand, Lemma~\ref{lem:proof_lottery_LBn_prepare2} indicates several useful inequalities involving the players' payoffs in the game $\LB$ (when they play according to the $\IU$ strategy or playing such that the marginals are $\FA, \FB$). Its proof is given in~\ref{sec:lem:proof_lottery_LBn_prepare2} that is based on Lemma~\ref{lem:proof_lottery_LBn_prepare1} and the convergence of the distributions $\FAn,\FBn$ toward $\FA,\FB$ (i.e., Lemma~\ref{lem:convergence}). 

We have another remark: for any $n$ and $i \in [n]$,
\begin{equation}
    \prob(A^*_i = B^*_i = x) = 0, \forall x \ge 0.    \label{eq:A^*=B^*}
\end{equation}
This can be trivially proved as follows: first, $\prob(A^*_i = B^*_i = x) = \prob(A^*_i = x) \prob(B^*_i =x)$ since they are independent; now, if $x > 0$, $\FA$ and $\FB$ are continuous at $x$ and thus \mbox{$\prob(A^*_i \!= \!x) \!=\! \prob(B^*_i \!=\!x)\! =\! 0$}; on the other hand, if $x =0$, in the case where $i \in \Ostar$, since $A^*_i = \AS$, we have $\prob(A^*_i = x) = 0 $, in the case where $i \notin \Ostar$, since $B^*_i = \BS$, we have $\prob(B^*_i = x) = 0 $.

Finally, use Lemma~\ref{lem:proof_lottery_LBn_prepare1} and \ref{lem:proof_lottery_LBn_prepare2} and take $L^* = \max\{L_1, L_2\}$, for any $n\ge L^* \varepsilon^{-2 } \logep$, \mbox{$\delta \in \Del$} and any pure strategy $\boldsymbol{x}^A$ of Player A, we~have:
\begin{align}
    & \sum\limits_{i = 1}^n {\left( {{v^A_i}\int\limits_0^\infty  {\zeta_A\left( {x_i^A,y} \right) \de {\FBn}\left( y \right)} } \right)}  \nonumber\\
    \le & \sum\limits_{i = 1}^n {\left( {{v^A_i}\int\limits_0^\infty  {\zeta_A\left( {x_i^A,y} \right) \de {\FB}\left( y \right)} } \right)}  + \sum\limits_{i = 1}^n { v^A_i (2\delta + 4\varepsilon)} & \text{ (due to \eqref{eq:propo_appen_lottery_1}) }  \nonumber\\
    = & \sum\limits_{i = 1}^n {\left( {{v^A_i}\int\limits_0^\infty  {\zeta_A\left( {x_i^A,y} \right) \de {\FB}\left( y \right)} } \right)}  + 2\delta + 4\varepsilon  & (\textrm{note that } \sum\nolimits_{i=1}^n{v^A_i} = 1 )\nonumber\\
    \le & \sum\limits_{i = 1}^n {\left( {{v^A_i}\int\limits_0^\infty  {\beta_A\left( {x_i^A,y} \right) \de {\FB}\left( y \right)} } \right)}  +  3\delta + 5\varepsilon  & ( \text{due to } \eqref{eq:lottery_delta_prepare1*})\nonumber \\
    = & \sum\limits_{i = 1}^n {\left[ {v^A_i}  \left(\alpha \prob(B^*_i =x^A_i) + \prob(B^*_i < x^A_i) \right) \right]}  +  3\delta + 5\varepsilon  \nonumber \\
    \le & \sum\limits_{i = 1}^n {{v_i^A}{\FB}\left( {x_i^A} \right)}  +  3\delta + 5\varepsilon  & (\textrm{since } \alpha \le 1)  \nonumber\\
    \le &  \sum\limits_{i = 1}^n {\left( {{v_i^A}\int\limits_0^\infty  {{\FB}(x) \de {\FA}(x)} } \right)} +   3\delta + 5\varepsilon       & (\text{due to Lemma~\ref{lem:best_response}}) \nonumber \\
    = &  \sum\limits_{i = 1}^n {\left( {v_i^A} \int\limits_0^\infty  \prob(B^*_i < x) \de {\FA}(x)  \right)} +   3\delta + 5\varepsilon       & \text{ (due to } \eqref{eq:A^*=B^*}) \nonumber \\
    \le & \sum\limits_{i = 1}^n {\left( {{v^A_i}\int\limits_0^\infty  {\int\limits_0^\infty  {{\beta_A}\left( {x,y} \right) \de {\FB}\left( y \right) \de {\FA}\left( x \right)} } } \right)} + 3\delta + 5\varepsilon   \nonumber \\
    \le & \sum\limits_{i = 1}^n {\left( {{v^A_i}\int\limits_0^\infty  {\int\limits_0^\infty  {{\zeta_A}\left( {x,y} \right) \de {\FB}\left( y \right) \de {\FA}\left( x \right)} } } \right)} + 4\delta + 6\varepsilon   	& \text{ (due to } \eqref{eq:lottery_delta_prepare1*}) \nonumber \\
    \le &  \sum\limits_{i = 1}^n {\left( {v^A_i}\int\limits_0^\infty  {\int\limits_0^\infty  {\zeta_A\left( {x,y} \right) \de {\FA}\left( x \right)}  \de {\FBn}\left( y \right)} \right)} +   6\delta + 9\varepsilon   & \text{ (due to \eqref{eq:propo_appen_lottery_2}) } \nonumber \\
    \le & \sum\limits_{i = 1}^n {\left( {v^A_i}\int\limits_0^\infty  {\int\limits_0^\infty  {\zeta_A \left( {x,y} \right) \de {\FBn}\left( y \right)} \de {\FAn}\left( x \right)}  \right)}+   8\delta + 13\varepsilon   & \text{ (due to \eqref{eq:propo_appen_lottery_3})}. \nonumber
\end{align}
Hence, we conclude that for $n \ge L^* \varepsilon^{-2} \logep$ \eqref{eq:lottery_proof_rewrite_A} holds and thus, \eqref{eq:lottery_theo_A} also holds.

To prove that Result~$(ii)$ implies Result~$(i)$, we can proceed similarly to the proof that Theorem~\ref{TheoMainBlotto}-$(ii)$ implies Theorem~\ref{TheoMainBlotto}-$(i)$ (see~\ref{sec:Appen_Proof_TheoBlotto}). We conclude this proof. \qed
\end{proof}
%
%
%
%


    \subsection{Proof of Lemma~\ref{lem:proof_lottery_LBn_prepare1}}
    \label{sec:lemma_proof_lottery_LBn_prepare1}

First, we prove~\eqref{eq:lottery_delta_prepare1*}. Note that $\FB(y)=1, \forall y > 2X_B$ (see Lemma~\ref{lem:Preliminary}-$(iv))$, for any $n$, $i \in [n]$ and $\delta \in \Del$, we have
\begin{align}
    & \left| \int\limits_0^\infty  {\zeta_A\left( {x^*,y} \right) \de {\FB}\left( y \right)} - \int\limits_0^\infty  {\beta_A\left( {x^*,y} \right) \de {\FB}\left( y \right)} \right| \nonumber\\
    \le & \int\nolimits_{\Y(x^*, \varepsilon )}{\left|\zeta_A(x^*,y) - \beta_A(x^*,y)\right|  \de {\FB}\left( y \right) } + \int\nolimits_{[0,\infty) \backslash \Y(x^*,\varepsilon)}{\left|\zeta_A(x^*,y) - \beta_A(x^*,y)\right|  \de {\FB}\left( y \right) } \nonumber\\
   = & \int\nolimits_{\Y(x^*, \varepsilon )}{\left|1 \!-\! \zeta_B(x^*,y) \!-\! 1 \!+\! \beta_B(x^*,y)\right|  \de {\FB}\left( y \right) } \!+\! \int\nolimits_{[0,2X_B] \backslash \Y(x^*,\varepsilon)}{\left|1 \!-\! \zeta_B(x^*,y) \!-\! 1 \!+\! \beta_B(x^*,y)\right|  \de {\FB}\left( y \right) } \nonumber\\
    = & \int\nolimits_{\Y(x^*, \varepsilon )}{\left|\zeta_B(x^*,y) - \beta_B(x^*,y)\right|  \de {\FB}\left( y \right) } + \int\nolimits_{[0,2X_B] \backslash \Y(x^*,\varepsilon)}{\left|\zeta_B(x^*,y) - \beta_B(x^*,y)\right|  \de {\FB}\left( y \right) } \nonumber\\
    \le & \int\nolimits_{\Y(x^*, \varepsilon )}{  \de {\FB}\left( y \right) } + \int\nolimits_{[0,2X_B] \backslash \Y(x^*,\varepsilon)}{\varepsilon  \de {\FB}\left( y \right) }  \nonumber \\
    \le & \delta  + \varepsilon. \label{eq:prepare1_first}
\end{align}
Here, the second-to-last inequality comes from the fact that $0 \le \zeta_B(x,y), \beta_B(x,y) \le 1$ for any $x,y$ and the definition of $\Y(x^*, \varepsilon)$ while the last inequality is due to the definition of $\Del$.

Now, in order to prove~\eqref{eq:lottery_delta_prepare1n}, we proceed similarly as in \eqref{eq:prepare1_first} to show that
\begin{align}
    & \left| \int\limits_0^\infty  {\zeta_A\left( {x^*,y} \right) \de {\FBn}\left( y \right)} - \int\limits_0^\infty  {\beta_A\left( {x^*,y} \right) \de {\FBn}\left( y \right)} \right| \nonumber\\
    \le & \int\nolimits_{\Y(x^*, \varepsilon )}{  \de {\FBn}\left( y \right) } + \int\nolimits_{[0,2X_B] \backslash \Y(x^*,\varepsilon)}{\varepsilon  \de {\FBn}\left( y \right) }  \nonumber \\
    \le & \int\nolimits_{\Y(x^*, \varepsilon )}{  \de {\FBn}\left( y \right) } + \varepsilon. \label{eq:LemC1_2}
\end{align}
Finally, by Lemma~\ref{lem:deltalemma}, for any $n \ge L_0 \varepsilon^{-2} \logep$ and $\delta \!\in\! \Del$, we have \mbox{$\int\nolimits_{\Y(x^*, \varepsilon )}{  \de {\FBn}\left( y \right) }\! \le\! \varepsilon \!+\! \delta $}. Combine this with~\eqref{eq:LemC1_2}, we conclude that \eqref{eq:lottery_delta_prepare1n} holds for any \mbox{$n \ge L_0 \varepsilon^{-2} \logep$ and $\delta \in \Del$}. Take $L_1:= L_0$, we conclude the proof.

\qed
%

    \subsection{Proof of Lemma~\ref{lem:proof_lottery_LBn_prepare2}}
    \label{sec:lem:proof_lottery_LBn_prepare2}
In this proof, we use the notation $\Ex h(X,y):= \int \nolimits_{0}^{\infty}h(x,y) \de F_{X}(x)$ and \mbox{$\Ex h(x,Y):= \int \nolimits_{0}^{\infty}h(x,y) \de F_{Y}(y)$} where $X, Y$ are arbitrary non-negative random variables and $h$ is any function. 

\underline{Proof of \eqref{eq:propo_appen_lottery_1}:} 
For any $i \in [n]$ and $x \ge 0$, we have
\begin{align}
    &\left|  {\int\limits_0^\infty  {\zeta_A\left( {x,y} \right) \de {\FBn}\left( y \right)}  - \int\limits_0^\infty  {\zeta_A\left( {x,y} \right) \de {\FB}\left( y \right)} }\right|  \nonumber \\
    \le &\left| \Ex\zeta_A(x,B^n_i) \! -\! \Ex\beta_A(x,B^n_i)\right|\! +\! \left|\Ex\beta_A(x,B^n_i) \!-\! \Ex \beta_A(x,B^*_i)\right|\! +\! \left|\Ex \beta_A(x,B^*_i) \!-\! \Ex\zeta_A(x,B^*_i) \right|. \label{eq:proof_propo_LBn_1}
\end{align}
We notice that upper-bounds of the first and third terms in the right-hand-side of~\eqref{eq:proof_propo_LBn_1} are given by~\eqref{eq:lottery_delta_prepare1n} and~\eqref{eq:lottery_delta_prepare1*} from Lemma~\ref{lem:proof_lottery_LBn_prepare1}. We focus on finding an upper-bound of the second term of~\eqref{eq:proof_propo_LBn_1}; to do this, we rewrite this term as follows.
\begin{align}
    &  \Ex \beta_A(x,B^n_i) = \int_{y<x} \de \FBn(y) + \alpha \prob(B^n_i = x) = \FBn (x )  - (1-\alpha)\prob(B^n_i = x),\label{eq:Ex_betaBn}\\
    \textrm{and~} & \Ex \beta_A(x,B^*_i) = \int_{y<x} \de \FB(y) + \alpha \prob(B^*_i = x) = \FB(x)  - (1-\alpha)\prob(B^*_i = x).\label{eq:Ex_betaB*}
\end{align}
If $\alpha =1$, we trivially have $  \left| \Ex \beta_A(x,B^n_i) - \Ex \beta_A(x,B^*_i) \right| = \left| \FBn(x) -  \FB(x) \right| $. In the following, we assume that $\alpha < 1$ and consider three~cases:

\emph{Case 1:} If $x =0$. From Lemma~\ref{lem:continuity_Ani_and_Bni}-$(i)$, we have $\prob (B^n_i = 0) = \prob (B^*_i = 0)$ and thus 
\begin{align*}
       \left| \Ex \beta_A(0,B^n_i) - \Ex \beta_A(0,B^*_i) \right|   =  \left|\int_{y<0} \de \FBn(y) - \int_{y<0} \de \FB(y)+ \alpha \prob(B^n_i = 0) - \alpha \prob(B^*_i =0)\right|   =  0  .
\end{align*}

\emph{Case 2:} If $x>0$, $\prob(B^*_i = x) =0$ by definition. On the other hand, from Results~$(ii)$ and $(iii)$ of Lemma~\ref{lem:continuity_Ani_and_Bni}, we have \mbox{$\prob(B^n_i = x) \le D^{n-1}$} where we define $D:=\left(1 - \frac{\Lmin}{\Lmax} \frac{\wmin^2}{\wmax^2} \right)$. Following~\eqref{eq:probAn=Bn}, for any $n \ge C_0 \logep$ (here, $C_0$ is defined as in~\ref{sec:appen_proof_lem:SufCon}), we have $D^{n-1 } \le \frac{\varepsilon}{2(1-\alpha)}$. Therefore, for any \mbox{$n \ge C_0 \logep$}, we have
\begin{align*}
        & \left|   \Ex \beta_A(x,B^n_i) - \Ex \beta_A(x,B^*_i)  \right|  \\
    \le & |\FBn(x) -  \FB(x)| + (1-\alpha)\left|\prob(B^n_i = x)\right| \qquad (\textrm{due to }\eqref{eq:Ex_betaBn}-\eqref{eq:Ex_betaB*})\\ 
    \le & \sup \limits_{x \in [0,\infty)}{|\FBn(x) - \FB(x)| }   + (1-\alpha)\frac{\varepsilon}{2 (1-\alpha)} \\
     = & \sup \limits_{x \in [0,\infty)}{|\FBn(x) - \FB(x)| }   + \frac{\varepsilon}{2}.
\end{align*}
  
 %

In conclusion, \mbox{$ \left|   \Ex \beta_A(x,B^n_i) - \Ex \beta_A(x,B^*_i)  \right|   \le \sup \limits_{x \in [0,\infty)}{|\FBn(x) - \FB(x)| } + \varepsilon/2$} for any $x \ge 0$, \mbox{$\alpha \in [0,1]$} and \mbox{$n \ge C_0 \logep$}. Now, let us define $C^{\prime}_1 = C_1 \cdot 4 (\ln(2) + 1)$ (where $C_1$ is indicated in~Lemma~\ref{lem:convergence}); take $\varepsilon_1:= \varepsilon/2$, we have \mbox{$C^{\prime}_1 \varepsilon^{-2} \logep \ge C_1 \varepsilon_1^{-2} \logone $}. Thus, for any $n \ge C^{\prime}_1 \varepsilon^{-2} \logep$, we have $n \ge  C_1 \varepsilon_1^{-2} \logone $ and apply Lemma~\ref{lem:convergence}, we have $\sup \limits_{x \in [0,\infty)}{|\FBn(x) - \FB(x)| } \le \varepsilon_1 = \varepsilon/2$.

We deduce that for any $x \ge 0$, for any $n \ge \max\{C_0, C^{\prime}_1\} \varepsilon^{-2} \logep$ and $i \in [n]$, we have:
\begin{equation}
    \left|   \Ex \beta_A(x,B^n_i) - \Ex \beta_A(x,B^*_i)  \right|   \le   \varepsilon/2 + \varepsilon/2  = \varepsilon. \label{eq:proof_C5}
\end{equation}

Finally, apply Lemma~\ref{lem:proof_lottery_LBn_prepare1} to \eqref{eq:proof_propo_LBn_1} to bounds the first and third term of its right-hand-side, use~\eqref{eq:proof_C5} to bound its second-term and take $L_{\eqref{eq:propo_appen_lottery_1}} = \max\{L_1, C_0, C^{\prime}_1\}$, we deduce that for any \mbox{$n \ge L_{\eqref{eq:propo_appen_lottery_1}}\varepsilon^{-2} \logep $} and $\delta \in \Del$, 
\begin{equation*}
    \left|  {\int\limits_0^\infty  {\zeta_A\left( {x,y} \right) \de {\FBn}\left( y \right)}  - \int\limits_0^\infty  {\zeta_A\left( {x,y} \right) \de {\FB}\left( y \right)} }\right| \le (\delta+ 2\varepsilon) + \varepsilon + (\delta + \varepsilon) = 2\delta + 4 \varepsilon.
\end{equation*}
%
%
%
\underline{Proof of \eqref{eq:propo_appen_lottery_2}:} 
To prove this inequality, we note that similar to the proof of~\eqref{eq:lottery_delta_prepare1*} in Lemma~\ref{lem:proof_lottery_LBn_prepare1} (by replacing $\FB$ by $\FA$ and replacing $\zeta_A(x^*,y), \beta_A(x^*,y)$ by $\zeta_A(x,y^*), \beta_A(x,y^*)$), we can prove that for any $n$, $i \in [n]$, $\delta \in \Del$ and $y^* \in [0,2X_B]$, the following inequality~holds
        \begin{equation}
            \left| \int\limits_0^\infty  {\zeta_A\left( {x,y^*} \right) \de {\FA}\left( x \right)} - \int\limits_0^\infty  {\beta_A\left( {x,y^*} \right) \de {\FA}\left( x \right)} \right| \le \delta + \varepsilon.\label{eq:zeta_A*}
        \end{equation}
Using this, we have
\begin{align*}
    & \left|\int\limits_0^\infty  {\int\limits_0^\infty  {\zeta_A\left( {x,y} \right) \de {\FA}\left( x \right)}  \de {\FB}\left( y \right)} - \int\limits_0^\infty  {\int\limits_0^\infty  {\zeta_A\left( {x,y} \right) \de {\FA}\left( x \right)}  \de {\FBn}\left( y \right)} \right| \\
    \le & \int \limits_{0}^{\infty} \left| \int \limits_0^\infty \zeta_A(x,y) \de \FA(x) \!- \! \int \limits_0^\infty \beta_A(x,y) \de \FA(x) \right|  \de \FB(y) \!\\
        & \qquad \qquad +\! \left|  \int \limits_{0}^{\infty} \int \limits_0^\infty \beta_A(x,y) \de \FA(x) \de \FB(y)\! -\! \int \limits_{0}^{\infty} \int\limits_0^\infty \beta_A(x,y) \de \FA(x) \de \FBn(y) \right|\\
            &  \qquad \qquad \qquad \qquad + \int\limits \limits_{0}^{\infty} \left| \int \limits_0^\infty \beta_A(x,y) \de \FA(x) - \int \limits_0^\infty \zeta_A(x,y) \de \FA(x)\right|  \de \FBn(y) \\
    \le & \int \nolimits_{0}^{\infty} (\delta \!+\! \varepsilon)  \de \FB(y)  \!+\! \left|  \int \nolimits_{0}^{\infty} \Ex \beta_A(x,B^*_i) \de \FA(x) \! -\! \int \nolimits_{0}^{\infty} \Ex \beta_A(x,B^n_i) \de \FA(x) \right| \! +\! \int \nolimits_{0}^{\infty} (\delta \! +\! \varepsilon)  \de \FBn(y) \\
    \le & 2\delta \!+\! 2\varepsilon  \!+\!   \int \nolimits_{0}^{\infty} \left|\Ex \beta_A(x,B^n_i)- \Ex \beta_A(x,B^*_i) \right|\de \FA(x).
\end{align*}
Finally, take $L_\eqref{eq:propo_appen_lottery_2} =  \max\{C_0, C^{\prime}_1\}$ and apply~\eqref{eq:proof_C5}, we deduce that for any \mbox{$n \ge L_\eqref{eq:propo_appen_lottery_2} \varepsilon^{-2} \logep$}, \eqref{eq:propo_appen_lottery_2}~holds.

\underline{Proof of \eqref{eq:propo_appen_lottery_3}}
To prove this inequality, we note that similar to the proof of~\eqref{eq:lottery_delta_prepare1n} in Lemma~\ref{lem:proof_lottery_LBn_prepare1} (by replacing $\FBn$ by $\FAn$ and replacing $\zeta_A(x^*,y), \beta_A(x^*,y)$ by $\zeta_A(x,y^*), \beta_A(x,y^*)$), we can prove that for \mbox{$n \ge L_1 \varepsilon^{-2} \logep$}, $i \in [n]$ and $\delta \in \Del$,
 \begin{equation}
 \label{eq:last}
            \left| \int\limits_0^\infty  {\zeta_A\left( {x,y^*} \right)\de {\FAn}\left( x \right)} - \int\limits_0^\infty  {\beta_A\left( {x,y^*} \right) \de {\FAn}\left( x \right)} \right| \le \delta + 2\varepsilon. 
\end{equation}
Now, similar to the proof leading to~\eqref{eq:proof_C5}, we can prove that
\begin{equation} 
\label{eq:mid}
    \left|   \Ex \beta_A(A^*_i,y) - \Ex \beta_A(A^n_i,y)  \right| \le   \varepsilon,
\end{equation}
for any \mbox{$n \ge \max\{C_0, C^{\prime}_1\} \varepsilon^{-2} \logep$}, $i \in [n]$ and $y \ge 0$.

Finally, take $L_\eqref{eq:propo_appen_lottery_3} = \max\{L_1, C_0, C^{\prime}_1\}$, for any $n \ge L_\eqref{eq:propo_appen_lottery_3} \varepsilon^{-2} \logep$, $i \in [n]$ and \mbox{$\delta \in \Del$}, we have
\begin{align*}
    & \left|\int\limits_0^\infty  {\int\limits_0^\infty  {\zeta_A\left( {x,y} \right) \de {\FBn}\left( y \right)}  \de {\FA}\left( x \right)} - \int\limits_0^\infty  {\int\limits_0^\infty  {\zeta_A\left( {x,y} \right) \de {\FBn}\left( y \right)}  \de {\FAn}\left( x \right)} \right| \\
     \le & \int \limits_{0}^{\infty} \left| \int \limits_0^\infty \zeta_A(x,y) \de \FA(x) \!- \! \int \limits_0^\infty \beta_A(x,y) \de \FA(x) \right|  \de \FBn(y) \!\\
            & \qquad \qquad +\! \left|  \int \limits_{0}^{\infty} \int \limits_0^\infty \beta_A(x,y) \de \FA(x) \de \FBn(y) \! -\! \int \limits_{0}^{\infty} \int\limits_0^\infty \beta_A(x,y) \de \FAn(x) \de \FBn(y)\right|  \\
                &  \qquad \qquad \qquad \qquad + \int\limits \limits_{0}^{\infty} \left| \int \limits_0^\infty \beta_A(x,y) \de \FAn(x) - \int \limits_0^\infty \zeta_A(x,y) \de \FAn(x)\right|  \de \FBn(y) \\
    \le  & \int \limits_{0}^{\infty} (\delta  \!+\! \varepsilon)  \de \FA(x)  \!+\! \int \limits_0^\infty  \left| \Ex \beta_A(A^*_i,y) \!-\! \Ex \beta_A(A^n_i,y)   \right| \de  {\FBn}(y) \!+ \! \int \limits_{0}^{\infty} (\delta  \!+\! 2\varepsilon)  \de \FAn(x)  &  \textrm{(due to } \eqref{eq:zeta_A*} \textrm{, } \eqref{eq:last} )\\
     \le & 2\delta + 4 \varepsilon & \textrm{(due to } \eqref{eq:mid} ).
\end{align*} 

In conclusion, take $L_2:= \max\{L_{\eqref{eq:propo_appen_lottery_1}},L_{\eqref{eq:propo_appen_lottery_2}},L_{\eqref{eq:propo_appen_lottery_3}} \}$, we conclude the proof of this lemma.
\qed


        \subsection{Remark on the Generalized Lottery Blotto games with continuous CSFs}
        \label{sec:appen_remark_conti_CSF}
In this section, we present and prove the remark stating that under the additional assumption that the CSFs $\zeta_A$ and $\zeta_B$ are Lipschitz continuous on $[0,2X_B] \times [0,2X_B]$, the statements in Theorem~\ref{theo:Lottery_generic_approx} also hold with~\eqref{eq:lottery_theo_A_improve} and~\eqref{eq:lottery_theo_B_improve} (see below) in places of~\eqref{eq:lottery_theo_A} and~\eqref{eq:lottery_theo_B}. For the sake of completeness, we formally state this result as follows.
        
\begin{remark}
    \label{remark_conti_CSF}
   For any CSF $\zeta_A$ and $\zeta_B$ that are Lipschitz continuous on $[0,2X_B] \times [0,2X_B]$, the following results hold (here, we denote $\zeta:=(\zeta_A, \zeta_B)$):
     \begin{itemize}
        \item[(i)] In any game $\LB(\zeta)$, there exists a positive number \mbox{$\varepsilon \le \tilde{\mO} (n^{-1/2})$} such that for any $\gam  \in \Sn$ and $\delta \in \Del$, the following inequalities hold for any pure strategy $\boldsymbol{x}^A$ and $\boldsymbol{x}^B$ of players A and B:

        \begin{align}
            & \Pi^{\zeta}_A(\boldsymbol{x}^A,{\IU_B}) \le \Pi^{\zeta}_A({\IU_A},{\IU_B}) + \left(2\delta + 5{\varepsilon} \right) W_A,\label{eq:lottery_theo_A_improve}\\
            & \Pi^{\zeta}_B({\IU_A},\boldsymbol{x}^B) \le \Pi^{\zeta}_B({\IU_A},{\IU_B})  + \left(2\delta + 5{\varepsilon} \right) W_B. \label{eq:lottery_theo_B_improve}
        \end{align}

        \item[(ii)] For any $\varepsilon \in (0,1]$, there exists a constant $L_{\zeta} >0 $ (that depends on $\zeta$ but does not depend on $\varepsilon$) such that in any game $\LB(\zeta)$ where $ n \ge L_{\zeta} \varepsilon^{-2} \logep$, \eqref{eq:lottery_theo_A_improve} and~\eqref{eq:lottery_theo_B_improve} hold for any $\gam  \in \Sn$, $\delta \in \Del$ and any pure strategy $\boldsymbol{x}^A, \boldsymbol{x}^B$ of players A and~B.
    \end{itemize}         
\end{remark}

%
%
%
\begin{proof}

We define the Lipschitz constant of $\zeta_A, \zeta_B$ respectively by $\mathcal{L}_{\zeta_A}, \mathcal{L}_{\zeta_B}$ and let $\Lip:= \max\{\mathcal{L}_{\zeta_A}, \mathcal{L}_{\zeta_B}\}$. We focus on proving Result~$(ii)$ of this Remark; Result~$(i)$ can be deduced from~Result~$(ii)$ and thus is~omitted. 

\underline{\textit{Step 1}}: We prove that for any $x^*, y^* \in [0,2X_B]$, there exists a constant ${C_{\zeta}}$ (that does not depend on $\varepsilon$ nor $x^*,y^*$) such that for any $n \ge C_{\zeta} \varepsilon^{-2} \logep $, the following inequalities hold:

\begin{align}
    &\left|{\int\limits_0^\infty  {\zeta_A\left( {x ,y^*} \right) \de {\FAn}\left( x \right)}  - \int\limits_0^\infty  {\zeta_A\left( {x,y^*} \right) \de {\FA}\left( x \right)} }  \right|  \le \varepsilon, \label{eq:CSF_intergal_1} \\
    & \left|{\int\limits_0^\infty  {\zeta_A\left( {x^* ,y} \right) \de {\FBn}\left( y \right)}  - \int\limits_0^\infty  {\zeta_A\left( {x^*,y} \right) \de {\FB}\left( y \right)} }  \right| \le \varepsilon \label{eq:CSF_intergal_2}.
\end{align}

The proof of this statement is quite similar to the proof of Lemma~\ref{lem:portmanteau} (see~\ref{sec:appen_proof_lem_portmanteau}). We present here the proof of~\eqref{eq:CSF_intergal_1}; the proof of~\eqref{eq:CSF_intergal_2} can be done similarly. 

Fix $y^* \in [0,2X_B]$; to simplify the notation, we define $f(x):= \zeta_A(x,y^*)$ and $\tep_1:=\frac{\varepsilon}{4 + 4X_B \Lip}$. From~Lemma~\ref{lem:Preliminary}, $\FAn(x) = \FA(x)=1, \forall x > 2X_B$; therefore, the left-hand-side of~\eqref{eq:CSF_intergal_1} can be rewritten as follows.
\begin{equation}
    \left|{\int\limits_0^\infty  {\zeta_A\left( {x ,y^*} \right) \de {\FAn}\left( x \right)}  - \int\limits_0^\infty  {\zeta_A\left( {x,y^*} \right) \de {\FA}\left( x \right)} }  \right|  =  \left| {\int_{0}^{2X_B}  {f(x)}  \de {\FAn}\left( x \right) - \int_{0}^{2X_B}  {f(x)}  \de {\FA}\left( x \right)} \right|. \label{eq:CSF_integral_first}
\end{equation}
Let us define $K := \big \lceil  \frac{2 X_B \Lip}{\tep_1} \big \rceil$ and $K + 1$ points ${x}_j$ such that ${x_0} := 0$ and ${x}_j := {x}_{j-1} +  \frac{2X_B}{K}, \forall j \in [K]$. In other words, we have the partitions \mbox{$[0,2X_B] = \bigcup\nolimits_{j = 1}^{K} {P_j}$} where we denote by $P_1$ the interval $[{x}_0,{x}_1]$ and by $P_j$ the interval $({x}_{j-1}, {x}_{j}]$ for $j = 2, \ldots, K$. For any $ x, x^{\prime} \in P_j, \forall j \in [K]$, since $f$ is Lipschitz continuous, we~have 
\begin{equation}
    |f(x)- f(x^{\prime}) | \le  \Lip|x - x^{\prime}| \le   \Lip \frac{2X_B}{K} \le \tep_1 . \label{eq:CSF_intergral_lipsc}    
\end{equation}
Now, we define the function $g(x)\!:=\! \sum \limits_{j=1}^{K} {f(x_j) \boldsymbol{1}_{P_j} (x)}$. Here, $\boldsymbol{1}_{P_j}$ is the indicator function of the set ${P_j}$. From this definition and Inequality~\eqref{eq:CSF_intergral_lipsc}, we have \mbox{$| f(x) - g(x) | \le \tep_1$}, $\forall x \in [0,2X_B]$. Therefore, 
\begin{align}
& \left| {\int_{0}^{2X_B}  {f(x)}  \de {\FAn}\left( x \right) - \int_{0}^{2X_B}  {g(x)}  \de {\FAn}\left( x \right)} \right| \le \int_{0}^{2X_B} \tep_1 \de \FAn(x) \le \tep_1, \label{eq:CSF_intergral_triangle1}\\
& \left| {\int_{0}^{2X_B}  {f(x)}  \de {\FA}\left( x \right) - \int_{0}^{2X_B}  {g(x)}  \de {\FA}\left( x \right)} \right| \le \int_{0}^{2X_B} \tep_1 \de \FA(x) \le \tep_1 .\label{eq:CSF_integral_triangle2}
\end{align}
Now, we note that for any $j \in [K]$, $f(x_j) = \sum \limits_{m=0}^{j} \left[{f(x_m) -  f(x_{m-1})} \right]$; here, by convention, we denote by $x_{-1}$ an arbitrary negative number and set $f(x_{-1}) = 0$. Using this, we have:
\begin{align}
     & \left| {\int_{0}^{2X_B}  {g(x)}  \de {\FAn}\left( x \right) - \int_{0}^{2X_B}  {g(x)}  \de {\FA}\left( x \right)} \right| \nonumber \\
    = & \left| \sum \limits_{j=1}^K f(x_j) \left[{  \int_{0}^{2X_B} \boldsymbol{1} _{P_j}\left( x\right)} \de \FAn(x) -  {\int_{0}^{2X_B}   \boldsymbol{1} _{P_j}\left(x\right) \de \FA(x) }  \right] \right| \nonumber \\ %
    = & \left| \sum \limits_{j=1}^K  f(x_j) \left[  \prob\left(A^n_i \in P_j \right)\!- \! \prob\left(A^*_i \in P_j \right) \right] \right| \nonumber \\
    = & \left| \sum \limits_{j=1}^K \left(  \sum \limits_{m=0}^{j}
    \left[   {f(x_m) -  f(x_{m-1})} \right]   \left[\prob\left(A^n_i \in P_j \right)\!- \! \prob\left(A^*_i \in P_j \right) \right]  \right)      \right| \nonumber \\
    \le & \left| \left[  f(x_0) -  f(x_{-1}) \right] \sum \limits_{j=1}^K  \left[ \prob\left(A^n_i \in P_j \right)\!- \! \prob \left(A^*_i \in P_j \right)  \right]  \right|   \nonumber \\
        &  \qquad + \left| \sum \limits_{m=1}^K \left( \left[  f(x_m) -  f(x_{m-1}) \right] \sum \limits_{j=m}^K  \left[ \prob\left(A^n_i \in P_j \right)\!- \! \prob\left(A^*_i \in P_j \right)  \right] \right) \right|. \label{eq:CSF_intergral_bridge}
\end{align}
Note that \mbox{$\prob\left(A^n_i \in P_j \right)\!- \! \prob \left(A^*_i \in P_j \right) \!=\!  \FAn(x_j) \!-\!\FAn(x_{j-1})\! -\! \FA(x_j)\! +\! \FA(x_{j-1}) $}.\footnote{For any $j \! \ge \! 2$, this is trivially since $P_j \!:= \! (x_{j-1}, x_j]$. For $P_1 = [0,x_1]$,~we have~that \mbox{$\prob\left(A^n_i \in P_1 \right)\!- \! \prob \left(A^*_i \in P_1 \right) =   \prob(A^n_i \in (0, x_1] ) \!- \!\prob(A^*_i \in (0, x_1] )  \!+\! \prob(A^n_i = 0)\! - \!\prob(A^*_i = 0) $}; moreover, due to Lemma~\ref{lem:continuity_Ani_and_Bni}-$(i)$, we also note that \mbox{$\prob(A^n_i \!= \!0) \!=\! \prob(A^*_i\!=\! 0)$}.} Now, we can rewrite the first term in \eqref{eq:CSF_intergral_bridge} as follows.
\begin{align}
    & \left| \left[  f(x_0) -  f(x_{-1}) \right] \sum \limits_{j=1}^K  \left[ \prob\left(A^n_i \in P_j \right)\!- \! \prob \left(A^*_i \in P_j \right)  \right]  \right| \nonumber \\
    = & \left| f(0) \cdot \left[ \sum \limits_{j=1}^K  \left(\FAn(x_j) - \FAn(x_{j-1}) - \FA(x_j) + \FA(x_{j-1}) \right)   \right]  \right| \nonumber \\
    = & \left| f(0) \cdot \left[\FAn(x_K) - \FAn(x_0) - \FA(x_K) + \FA(x_{0})   \right]  \right| \nonumber \\
    \le &2 \sup \limits_{x \in [0, \infty)} {\left| \FAn(x) - \FA(x) \right| } \label{eq:CSF_intergral_final1}.
\end{align}
Here, the last inequality comes from the fact that $f(x) \le 1, \forall x \in [0,2X_B]$ (since it is a CSF).

Now, we recall that for any $m \in [K]$, \mbox{$ f(x_m) -  f(x_{m-1}) \le  \frac{2X_B \Lip}{K}$}. Therefore, the second term in \eqref{eq:CSF_intergral_bridge}~is
\begin{align}
     &\left| \sum \limits_{m=1}^K \left( \left[  f(x_m) -  f(x_{m-1}) \right] \sum \limits_{j=m}^K  \left[ \prob\left(A^n_i \in P_j \right)\!- \! \prob\left(A^*_i \in P_j \right)  \right] \right) \right| \nonumber \\
     = & \left| \sum \limits_{m=1}^K \left(  \left[  f(x_m) -  f(x_{m-1}) \right] \sum \limits_{j=m}^K  \left[\FAn(x_j) - \FAn(x_{j-1}) - \FA(x_j) + \FA(x_{j-1})   \right]  \right) \right| \nonumber \\
     = & \left| \sum \limits_{m=1}^K \left(  \left[  f(x_m) - f(x_{m-1}) \right] \left[\FAn(x_K) - \FAn(x_{m-1}) - \FA(x_K) + \FA(x_{m-1})   \right]  \right) \right| \nonumber \\
     \le & \sum \limits_{m=1}^K  \frac{2X_B \Lip}{K} \cdot  2 \sup \limits_{x \in [0, \infty)} {\left| \FAn(x) - \FA(x) \right| } \nonumber \\
     = & 4 X_B \Lip \sup \limits_{x \in [0, \infty)} {\left| \FAn(x) - \FA(x) \right| }. \label{eq:CSF_intergral_final2}
\end{align}
Inject~\eqref{eq:CSF_intergral_final1} and~\eqref{eq:CSF_intergral_final2} into~\eqref{eq:CSF_intergral_bridge}, we obtain that
\begin{equation}
    \left| {\int_{0}^{2X_B}  {g(x)}  \de {\FAn}\left( x \right) - \int_{0}^{2X_B}  {g(x)}  \de {\FA}\left( x \right)} \right| \le \left( 2 + 4 X_B \Lip \right) \sup \limits_{x \in [0, \infty)} {\left| \FAn(x) - \FA(x) \right| }. \label{eq:CSF_intergral_triangle3}
\end{equation}
Apply the triangle inequality and combine~\eqref{eq:CSF_intergral_triangle1},~\eqref{eq:CSF_integral_triangle2}, \eqref{eq:CSF_intergral_triangle3}, we have that:
\begin{equation}
   \left| {\int_{0}^{2X_B}  {f(x)}  \de {\FAn}\left( x \right) - \int_{0}^{2X_B}  {f(x)}  \de {\FA}\left( x \right)} \right| \le 2 \tep_1 + (2+ 4 X_B \Lip)\sup \limits_{x \in [0, \infty)} {\left| \FAn(x) - \FA(x) \right|}. \label{eq:CSF_intergral_the_end}
\end{equation}
Recall the constant $C_1$ indicated in Lemma~\ref{lem:convergence}, we define \mbox{$C_{\zeta} \!:= \!  C_1 \left(4 \!+\! 4 X_B \Lip \right)^2 \left[\ln(4 \!+\! 4 X_B \Lip) \!+\! 1 \right]$} (note that $C_{\zeta}$ does not depend on $n$ nor $\varepsilon$) and deduce that\footnote{Apply Lemma~\ref{lem:log_pre}, \mbox{$C_{\zeta} \varepsilon^{-2} \logep \!= \!C_1 \left( \frac{4 \! +\! 4X_B \Lip}{\varepsilon} \right)^2 \left[\ln(4 \!+\! 4X_B \Lip) \!+\! 1 \right] \logep $} \mbox{$\ge\! C_1  \left( \frac{4 \! + \! 4X_B \Lip}{\varepsilon} \right)^2 \ln\left(\frac{4 \! + \! 4X_B \Lip}{\varepsilon} \right)$}. Moreover, since $\frac{\varepsilon}{4+4X_B \Lip} =  \min\left\{\frac{\varepsilon}{4+4X_B \Lip}, \frac{1}{\e} \right\} = \min\left\{\tep_1, \frac{1}{\e} \right\} $ (due to the fact that $\tep_1 = \frac{\varepsilon}{4+4X_B \Lip} < \frac{1}{\e}$).} 
\begin{equation*}
    C_{\zeta} \varepsilon^{-2} \logep \ge C_1 \tep_1^{-2} \ln\left(\frac{1}{\min\{\tep_1, 1/\e\}} \right).    
\end{equation*}
Take $\varepsilon_1:= \tep_1$, for any $n \ge C_{\zeta} \varepsilon^{-2} \logep$, we have $n \ge  C_1 \varepsilon_1^{-2} \logone$ and by applying Lemma~\ref{lem:convergence}, we obtain that $\sup \limits_{x \in [0, \infty)} {\left| \FAn(x) - \FA(x) \right| } \le \varepsilon_1 = \tep_1 $ and thus by~\eqref{eq:CSF_integral_first} and~\eqref{eq:CSF_intergral_the_end}, we~have:
\begin{equation*}
    \left|{\int\limits_0^\infty  {\zeta_A\left( {x ,y^*} \right) \de {\FAn}\left( x \right)}  - \int\limits_0^\infty  {\zeta_A\left( {x,y^*} \right) \de {\FA}\left( x \right)} }  \right| \le 2 \tep_1 + \left( 2 + 4 X_B \Lip \right) \tep_1 = (4 + 4 X_B \Lip) \tep_1 = \varepsilon. 
\end{equation*}
This is exactly~\eqref{eq:CSF_intergal_1}.

\underline{\textit{Step 2}}: Based on~\eqref{eq:CSF_intergal_1} and~\eqref{eq:CSF_intergal_2}, we can trivially deduce that the following inequalities hold for any $n \ge C_{\zeta} \varepsilon^{-2} \logep$ and $i \in [n]$:
    \begin{align}
    & \left|{\int\limits_0^\infty  {\zeta_A\left( {x ,y} \right) \de {\FBn}\left( y \right)}  - \int\limits_0^\infty  {\zeta_A\left( {x,y} \right) \de {\FB}\left( y \right)} }  \right|  \le   \varepsilon , \forall x \ge 0, \label{eq:1} \\
    &\left|\int\limits_0^\infty  {\int\limits_0^\infty  {\zeta_A\left( {x,y} \right) \de {\FB}\left( y \right)  \de {\FA}\left( x \right)}  } - \int\limits_0^\infty  {\int\limits_0^\infty  {\zeta_A\left( {x,y} \right)  \de {\FBn}\left( y \right) \de {\FA}\left( x \right)} } \right|  \le  \varepsilon,  \label{eq:2} \\
    & \left|\int\limits_0^\infty  {\int\limits_0^\infty  {\zeta_A\left( {x,y} \right)  \de {\FA}\left( x \right) \de {\FBn}\left( y \right)} } - \int\limits_0^\infty  {\int\limits_0^\infty  {\zeta_A\left( {x,y} \right)  \de {\FAn}\left( x \right) \de {\FBn}\left( y \right)} } \right|  \le   \varepsilon. \label{eq:3}
\end{align}
We notice that the left-hand-sides of these inequalities are exactly the terms considered in Lemma~\ref{lem:proof_lottery_LBn_prepare2}; moreover, the upper-bounds given in\eqref{eq:1}, \eqref{eq:2} and \eqref{eq:3}  are smaller than that in \eqref{eq:propo_appen_lottery_1}, \eqref{eq:propo_appen_lottery_2} and \eqref{eq:propo_appen_lottery_3} of Lemma~\ref{lem:proof_lottery_LBn_prepare2}.

\underline{\textit{Step 3}}: To complete the proof of Remark~\ref{remark_conti_CSF}, we follow the proof of Theorem~\ref{theo:Lottery_generic_approx} where we use \eqref{eq:1}, \eqref{eq:2} and \eqref{eq:3} instead of \eqref{eq:propo_appen_lottery_1}, \eqref{eq:propo_appen_lottery_2} and \eqref{eq:propo_appen_lottery_3}. By doing this, we obtain \eqref{eq:lottery_theo_A_improve} and \eqref{eq:lottery_theo_B_improve}.     \qed
\end{proof}

\section{Proof of Lemma~\ref{lem:delta_mu_nu} and Theorem~\ref{theoratioform}}

        \subsection{Proof of Lemma~\ref{lem:delta_mu_nu}}
        \label{sec:appen_proof_ratio-form}

\textit{$(i)$ We first consider the games $\LB(\mu^R)$.}

    \textit{\underline{Step 1:} We want to prove that there exists $\delta_0 = \mO(\varepsilon^{-1/R} - 1)$ such that \mbox{$  \Xmu(y^*, \varepsilon) \subset [y^* - \delta_0, y^* + \delta_0]$}} for any $y^* \in [0,2X_B]$. Note that this is trivial if $\Xmu(y^*, \varepsilon) = \emptyset$. In the following, we consider the case where $\Xmu(y^*, \varepsilon) \neq \emptyset$. We denote by $f: [0, 2X_B]  \times [0,2X_B] \rightarrow [0,1]$ the~function:
    \begin{equation*}
        f(x, y^*) := | \mu^R_A(x,y^*) - \beta_A(x, y^*)| =  \left\{ \begin{array}{l}
    \frac{\alpha x^R}{\alpha x^R + (1-\alpha ) (y^*)^R }, \text{ if } x < y^* \\
    0 ,\text{ if } x = y^* \\
     1 - \frac{\alpha x^R}{\alpha x^R + (1-\alpha ) (y^*)^R }, \text{ if } x  > y^*
    \end{array} \right..
    \end{equation*}
    Trivially, $y^* \notin \Xmu(y^*, \varepsilon)$. Take an arbitrary $x \in \Xmu(y^*, \varepsilon)$. If $ x < y^*$, we have
    \begin{equation*}
        f(x,y^*) \ge \varepsilon \Rightarrow \frac{\alpha x^R}{\alpha x^R + (1-\alpha ) (y^*)^R } \ge \varepsilon \Rightarrow \frac{x}{y^*} \ge \left(\frac{\varepsilon}{1-\varepsilon} \frac{1- \alpha}{\alpha}\right)^{1/R}.
    \end{equation*}
    Therefore, $0< y^*- x \le y^* \left[ 1 -\left(\frac{\varepsilon}{1-\varepsilon} \frac{1- \alpha}{\alpha}\right)^{1/R}  \right]$. Here, we note that the right-hand side is positive (due to the condition $\varepsilon < \alpha$); moreover, it is upper-bounded by $\mO(1 - \varepsilon^{1/R}  ) \le \mO(\varepsilon^{-1/R} - 1) $.

    On the other hand, if $x > y^*$, we have:
     \begin{equation*}
        f(x,y^*) \ge \varepsilon  \Rightarrow   1 - \frac{\alpha  x^R}{\alpha x^R + (1-\alpha ) (y^*)^R } \ge \varepsilon  \Rightarrow \frac{x}{y^*} \le \left(\frac{1-\varepsilon}{\varepsilon} \frac{1- \alpha}{\alpha}\right)^{1/R}.
    \end{equation*}
      Therefore we have $0< x - y^*  \le y^* \left[\left(\frac{1-\varepsilon}{\varepsilon} \frac{1- \alpha}{\alpha}\right)^{1/R} - 1 \right]$. Here the right-hand side is positive (due to the condition $\alpha+\varepsilon < 1$) and is upper-bounded by $\mO(\varepsilon^{-1/R} - 1)$.

   In conclusion, for any $\varepsilon < \min\{\alpha, 1\!- \!\alpha\}$, there exists $\delta_0 \!= \!\mO(\varepsilon^{-1/R}\! -\! 1)$ such that \mbox{$\Xmu(y^*, \varepsilon) \subset [y^*\! -\! \delta_0, y^*\! +\! \delta_0]$}. Note that a similar proof can be done to prove that there exists \mbox{$\hat{\delta}_0 = \mO(\varepsilon^{-1/R} - 1) $ such that for any $x^* \in [0, 2X_B]$}, \mbox{$\Ymu(x^*, \varepsilon) \subset [x^* - \hat{\delta}_0, x^* + \hat{\delta}_0] $}.

\paragraph{\underline{Step 2:}} For any $y^* \in [0,2X_B]$ and $\delta_0 \ge 0$, let us define the set \mbox{$I_0 (y^*) \!:=\! [y^* \! -\! \delta_0, y^* \!+\! \delta_0] \bigcap [0, 2 X_B] $}; we want to show that $\int_{x \in I_0 (y^*)}  \de  \FA(x)  \le \frac{2 n \Lmax \delta_0 \wmax}{\wmin}, \forall i \in [n]$.
   
    \textit{{Case 1:}} For $i \in \Ostar$, then $A^*_i \AS$, we have that
    \begin{align*}
        \int_{x \in I_0 (y^*)}  \de  \FA(x)         \le & {F_{ \AS}}\left( {y^* + \delta_0 } \right) - {F_{ \AS}}\left( {y^* - \delta_0  } \right) \\
        =	& \left\{ \begin{array}{l}
        \frac{(y^* + \delta_0 ) \LdaB }{v^B_i}  \le \frac{2\delta_0 \LdaB }{v^B_i}  ,\text{~if~}0 \le y^* \le \delta_0 \\
        \frac{(y^* + \delta_0 ) \LdaB }{v^B_i} - \frac{(y^* - \delta_0 ) \LdaB }{v^B_i} =  \frac{ 2\delta_0 \LdaB }{v^B_i} , \text{~if~} \delta_0   \le y^* < \frac{v^B_i}{\LdaB} - \delta_0  \\
        1 - \frac{(y^* - \delta_0 ) \LdaB }{v^B_i} = \frac{v^B_i - y^* \LdaB + \delta_0 \LdaB }{v^B_i}\le \frac{ 2\delta_0 \LdaB }{v^B_i} ,\text{~if~}\frac{v^B_i}{\LdaB} - \delta_0  \le y^* \le \frac{v^B_i}{\LdaB} + \delta_0  \\
        1 - 1 = 0 , \text{~otherwise}
        \end{array} \right.\\
        \le &\frac{2 n \Lmax \delta_0 \wmax}{\wmin}.
    \end{align*}

\textit{{Case 2:}} For $i \notin \Ostar$, then $A^{*}_i= \AW$. We have
    \begin{align*}
        \int_{x \in I_0 (y^*)}  \de  \FA(x)         \le & {F_{ \AW}}\left( {y^* + \delta_0 } \right) - {F_{ \AW}}\left( {y^* - \delta_0  } \right) \\
         =	&  \left\{ \begin{array}{l}
         \frac{(y^* + \delta_0) \LdaB }{v^B_i} \le \frac{2 n \Lmax \delta_0 \wmax}{\wmin} ,\text{~if~}0 \le y^* \le \delta_0 
         \\
        \frac{(y^* + \delta_0) \LdaB }{v^B_i} - \frac{(y^* - \delta_0) \LdaB }{v^B_i} =  \frac{ 2\delta_0\LdaB }{v^B_i} , \text{~if~} \delta_0  < y^* < \frac{v^A_i}{\LdaA} -\delta_0\\
        1 - \frac{\frac{v^B_i}{\LdaB} - \frac{v^A_i}{\LdaA}}{\frac{v^B_i}{\LdaB}} -\frac{(y^* - \delta_0) \LdaB }{v^B_i} = \frac{v^A_i \frac{\LdaB}{\LdaA} - y^* \LdaB + \delta_0 \LdaB }{v^B_i}\le \frac{ 2\delta_0 \LdaB }{v^B_i} ,\text{~if~}\frac{v^A_i}{\LdaA} - \delta_0  \le y^* \le \frac{v^A_i}{\LdaA} + \delta_0\\
        1 - 1 = 0 , \text{~otherwise}
        \end{array} \right.\\
        \le &\frac{2 n \Lmax \delta_0 \wmax}{\wmin}.
    \end{align*}
Note that we also can similarly prove that for any $ x^* \in [0, 2X_B] $ and $\delta_0 \ge 0$, for any $i \in [n]$, we also have $\int_{y \in I_0 (x^*)}  \de  \FB(y) \le \frac{2 n \Lmax \delta_0 \wmax}{\wmin}$.

\paragraph{\underline{Step 3:} Conclusion.} We note that all random variable $ A^*_i, B^*_i, i \in [n]$ are bounded in $[0, 2X_B]$; therefore, for any $x^*, y^* \in [0, 2X_B]$ and $\delta_0 \ge 0$, we have:
\begin{align*}
    \int_{x \in [y^* - \delta, y^* + \delta_0]}  \de  \FA(x)   = \int_{x \in I_0 (y^*)}  \de  \FA(x)  & \textrm{ and }   & \int_{y \in [x^* - \delta, x^* + \delta_0]}  \de  \FB(y)   = \int_{y \in I_0 (x^*)}  \de  \FB(x).
\end{align*}

Let us define $\delta_\mu: = min\{ 1, \frac{2 n \Lmax \delta_0 \wmax}{\wmin} \}= \mO\left(n (\varepsilon^{-1/R}-1) \right)$ and we conclude that:
\begin{equation*}
    \max \left\{ \max_{y^* \in [0,2X_B]}{\int \nolimits_{\Xmu(y^*,\varepsilon)} { \de \FA(x)}}, \max_{x^* \in [0,2X_B]}{\int \nolimits_{\Ymu(x^*,\varepsilon)} { \de \FB(y)}} \right\}  \le \delta_\mu.
\end{equation*}
This implies that $\delta_\mu \in \Delmu$.

\paragraph{$(ii)$ We now turn our focus on the games $\LB(\nu^R)$.} We first prove the existence of $\delta_1>0$ such that $\Xnu(y^*, \varepsilon) \subset [y^* - \delta_1, y^* + \delta_1]$ for any $y^* \in [0,2X_B]$. Similar to step 1 in the above analysis for the game $\LB(\mu^R)$, we denote by $g: [0, 2X_B]  \times [0,2X_B] \rightarrow [0,1]$ the~function:
 \begin{equation*}
        g(x, y^*) := | \nu^R_A(x,y^*) - \beta_A(x, y^*)| =  \left\{ \begin{array}{l}
    \frac{\alpha e^{xR}}{\alpha e^{xR} + (1-\alpha ) e^{y^* R}} \text{ , if } x < y^*, \\
    0 \text{ , if } x = y^*, \\
     1 - \frac{\alpha e^{xR}}{\alpha e^{xR} + (1-\alpha ) e^{y^* R}}  \quad \text{ , if } x  > y^*.
    \end{array} \right.
    \end{equation*}
    Trivially, $y^* \notin \Xnu(y^*, \varepsilon)$. Take an arbitrary $x \in \Xmu(y^*, \varepsilon)$. If $ x < y^*$, we have
    \begin{equation*}
        g(x,y^*) \ge \varepsilon \Rightarrow \frac{\alpha e^{xR}}{\alpha e^{xR} + (1-\alpha ) e^{y^* R} } \ge \varepsilon.
    \end{equation*}
    Therefore, $0< y^*- x \le \frac{1}{R} \ln \left(\frac{1-\varepsilon}{\varepsilon} \frac{\alpha}{1-\alpha}\right)$. Here, we note that the right-hand side is positive (due to the condition $\varepsilon < \alpha$).

    On the other hand, if $x > y^*$, we have:
     \begin{equation*}
        g(x,y^*) \ge \varepsilon  \Rightarrow   1 - \frac{\alpha e^{xR}}{\alpha e^{xR} + (1-\alpha ) e^{y^* R} } \ge \varepsilon.
    \end{equation*}
      Therefore, $0< x - y^* \le  \frac{1}{R} \ln \left( \frac{1-\varepsilon}{\varepsilon} \frac{1-\alpha}{ \alpha}\right)$. Here, the right-hand side is positive (due to the condition $\alpha+\varepsilon < 1$).
    %
    %

    
    
    In conclusion, let us denote $\delta_1 = \mathcal{O} (R^{-1} \ln(\varepsilon^{-1}))$, we have proved that $\Xnu(y^*,\varepsilon) \subset [y^*- \delta_1, y^* + \delta_1]$ for any $ y^* \in [0,2X_B]$. Now, we define $I_1 (y^*) := [y^* - \delta_1, y^* + \delta_1] \bigcap [0, 2 X_B]$. Similar to step 2 of the above analysis regarding the game $\LB(\mu^R)$, we can prove that $\int \nolimits_{I_1(y^*)} \de \FA(x) \le 2 n \Lmax \delta_1 \wmax /\wmin$ for any $y^* \in [0,2X_B]$.~Therefore, 
    \begin{equation*}
    \max \left\{ \max_{y^* \in [0,2X_B]}{\int \nolimits_{\Xnu(y^*,\varepsilon)} { \de \FA(x)}}, \max_{x^* \in [0,2X_B]}{\int \nolimits_{\Ynu(x^*,\varepsilon)} { \de \FB(y)}} \right\}  \le \delta_\nu,
\end{equation*}
where $\delta_\nu := \min\{1,\frac{2 n \Lmax \delta_1 \wmax}{\wmin}\} = \mO\left(n R^{-1} \ln(\varepsilon^{-1}) \right)$ and $\delta_{\nu} \in \Delnu$.\qed

    \subsection{Proof of Theorem~\ref{theoratioform}}
    \label{sec:appen_proof_theoratioLB}
    \theoratioform*

\begin{proof}

Take $\varepsilon = \barep / 21$ and $\tilde{L} = L^* 21^2 (\ln(21)+1)$ (where $L^*$ is indicated in Theorem~\ref{theo:Lottery_generic_approx}). We note that \mbox{$\tilde{L} \barep^{-2} \logbar  \ge L^* \varepsilon \logep $};\footnote{Note that $\varepsilon = \barep / 21$ and apply Lemma~\ref{lem:log_pre} to have that $\tilde{L} \barep^{-2} \logbar \ge L^* \left( \frac{21}{\barep} \right)^2 \ln \left( \frac{21}{\barep}\right)$; moreover, we recall that $\frac{\barep}{21} < \frac{1}{\e}$; therefore, $\ln\left(\frac{21}{\barep} \right)  = \ln \left( \frac{1}{\min\{\barep/21, 1/\e \}} \right) = \logep$.}
therefore, for any $n \ge \tilde{L} \barep^{-2} \logbar $, we have \mbox{$n \ge L^* \varepsilon \logep$} and thus, apply Theorem~\ref{theo:Lottery_generic_approx}-$(ii)$,  for any $R>0$, the $\IU$ strategy is an \mbox{$(8\delta_\mu+ 13\varepsilon) W$}-equilibrium of the game $\LB(\mu^R)$ (recall that $W:= \max \{W_A, W_B \}$). Similarly, the $\IU$ strategy is an $(8\delta_\nu + 13\varepsilon) W$-equilibrium of the game $\LB(\nu^R)$). 

We first consider the game $\LB(\mu^R)$. Now, by applying Lemma~\ref{lem:delta_mu_nu}, for any $\gam  \in \Sn$ and any {$R\ge \mO\left(\ln\left(  \frac{1}{\varepsilon} \right) \ln \left( \frac{\varepsilon}{n} +1 \right) \right) = \mO \left( \ln\left(  \frac{1}{\varepsilon} \right) \frac{n}{\varepsilon} \right)$}, we have $\delta_\mu \le \varepsilon$. Therefore, we deduce that for any \mbox{$n \ge \tilde{L} \barep^{-2} \logbar $}, \mbox{$R \ge \mO \left(  \frac{n}{\barep} \ln\left(  \frac{1}{\barep} \right)   \right) $}, the $\IU$ strategy is an \mbox{$21 \varepsilon W$}-equilibrium (i.e., \mbox{$\barep W$}-equilibrium) of the game~$LB(\mu^R)$.

Similarly, apply Lemma~\ref{lem:delta_mu_nu}, for any $\gam  \in \Sn$ and $R\ge \mO \left( \frac{n}{\barep}\ln\left(   \frac{1}{\barep} \right)     \right)$, we have $\delta_\nu \le \varepsilon$. Therefore, for any $n \ge \tilde{L} \barep^{-2} \logbar$, $R \ge \mO \left(\frac{n}{\barep} \ln\left(  \frac{1}{\barep} \right)     \right) $, the $\IU$ strategy is an \mbox{$21 \varepsilon W$}-equilibrium (i.e., \mbox{$\barep W$}-equilibrium) of the game~$LB(\nu^R)$. \qed

\end{proof}

\end{document}